\tikzset{%
  gate/.style={draw, minimum size=14, fill=gray!15, inner sep=2},
  control/.style={circle, fill=black, minimum size=3, inner sep=0},
  target/.style={circle, draw, minimum size=8, inner sep=0},
  cross/.style={cross out, draw, minimum size=2.5, inner sep=0},
  wire/classical/.style={double, double distance=1pt},
  wire/solid/.style={on background layer},
  wire/inbox/.style={dashed, line cap=round},
  cell/.style={rectangle, draw=black, minimum width=1cm, minimum height=1cm,
      text height=1.5ex, text depth=.25ex, font=\scriptsize, align=center},
  celllabel/.style={font=\scriptsize, align=center},
  labelsty/.style={font=\scriptsize, align=center},
  pics/tensorbox/.style n args={3}{
    code={
      \node[draw, minimum width=1cm, minimum height=2cm, fill=gray!15,
            rounded corners=2pt, align=center, font=\scriptsize, pic actions] (B) {#1};
      \foreach \i/\t in {0/0.11, 1/0.37, 2/0.63, 3/0.89}{
        \coordinate (-in\i) at ($(B.north west)!\t!(B.south west)$);
        \coordinate (-inpin\i) at ($(-in\i)+(-.2,0)$);
      }
      \draw[wire/classical] (-inpin0) -- (-in0);
      \node[font=\scriptsize, anchor=north, yshift=20pt, xshift=4pt] at (-inpin0) {#2};
      \foreach \i/\t in {0/0.11, 1/0.37, 2/0.63, 3/0.89}{
        \coordinate (-out\i) at ($(B.north east)!\t!(B.south east)$);
        \coordinate (-outpin\i) at ($(-out\i)+(.2,0)$);
      }
      \draw[wire/classical] (-out0) -- (-outpin0);
      \node[font=\scriptsize, anchor=north, yshift=20pt, xshift=2pt] at (-outpin0) {#3};
      \coordinate (-SW) at (B.south west);
      \coordinate (-NE) at (B.north east);
    }
  },
  pics/tensorboxin/.style n args={2}{
    code={
      \node[draw, minimum width=1cm, minimum height=1.5cm, fill=gray!15,
            rounded corners=2pt, align=center, font=\scriptsize, pic actions] (B) {#1};
      \foreach \i/\t in {0/0.11, 1/0.37, 2/0.63, 3/0.89}{
        \coordinate (-in\i) at ($(B.north west)!\t!(B.south west)$);
        \coordinate (-inpin\i) at ($(-in\i)+(-.2,0)$);
      }
      \draw[wire/classical] (-inpin0) -- (-in0);
      \node[font=\scriptsize, xshift=-8pt] at (-inpin0) {#2};
      \foreach \i/\t in {0/0.11, 1/0.37, 2/0.63, 3/0.89}{
        \coordinate (-out\i) at ($(B.north east)!\t!(B.south east)$);
        \coordinate (-outpin\i) at ($(-out\i)+(.2,0)$);
      }
    }
  },
  pics/tensorboxout/.style n args={2}{
    code={
      \node[draw, minimum width=1cm, minimum height=1.5cm, fill=gray!15,
            rounded corners=2pt, align=center, font=\scriptsize, pic actions] (B) {#1};
      \foreach \i/\t in {0/0.11, 1/0.37, 2/0.63, 3/0.89}{
        \coordinate (-in\i) at ($(B.north west)!\t!(B.south west)$);
        \coordinate (-inpin\i) at ($(-in\i)+(-.2,0)$);
      }
      \foreach \i/\t in {0/0.11, 1/0.37, 2/0.63, 3/0.89}{
        \coordinate (-out\i) at ($(B.north east)!\t!(B.south east)$);
        \coordinate (-outpin\i) at ($(-out\i)+(.2,0)$);
      }
      \draw[wire/classical] (-out0) -- (-outpin0);
      \node[font=\scriptsize, xshift=8pt] at (-outpin0) {#2};
    }
  },
  pics/tensorboxwide/.style n args={1}{
    code={
      \node[draw, minimum width=1.4cm, minimum height=1.8cm, fill=gray!15,
            rounded corners=2pt, align=center, font=\scriptsize, pic actions] (B) {#1};
      \foreach \i/\t in {0/0.11, 1/0.37, 2/0.63, 3/0.89}{
        \coordinate (-in\i) at ($(B.north west)!\t!(B.south west)$);
        \coordinate (-inpin\i) at ($(-in\i)+(-.2,0)$);
      }
      \foreach \i/\t in {0/0.11, 1/0.37, 2/0.63, 3/0.89}{
        \coordinate (-out\i) at ($(B.north east)!\t!(B.south east)$);
        \coordinate (-outpin\i) at ($(-out\i)+(.2,0)$);
      }
    }
  },
  pics/tensorboxmid/.style n args={1}{
    code={
      \node[draw, minimum width=1cm, minimum height=1.5cm, fill=gray!15,
            rounded corners=2pt, align=center, font=\scriptsize, pic actions] (B) {#1};
      \foreach \i/\t in {0/0.11, 1/0.37, 2/0.63, 3/0.89}{
        \coordinate (-in\i) at ($(B.north west)!\t!(B.south west)$);
        \coordinate (-inpin\i) at ($(-in\i)+(-.2,0)$);
      }
      \foreach \i/\t in {0/0.11, 1/0.37, 2/0.63, 3/0.89}{
        \coordinate (-out\i) at ($(B.north east)!\t!(B.south east)$);
        \coordinate (-outpin\i) at ($(-out\i)+(.2,0)$);
      }
    }
  }
}
\newcommand{\wire}[3][]{%
  \begin{pgfonlayer}{background}
    \draw[wire/solid,#1] #2;
  \end{pgfonlayer}
  \foreach \b in {#3}{
    \begin{scope}
      \clip (\b-SW) rectangle (\b-NE);
      \draw[wire/inbox,#1] #2;
    \end{scope}
  }
}
\newcommand{\clipwire}[5][]{%
  \begin{pgfonlayer}{background}
    \begin{scope}
      \clip (#4) rectangle (#5);
      \draw[wire/solid,#1] #2;
    \end{scope}
  \end{pgfonlayer}
  \foreach \b in {#3}{
    \begin{scope}
      \clip (\b-SW) rectangle (\b-NE);
      \draw[wire/inbox,#1] #2;
    \end{scope}
  }
}
\newcommand{\microspace}{\mspace{.5mu}} %
\newcommand{\bigket}[1]{\bigl\lvert\microspace#1%
  \microspace\bigr\rangle} %
\newcommand{\Bigket}[1]{\Bigl\lvert\microspace#1%
  \microspace\Bigr\rangle} %
\newcommand{\bigbra}[1]{\bigl\langle\microspace#1%
  \microspace\bigr\rvert} %
\newcommand{\Bigbra}[1]{\Bigl\langle\microspace#1%
  \microspace\Bigr\rvert} %
\providecommand{\U}[1]{\protect\rule{.1in}{.1in}}
\newtheorem{thm}{Theorem}\crefname{thm}{Theorem}{Theorems}
\newtheorem{lem}[thm]{Lemma}\crefname{lem}{Lemma}{Lemmas}
\newtheorem{prp}[thm]{Proposition}\crefname{prp}{Proposition}{Propositions}
\newtheorem{cor}[thm]{Corollary}\crefname{cor}{Corollary}{Corollaries}
\newtheorem{dfn}[thm]{Definition}\crefname{dfn}{Definition}{Definitions}
\crefname{section}{Section}{Sections}
\crefname{appendix}{Appendix}{Appendices}
\numberwithin{equation}{section}
\let\oldref\ref
\renewcommand{\ref}[1]{(\oldref{#1})}
\DeclareMathOperator{\tr}{tr}
\DeclareMathOperator{\argmax}{argmax}
\newcommand{\R}{\mathbb{R}}
\newcommand{\ket}[1]{\vert #1\rangle }
\newcommand{\bra}[1]{\langle #1 \vert}
\newcommand{\ketbra}[1]{\ket{#1} \bra{#1}}
\newcommand{\braket}[2]{\left\langle #1 \middle| #2 \right\rangle}
\newcommand{\abs}[1]{{\left\vert{#1}\right\vert}}
\begin{document}
\title{Quantum Nonlocality under Latency Constraints}
\author[1,2,3,4]{Dawei Ding\thanks{daweiding@fudan.edu.cn}}
\author[5,6]{Zhengfeng Ji\thanks{jizhengfeng@tsinghua.edu.cn}}
\author[7,8]{Pierre Pocreau}
\author[9,10]{Mingze Xu}
\author[1,4,11]{Xinyu Xu}
\affil[1]{\small Yau Mathematical Sciences Center, Tsinghua University, Beijing 100084, China}
\affil[2]{\small Beijing Institute of Mathematical Sciences and Applications (BIMSA), Beijing 101408, China}
\affil[3]{\small Center for Mathematics and Interdisciplinary Sciences, Fudan University, Shanghai 200433, China}
\affil[4]{\small Shanghai Institute for Mathematics and Interdisciplinary Sciences (SIMIS), Shanghai 200433, China}
\affil[5]{\small Department of Computer Science and Technology, Tsinghua University, Beijing 100084, China}
\affil[6]{\small Zhongguancun Laboratory, Beijing 100094, China}
\affil[7]{\small Inria, Université Grenoble Alpes, 38000 Grenoble, France}
\affil[8]{\small CNRS, Grenoble INP, LIG, Université Grenoble Alpes, 38000 Grenoble, France}
\affil[9]{\small Zhili College, Tsinghua University, Beijing 100084, China}
\affil[10]{\small Department of Electrical and Computer Engineering, University of Illinois Urbana-Champaign, \newline
Urbana, IL 61801, USA}
\affil[11]{\small Research Institute of Intelligent Complex Systems, Fudan University, Shanghai 200433, China}

\date{}
\maketitle

\begin{abstract}
    Bell inequalities are bounds on the correlations between different parties obeying a local hidden variable theory. Here, ``local'' refers to spacetime locality: the parties cannot communicate their inputs because they must produce their outputs faster than the speed-of-light delay between them. In other words, the parties must satisfy a certain \emph{latency constraint}. In this work, we explicitly incorporate spacetime locality into the formulation of Bell inequalities by imposing such a latency constraint. When the latency constraint is sufficiently tight such that no parties can communicate, this becomes a standard Bell scenario. When the latency constraint is relaxed such that a subset of the parties can communicate, we no longer have a Bell scenario, but we can again find a divide between classical and quantum behaviors. Hence, we observe that the classical-quantum gap should actually be a function of time.
    To study these more general scenarios, we introduce the mathematical framework of latency-constrained games, which models time-evolving input and output processes for spatially separated parties subject to finite communication speeds. 
    This framework allows us to systematically study the weirdness of quantum mechanics in the ``low-latency regime'' where the speed-of-light delay is non-negligible.
    Latency-constrained games can describe real-time decision-making in real-world settings that are latency-sensitive, such as high-frequency trading and distributed systems, and can reveal the utility of quantum correlations in these settings.
\end{abstract}

\newpage 

\tableofcontents

\section{Introduction}
\label{sec:intro}

One of the most fundamental concepts in quantum mechanics is that of a Bell inequality~\cite{bell1964einstein}.
A Bell inequality is a bound on the correlations 
between classical parties that are unable to communicate and are therefore described by a local hidden variable theory.
This inability to communicate is usually justified by assuming the parties are spacelike separated~\cite{Bell1976LocalBeables}. More explicitly, we impose a 
\emph{latency constraint}: after receiving their inputs, the parties must 
produce their outputs within a time limit shorter than the speed-of-light delay between any pair of parties, making communication physically impossible.
That is, \emph{the finite speed of light decreed by the theory of relativity translates latency constraints into communication constraints}.
This is referred to as closing the locality loophole in the experimental
literature and has been addressed in early
experiments~\cite{aspect1982experimental,weihs1998violation}.
Thus, we have the following physical interpretation of Bell inequalities:
\begin{center}
   \emph{Bell inequalities are bounds on the correlations between classical parties \\ under very
  tight latency constraints}.
\end{center}
This reveals that Bell inequalities cannot be separated from the notion of \emph{spacetime locality}: one event can only influence another event if the second lies within the first event's future light cone.

In this paper, we make explicit the underlying communication constraints imposed by spacetime locality in the formulation of Bell inequalities.
Using this formulation, we derive a generalization of Bell inequalities that bounds classical correlations in more general physical scenarios. 
We then analyze quantum violations of these inequalities under the same locality conditions,
thereby formalizing the ``nonlocality'' in quantum nonlocality.
We primarily use the concept of a latency constraint: after receiving their inputs, the parties must produce their outputs within some time limit.
When the latency constraint is so tight that no parties can communicate, Bell inequalities describe the bounds on the correlations between classical parties.
But what happens when we relax this latency constraint?
In the case of two parties, if the latency constraint is relaxed enough to allow
communication between them, the parties can exhibit any correlation by simply exchanging
their inputs and choosing outputs accordingly.
This entire process is classical, and so the latency constraint does not impose 
any bounds on classical correlations.

However, the situation becomes nontrivial when we go beyond two parties.
For example, consider three parties $A,B,C$ arranged collinearly such that $A$
and $B$ are close together, while $C$ is farther away, as shown
in~\Cref{fig:ABC}. For simplicity, we assume that all three parties receive their inputs simultaneously at $t=0$ and must produce their outputs at a time $t < \tau$ (that is, strictly before $\tau$).
\begin{figure}[htbp!]
  \centering
  \def\R{8}          
  \def\H{5}          
  \def\angA{45}      
  \def\angB{135}     
  \begin{tikzpicture}[scale=1,
    labelsty/.style={align=left, font=\scriptsize}]
    \useasboundingbox (-2.2,-0.6) rectangle (11.5,5);
    \begin{scope}
      \clip (-1.3,-1) rectangle (7,4.5);
      \coordinate (A) at (0,0);
      \coordinate (B) at (1,0);
      \coordinate (C) at (3.5,0);
      \node[below] at ([xshift=-1.5pt]A) {$A$};
      \node[below] at ([xshift=-1pt]B) {$B$};
      \node[below] at ([xshift=-1pt]C) {$C$};

      \draw[name path=Ar] (A) -- ++(\angA:\R);
      \draw[name path=Al] (A) -- ++(\angB:\R);
      \draw[name path=Br] (B) -- ++(\angA:\R);
      \draw[name path=Bl] (B) -- ++(\angB:\R);
      \draw[name path=Cr] (C) -- ++(\angA:\R);
      \draw[name path=Cl] (C) -- ++(\angB:\R);

      \draw[name path=Au, dashed] (A) -- ++(0,\H);
      \draw[name path=Bu, dashed] (B) -- ++(0,\H);
      \draw[name path=Cu, dashed] (C) -- ++(0,\H);
      \path[name intersections={of=Au and Bl, by=A1}];
      \path[name intersections={of=Cu and Br, by=C1}];
      \path[name intersections={of=Cu and Ar, by=C2}];
      \draw[dashed] (A1) -- ++(11,0);
      \draw[name path=Av, dashed] (A1) -- ++(-11,0);
      \draw[dashed] (C1) -- ++(11,0);
      \draw[dashed] (C1) -- ++(-11,0);
      \draw[dashed] (C2) -- ++(11,0);
      \draw[dashed] (C2) -- ++(-11,0);
    \end{scope}

    \coordinate (tau1) at ([xshift=-1.3cm]A1);
    \coordinate (tau2) at ([xshift=-1.3cm]A1 |- C1);
    \coordinate (tau3) at ([xshift=-1.3cm]A1 |- C2);

    \node[left] at (tau1) {$\tau_1$};
    \node[left] at (tau2) {$\tau_2$};
    \node[left] at (tau3) {$\tau_3$};
    \coordinate (L1) at ($(0,0)!.5!(A1)$);
    \coordinate (L2) at ($(A1)!.5!(C1)$);
    \coordinate (L3) at ($(C2)!.5!(C1)$);
    \coordinate (L4) at ($(C2) + (0,.5)$);
    \node[label={[labelsty]right:Bell scenario}] at (7.2,0 |- L1) {};
    \node[label={[labelsty]right:Partially communicating scenario 1}] at (7.2,0 |- L2) {};
    \node[label={[labelsty]right:Partially communicating scenario 2}] at (7.2,0 |- L3) {};
    \node[label={[labelsty]right:Fully communicating scenario}] at (7.2,0 |- L4) {};

    \draw[->,] (-1.3,0) -- (7.3,0) node[below] {$x$};
    \draw[->,] (-1.3,0) -- (-1.3,4.4) node[left] {$t$};
  \end{tikzpicture}
  \caption{An illustration of the three parties $A,B,C$ that are collinear on the $x$-axis and their respective light
    cones.
    We assume the three parties receive their inputs at time $t=0$. They must produce their outputs at a time $t < \tau$.
    Different latency constraints $\tau$ lead to inequivalent non-communication
    constraints.}\label{fig:ABC}
\end{figure}

\noindent
Under the tightest latency constraint $t < \tau_1$, no pair of parties can communicate
and this becomes a regular Bell scenario with three parties. 
Now, consider slightly relaxing the latency constraint to $t < \tau_2$ such that $A$ and $B$ can
communicate with each other, but neither can communicate with $C$. This corresponds to the ``Partially communicating scenario 1'' in~\Cref{fig:ABC}. 
\emph{This is now an in-between scenario that is neither a Bell scenario where no parties can
communicate nor the fully communicating scenario with no bounds on correlations}.
Indeed, due to the latency constraint, there are still bounds on the
correlations exhibited by classical parties.
For example, the CHSH inequalities between $B$ and $C$ and between $A$ and $C$
still need to be satisfied.
However, bounds that hold in the no-communication Bell scenario, such as the
CHSH inequality between $A$ and $B$, can now be violated by classical parties.
By relaxing the latency constraint further to $t< \tau_3$, communication becomes possible between
$A$ and $B$ and between $B$ and $C$,
but not between $A$ and $C$.
This is again an in-between scenario, corresponding to ``Partially communicating scenario 2'' in~\Cref{fig:ABC}.
In this case, the CHSH inequality between $B$ and $C$ can now be violated, but the
CHSH inequality between $A$ and $C$ still holds for classical parties. Relaxing the latency constraint still further to $t < \tau$ for $\tau > \tau_3$ leads to the trivial, fully communicating scenario.

This simple example makes explicit how spacetime locality is inherent in the formulation of Bell inequalities and motivates a fundamental question:
\begin{center}
  1. What are the bounds on the correlations between classical parties \\
  under different latency constraints?
\end{center}
Furthermore, what is the maximum quantum violation of these bounds?
As of this writing, the theory of quantum mechanics is the most general theory for
describing physically realizable correlations between different parties.
Hence, we are actually asking 
\begin{center}
  2. What are the physically realizable correlations
    between multiple parties \\ under different latency constraints?
\end{center}
That is, giving the parties all possible physical resources, what are the
correlations they can exhibit under different latency constraints?
This is a question of fundamental physical limits.
Now, we emphasize that looser latency constraints allow for quantum strategies that are \emph{more powerful than those available in Bell scenarios}.
Just as classical parties can send each other classical information, quantum parties can send each other quantum
information and possibly realize stronger correlations.
We do not bound the amount of information exchanged during communication,\footnote{This is in contrast to what is assumed in quantum communication complexity~\cite{buhrman2009non}. } but
only impose latency constraints that dictate which pairs of parties can communicate.

This leads to a new mathematical framework that goes beyond that of the
current theory of Bell nonlocality.
In particular, we extend the existing mathematical framework
of \textbf{nonlocal games}. 
As our extension centers around the concept of latency constraints, we will use
the name \textbf{latency-constrained (LC) games}.
Our definition will mathematically formalize the concept of a latency constraint
and the communication latency\footnote{Note that the term ``latency'' is overloaded. Communication latency is the time required to transmit information from one party to another. A latency constraint, on the other hand, is the time limit within which parties must produce their outputs after receiving their inputs. } between different parties.
We will define classical and quantum strategies as the most general operations
that the parties can perform using classical and quantum resources,
respectively, within the latency constraint.

The framework of LC games has real-world applications.
As established in previous work~\cite{brandenburger2016team,szegedy2020systems,ding2024coordinating,arun2025faster}, there are real-world scenarios that can benefit from correlations between parties acting on timescales shorter than the communication latency. This amounts to finding ``real-world nonlocal games.''
One notable example is high-frequency trading (HFT), where trading servers issue or cancel orders on timescales of microseconds or even nanoseconds~\cite{haldane2010}. For stock exchanges separated by dozens of kilometers or more, these timescales are already shorter than the speed-of-light delay~\cite{ding2024coordinating}. Now, LC games are more faithful models of real-world scenarios, such as HFT, than nonlocal games.
In the real world, even at short time scales, different parties \emph{can} communicate, but doing so takes time. Furthermore, the framework of LC games can directly answer questions relevant to applications.
For example, in HFT, the physical limits on correlations under a certain latency constraint can be interpreted as the maximum value of a figure of merit for an HFT strategy, such as low risk or overall payoffs~\cite{brandenburger2016team,ding2024coordinating}, that can be achieved within a certain time limit. That is, we can answer the question
\begin{center}
    3. What is the lowest risk or highest payoff we can achieve within a certain time limit?
\end{center}
This is exactly the question being asked in HFT, where time is literally money. 
Taking a different angle, we can ask what is the minimum time needed to achieve a certain figure of merit using classical or quantum resources. If the minimum time is shorter with quantum resources than with classical resources, \emph{this constitutes a provable quantum advantage in the time required to achieve this figure of merit}. 
This applies to other real-world scenarios that are latency-sensitive, including (classical) distributed systems~\cite{hasanpour2017quantum,ding2024coordinating,da2025entanglement,arun2025faster}. Furthermore, as mentioned in~\cite{brandenburger2015quantum,szegedy2020systems,ding2024coordinating}, non-communication between parties can arise in real-world scenarios for reasons other than latency, such as privacy concerns or a lack of centralized control, characteristics prevalent in distributed control systems~\cite{hasanpour2017quantum,viola2024quantum,tucker2024quantum}. Our LC games framework naturally applies to these scenarios where non-communication is enforced only for a subset of the parties. 

Our paper is structured as follows. In~\Cref{sec:lc} we first introduce the simplest type of LC game that is not just a nonlocal game, namely one in which a subset of the parties can engage in a single round of communication. The communication constraints can then be described by a directed graph. We provide specific examples of these games and also consider special classes of quantum strategies. In particular, we show that quantum strategies for LC games constitute genuinely different mathematical objects compared to quantum strategies for nonlocal games.
In~\Cref{sec:tau_lc} we introduce multi-step LC games, the most general class of latency-constrained games, in which parties can communicate over multiple rounds, receive multiple inputs, and produce multiple outputs over time. We prove that if the parties only receive inputs in the initial time step and only produce outputs in the final time step, one round of communication is sufficient to realize all possible quantum correlations (up to closure), \emph{thereby revealing an interesting fact about the nature of quantum mechanics}.
We also prove that adding remote quantum registers located between the parties does not lead to new correlations (up to closure).
In~\Cref{sec:numerical} we introduce numerical techniques for computing bounds on the quantum value (the supremum of the winning probabilities attainable with quantum resources) of LC games. This culminates in~\Cref{fig:threeXor}, which plots the classical and quantum values of an LC game played by parties in a fixed spatial layout \emph{as a function of time}. Our plot shows how these values change as communication becomes possible between more pairs of parties. We conclude with a discussion in~\Cref{sec:discussion}. To make the paper self-contained, we give a short introduction to nonlocal games in~\Cref{app:prelim}.
In~\Cref{app:teleportation}, we describe how to use port-based teleportation~\cite{beigi2011simplified} in LC games to prove some of our results.
In~\Cref{app:extendedXOR} and~\Cref{appendix:3_party_xor}, we give numerical and analytic results on the quantum values of extended XOR (LC) games and three-party XOR (LC) games, respectively. 

\paragraph{Related work} 
Related literature typically focuses on generalizing local hidden-variable models by allowing limited communication and determining whether quantum strategies (with no communication) can still rule out these stronger models.
In particular, unlike our paper, \emph{spacetime locality is not a central concept in previous works}. 
Furthermore, we are not aware of any work that generalizes quantum strategies by allowing quantum communication between a subset of the parties. 
There is a body of work concerning causal relationships and the types of quantum operations that can be performed~\cite{lorenz2021causal,beckman2001causal}. These works concentrate on what types of unitaries can be implemented, while our paper concerns what types of correlations can be exhibited. 

Early work by Toner and Bacon \cite{PhysRevLett.90.157904} investigated classical strategies that allow communication. This scenario was further investigated in Ref.~\cite{PhysRevA.89.042108}. Both works focused on the bipartite case with limits on the number of communicated bits. Subsequently, this scenario was extended to the multipartite case and formalized using directed acyclic graphs (DAGs)~\cite{PhysRevA.101.052339}.
Refs.~\cite{PhysRevLett.114.140403,brask2017bell,vieira} analyzed one-way communication of inputs, outputs, or other classical messages in the bipartite setting using DAGs. It was shown that with three inputs and two outputs per party, one-way output communication alone is insufficient to reproduce all quantum correlations.
The multistage game where the outcomes are announced one by one was discussed in Ref.~\cite{PhysRevA.102.042412}.  

Additionally, there are many works on Bell nonlocality in networks \cite{tavakoli2022bell}. This literature typically focuses on setups with multiple independent quantum sources distributing entanglement across network nodes. Source-independence and no-signaling assumptions lead to constraints on classical and quantum strategies, which are used to derive network Bell inequalities~\cite{PhysRevLett.116.010403} and certify full network nonlocality~\cite{PhysRevLett.128.010403}. That is, this area of research is concerned with \emph{weaker} classical and quantum strategies.
In contrast, our work concerns \emph{stronger} classical and quantum strategies that are made possible by relaxing the latency constraint.

Communication between nodes is also allowed in genuine multipartite nonlocality (GMNL)~\cite{PhysRevLett.88.170405} and the associated Svetlichny inequalities~\cite{PhysRevD.35.3066}. GMNL is defined as the set of correlations that cannot be expressed as a convex combination of correlations that are local with respect to any bipartition of the parties. For each bipartition, parties can communicate within their side of the partition but not with parties on the other side. 
However, these scenarios are very specific: communication is allowed only within a group, and all parties within that group can communicate. Prior analyses are limited to deriving an inequality (a Svetlichny inequality) that cannot be violated with classical resources even if communication is allowed within the group, and then analyzing its quantum violations using conventional quantum strategies for nonlocal games. In contrast, we consider dynamic quantum strategies involving quantum measurements and quantum communication between parties that naturally arises from spacetime locality. Such dynamic strategies do appear in~\cite{gutoski2007toward}. However, our work introduces different definitions and performs a comprehensive analysis based on our definitions:
\begin{itemize}
    \item Our definitions are mainly motivated by latency constraints, and so there is a clear physical interpretation of our mathematical framework. In particular, we allow simultaneous two-way communication between a pair of parties.
    \item We consider the most general $n$-party scenario and mathematically describe a latency constraint using a directed graph and an integer-valued latency function.
    \item We consider specific LC games and explicitly construct optimal quantum strategies.
    \item We introduce new numerical optimization algorithms for bounding winning probabilities achievable by quantum strategies and conduct numerical experiments on specific LC games.
\end{itemize}
Latency constraints are also considered in position verification \cite{Buhrman_2014}, though in the adversarial setting of certifying a party's claimed location, rather than characterizing achievable correlations for parties within such constraints. Port-based teleportation \cite{beigi2011simplified}, a standard tool for attacks on position verification, will also be central in the study of LC games.

\section{Latency-Constrained (LC) Games}\label{sec:lc}

%

We begin by introducing the simplest form of a latency-constrained game.
This corresponds to a physical scenario in which the latency constraint
permits a single round of communication among a subset of the parties.
To specify which parties can communicate, we use a
directed graph $G$,\footnote{Throughout, directed graphs are assumed to have no self-loops. } where an edge $(i,j)$ indicates that party $i$ can send
information to party $j$.
For generality, we do not assume symmetry in communication—that is, if $i$ can
communicate with $j$, it does not imply that $j$ can communicate with $i$.
This is why we use directed graphs.
We now formally define this generalization of a nonlocal game.
For succinctness, we refer to this as a \textbf{latency-constrained game}, and
use the term \textbf{multi-step latency-constrained game} to denote more complex
settings where communication can occur over multiple rounds. To disambiguate, we will sometimes use the term \textbf{simple LC game} for the former.
See \Cref{sec:tau_lc} for more information.

We use $n$ and $[n]$ to denote the number of parties and the set
$\{1, 2, \ldots, n\}$ respectively.
\begin{dfn}\label{dfn:lc}
  Let $n \geq 2$ be an integer, and let $S_i$ and $A_i$ be finite sets for each
  $i \in [n]$.
  An $n$-party \textbf{latency-constrained (LC) game} with input sets $S_{i}$
  and output sets $A_{i}$ is defined by the tuple $(\mathcal{V}, \pi, G)$, where
  the probabilistic predicate $\mathcal{V}$ is a map
  \begin{equation*}
    \mathcal{V} : \prod_{i=1}^n A_i \times \prod_{i=1}^n S_i \to [0,1],
  \end{equation*}
  $\pi$ is the input distribution over $\prod_{i=1}^n S_i$, and $G$ is a
  directed graph with $n$ vertices, referred to as the \textbf{connectivity
    graph} of the game.
\end{dfn}

\noindent Note that a conventional nonlocal game is just an LC game with $G$
being the empty graph (no edges). Like nonlocal games, we can define a \textbf{behavior} as a conditional probability distribution
$$p(a_1, a_2, \cdots, a_n \vert s_1, s_2, \cdots, s_n)$$
with corresponding \textbf{winning probability}
$$p_\mathrm{win} \coloneqq \sum_{a_i \in A_i, s_i \in S_i} \pi(s_1, s_2,\cdots, s_n)  p(a_1, a_2, \cdots, a_n \vert s_1, s_2, \cdots, s_n) \mathcal V(a_1,a_2, \cdots ,a_n \vert s_1, s_2, \cdots ,s_n).$$
Here, to distinguish inputs and outputs, we write $\mathcal V(a\vert s)$ for $\mathcal V(a,s)$.
We will also use the term \textbf{maximum algebraic value} $\omega_a$ to mean the largest possible winning probability a behavior can attain:
\begin{align*}
    \omega_a \coloneqq \sum_{s_i \in S_i} \pi(s_1, s_2,\cdots ,s_n)  \max_{a_i \in A_i} \mathcal V(a_1,a_2, \cdots ,a_n \vert s_1, s_2, \cdots, s_n).
\end{align*}
We now proceed to define strategies for LC games.

\subsection{Classical strategies}
\label{subsec:classical_strategy}
We want to understand what is the most general strategy that uses classical
resources in the LC game setting.
We first make some preliminary definitions.
For a directed graph $G = (V,E)$, given $i \in V$, define
\begin{equation*}
  N_\text{out}(i) \coloneqq \{j \in V : (i,j) \in E\}
\end{equation*}
as the \textbf{out-neighborhood} of $i$ and
\begin{equation*}
  N_\text{in}(i) \coloneqq \{j \in V : (j,i) \in E\}
\end{equation*}
as the \textbf{in-neighborhood} of $i$. We also define
\begin{equation*}
  N_\text{out}[i] \coloneqq N_\text{out}(i) \cup \{i\},\,
  N_\text{in}[i] \coloneqq N_\text{in}(i) \cup \{i\}
\end{equation*}
as the \textbf{closed out-neighborhood} and \textbf{closed in-neighborhood}, respectively. 

Following the same logic as nonlocal games, we will define classical strategies by first considering the deterministic setting. 
When everything is deterministic, since classical information can be copied, the only nontrivial information
the parties can communicate is their inputs. We thus define \textbf{deterministic
  strategies} as each party taking their local information and producing an
output.
But unlike nonlocal games, for LC games each party also has access to the inputs
of his in-neighbors.
We therefore have the following definition.
\begin{dfn}
\label{def:classical_strat}
  Let $(\mathcal V, \pi, G)$ be an LC game with input and output sets $S_{i}$
  and $A_{i}$, respectively.
  Define the \textbf{past light cone of party $i$}, denoted $H_i$, as
  \begin{equation*}
    H_i \coloneqq \prod_{j \in N_\mathrm{in}[i]} S_j.
  \end{equation*}
  A \textbf{deterministic strategy} of the game is given by functions
  $\{f_i\}_{i=1}^n$, where 
  $$f_i : H_i \to A_i.$$
\end{dfn}

\noindent The behavior realized by a deterministic strategy is simply 
\begin{equation*}
  p(\mathbf a\vert \mathbf s) \coloneqq \prod_{i=1}^n \delta_{a_i, f_i(h_i)},
\end{equation*}
where $\textbf a \in \prod_{i=1}^n A_i, \textbf s \in \prod_{i=1}^n S_i$, and 
$h_i$ is the element of $H_i$ built from the corresponding elements in $\mathbf s$. 
We can compute the number of possible deterministic behaviors:
\begin{prp}
\label{prp:num_det}
    Given an LC game $(\mathcal V, \pi, G)$, the number of possible deterministic behaviors is 
    \begin{align}
    \label{eq:num_det}
    \prod_{i=1}^n \vert A_i\vert^{\vert H_i \vert }.
    \end{align}
\end{prp}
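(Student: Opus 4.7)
The plan is to prove this by a standard counting argument that proceeds in two steps: first count the number of tuples $\{f_i\}_{i=1}^n$ of local decision functions, then verify that distinct tuples yield distinct behaviors, so that the number of deterministic behaviors equals the number of deterministic strategies.

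First I would observe that by \Cref{def:classical_strat} a deterministic strategy is specified by choosing, independently for each party $i \in [n]$, a function $f_i : H_i \to A_i$. The number of such functions is $|A_i|^{|H_i|}$, and since the choices across parties are independent, the total number of tuples is $\prod_{i=1}^n |A_i|^{|H_i|}$. This gives an upper bound on the number of distinct deterministic behaviors.

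The second step, which is the only non-routine one, is to show the map from strategy tuples $\{f_i\}$ to behaviors $p(\mathbf{a}\vert\mathbf{s}) = \prod_{i=1}^n \delta_{a_i, f_i(h_i)}$ is injective. Suppose $\{f_i\}$ and $\{f_i'\}$ differ: say $f_k(h_k^*) \neq f_k'(h_k^*)$ for some $k \in [n]$ and some $h_k^* \in H_k = \prod_{j \in N_\mathrm{in}[k]} S_j$. I need to exhibit a global input $\mathbf{s}$ whose induced $h_k$ equals $h_k^*$, because then the $k$-th marginal of the deterministic output differs between the two strategies and hence so do the behaviors. Such an $\mathbf{s}$ is constructed by setting its components indexed by $j \in N_\mathrm{in}[k]$ to match the corresponding components of $h_k^*$ (possible since $H_k$ is precisely the product over $N_\mathrm{in}[k]$) and choosing the remaining components of $\mathbf{s}$ arbitrarily from the nonempty sets $S_j$. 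Evaluating the behaviors at this $\mathbf{s}$ shows they differ, establishing injectivity.

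The main obstacle, minor though it is, is this injectivity step: one must be careful that ``$h_i$ is built from the corresponding elements in $\mathbf{s}$'' really does mean every $h_i^* \in H_i$ is realized by some $\mathbf{s}$, which in turn relies on the input sets being nonempty (implicit in the finite-set assumption — if some $S_j$ were empty, the game would be degenerate). Combining the upper bound from the first step with the injectivity from the second gives the exact count $\prod_{i=1}^n |A_i|^{|H_i|}$ claimed in \eqref{eq:num_det}.
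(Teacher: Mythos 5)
Your proposal is correct and follows the same route as the paper: count the strategy tuples to get the upper bound $\prod_{i=1}^n |A_i|^{|H_i|}$, then note that two strategies differing at some $h_k^* \in H_k$ yield different behaviors, since $h_k^*$ is realized by some global input $\mathbf{s}$. Your write-up merely makes the injectivity step (realizing any $h_k^*$ by extending it arbitrarily to a full $\mathbf{s}$) explicit, which the paper's proof leaves implicit.
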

\begin{proof}
Party $i$ has $\vert H_i\vert$ possible input values, and for each input value, he can choose from $\vert A_i \vert$ possible output values, so the total number of possible deterministic behaviors is upper bounded by~\Cref{eq:num_det}. On the other hand, if the output differs for a certain input of a certain party, the resulting two behaviors will be different. Therefore, the number is exactly~\Cref{eq:num_det}.
\end{proof}
\noindent This result can be somewhat counter-intuitive since the parties' inputs must agree for shared in-neighbors. However, each party $i$ does indeed have $\abs{H_i}$ possible inputs. What we are counting is the number of different possible deterministic strategies, not the number of different possible inputs. 

Again following the same logic as nonlocal games, in the randomized setting, we have an additional classical resource: shared randomness.\footnote{This is the hidden variable $\lambda$ in local hidden variable theories. }
We thus define a \textbf{\boldmath classical strategy} as a probabilistic mixture of deterministic strategies. That is, the set of all classical behaviors is by definition the convex hull of the set of deterministic behaviors. It is therefore a polytope with the number of vertices given in~\Cref{eq:num_det}. \emph{The hyperplanes that define this polytope via the Minkowski-Weyl theorem are then the generalization of Bell inequalities for relaxed latency constraints, where a subset of parties can undergo a single round of communication}~\cite{brunner2014bell}. This answers the first question posed in~\Cref{sec:intro}. We shall call such inequalities \textbf{LC inequalities}. LC inequalities are Bell inequalities in the limit of very tight latency constraints. We define the largest winning probability attainable by a classical behavior for an LC game as $\omega_c$:
$$ \omega_c = \max_\text{classical behaviors} p_\text{win}.$$
We call $\omega_c$ the \textbf{classical value}.

Also mentioned in~\Cref{sec:intro} is that if every pair of parties can communicate with each other, then a classical strategy can realize any possible correlation. This is the ``Fully communicating scenario'' in~\Cref{fig:ABC}, which is defined by an LC game with connectivity graph being the complete directed graph, meaning every pair of distinct vertices are connected in both directions.
We state this result explicitly in the general case:
\begin{prp}
\label{prp:trivial}
    Let $(\mathcal V, \pi, G)$ be an LC game where the connectivity graph $G$ is a complete directed graph. Then, a classical strategy can realize all possible behaviors. In particular, it can achieve the maximum algebraic value $\omega_a$.
\end{prp}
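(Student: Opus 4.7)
The plan is to leverage the fact that when $G$ is a complete directed graph, every party's past light cone contains the full input tuple, so each party locally has access to all the information needed to simulate any behavior provided they also share randomness beforehand.

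First I would observe that since $G$ is complete, $N_{\mathrm{in}}[i] = [n]$ for every $i$, and therefore
\begin{equation*}
  H_i = \prod_{j=1}^n S_j \quad \text{for every } i \in [n].
\end{equation*}
Consequently, a deterministic strategy in the sense of \Cref{def:classical_strat} consists of an arbitrary tuple of functions $f_i : \prod_j S_j \to A_i$, with no locality restriction whatsoever. Equivalently, deterministic strategies are in bijection with arbitrary functions $\mathbf{f} : \prod_j S_j \to \prod_i A_i$, and the induced behavior is the deterministic conditional distribution $p(\mathbf{a}|\mathbf{s}) = \delta_{\mathbf{a},\mathbf{f}(\mathbf{s})}$.

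Next I would show that any behavior $p(\mathbf{a}|\mathbf{s})$ is a convex combination of such deterministic behaviors. This is the standard ``shared randomness'' argument: for each $\mathbf{s}$, write the distribution $p(\cdot|\mathbf{s})$ on $\prod_i A_i$ as a convex combination of point masses. By fixing, for each $\mathbf{s}$, a partition of the unit interval into sub-intervals of lengths $p(\mathbf{a}|\mathbf{s})$, the parties can use a shared uniform random variable $\lambda \in [0,1]$ (which by assumption is part of the classical resources distributed in advance) together with their common knowledge of $\mathbf{s}$ to each compute the same joint output $\mathbf{a}(\mathbf{s},\lambda)$; the $i$-th party outputs its own coordinate $a_i(\mathbf{s},\lambda)$. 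This exhibits $p$ as a convex combination of deterministic strategies, hence as a classical strategy.

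Finally, to obtain the maximum algebraic value, I would pick a deterministic strategy directly: for each $\mathbf{s}$ fix $\mathbf{a}^{*}(\mathbf{s}) \in \arg\max_{\mathbf{a}} \mathcal{V}(\mathbf{a}|\mathbf{s})$ (breaking ties arbitrarily but consistently across parties) and define $f_i(\mathbf{s}) \coloneqq a^{*}_i(\mathbf{s})$. The resulting winning probability is exactly $\omega_a$. There is no real obstacle here; the only thing to be a little careful about is verifying that the shared-randomness construction produces a well-defined probabilistic mixture of deterministic strategies in the sense of this paper (i.e.\ that the random choice can be made before the inputs arrive), which follows because the decoding function $\mathbf{a}(\mathbf{s},\lambda)$ depends only on $\mathbf{s}$ and $\lambda$, so conditioning on $\lambda$ yields a deterministic strategy in the sense of \Cref{def:classical_strat}.
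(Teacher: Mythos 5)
Your proof is correct and follows essentially the same route as the paper: since the complete connectivity graph gives every party access to the full input tuple, pre-shared randomness lets all parties jointly sample from $p(\cdot\vert\mathbf{s})$ and each output their own coordinate, and a deterministic argmax strategy then attains $\omega_a$. Your single-$\lambda$ interval-partition formulation and the paper's per-$\mathbf{s}$ shared samples are just two equivalent parametrizations of the same shared-randomness simulation.
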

\begin{proof}
    Let $p(\textbf{a}\vert \textbf{s})$ be a behavior.  For every $\textbf{s} \in \prod_{i=1}^n S_i$, the $n$ parties each have a copy of a random variable $A^s = (A_1^s, \cdots, A_n^s)$ distributed according to the probability distribution $p(\cdot \vert \textbf s)$. Since $G$ is a complete directed graph, every party has access to all of $\textbf s$. Each party $i$ then outputs $A_i^s$ from their copy. The resulting behavior is exactly $p(\textbf a \vert \textbf s)$.

    In particular, the classical strategy can realize the behavior given by
    \begin{align*}
        p_0(\mathbf a \vert \mathbf s) \coloneqq \delta_{\mathbf a,\mathbf a_0(\mathbf s)},
    \end{align*}
    where 
    $$\mathbf a_0(\mathbf s) \in \argmax_\mathbf{a} \mathcal V(\mathbf a \vert \mathbf s).$$
    This clearly achieves $\omega_a$.
\end{proof}
Now, we observe that the core mathematical object used to describe classical strategies for LC games is exactly the same as that of conventional nonlocal games. Both are simply a set of local discrete functions $f_i$, where for LC games the argument of $f_i$ includes the inputs of in-neighbors of $i$. However, when computing the winning probability for LC games, the arguments of $f_i$ and $f_j$ must agree on shared in-neighbors of $i,j$.

\subsection{Quantum strategies}
We next consider quantum strategies for LC games.
Like classical strategies, each party can transmit information to their out-neighbors with respect to the connectivity graph $G$. Unlike classical strategies, the parties can send \emph{quantum information} to each other. We can describe such a strategy with the following definition. 
\begin{dfn}
    \label{dfn:graph_quantum}
    Let $(\mathcal V, \pi, G)$ be an LC game of $n$ parties.
    Let $B_i$, $B_{i \to j}$ be finite-dimensional Hilbert spaces for $i\in [n]$ and $j \in N_\mathrm{out}[i]$.
    A \textbf{quantum strategy} is a tuple
    \begin{align*}
        \bigl(\vert \psi \rangle, \{W_i(s_i)\}_{i \in [n], s_i \in S_i}, \{M_i\}_{i=1}^n\bigr),
    \end{align*}
    where $\vert\psi\rangle \in \bigotimes_{i=1}^n B_i$ is a quantum state, 
    \begin{align*}
        W_i(s_i) : B_i \to 
        B_i^\mathrm{out} \coloneqq \bigotimes_{\substack{j \in N_\mathrm{out}[i]}} B_{i\to j}^{\mathrm{out}}
    \end{align*}
    is an isometry for all $s_i \in S_i$, and $M_i = \{\Pi_{i, a_i}\}_{a_i \in A_i}$ is a projective measurement on
    \begin{align*}
        B_i^\mathrm{in} \coloneqq 
        \bigotimes_{\substack{j \in N_\mathrm{in}[i]}} B_{j\to i}^{\mathrm{in}}.
    \end{align*}
\end{dfn}
\noindent In words, in a quantum strategy, the parties share a quantum state $\vert \psi\rangle$. Party $i$ uses the isometry $W_i(s_i)$ which depends on his input $s_i$. The isometry maps his share of the quantum state in the Hilbert space $B_i$ to a tensor product Hilbert space $B_i^\text{out} \coloneqq \bigotimes_{\substack{j \in N_\mathrm{out}[i]}} B_{i\to j}$, where tensor factor $B_{i \to j}$ is the quantum system transmitted to party $j$. $B_{i \to i}$ is the quantum system that is left behind.\footnote{In general, the Hilbert spaces $B_{i\to j}$ can be input-dependent as well, but we can always enlarge the Hilbert space (by taking the direct sum over all inputs for example) so that in the end the isometries map to a fixed Hilbert space. }
Hence, after the transmission, party $i$ has an overall Hilbert space $B_i^\text{in} \coloneqq \bigotimes_{\substack{j \in N_\mathrm{in}[i]}} B_{j\to i}$, on which he performs the measurement $M_i$.\footnote{Here we will use projective measurements, which is sufficiently general since any POVM can be turned into a projective measurement via Naimark dilation. }
We can write the behavior realized by a quantum strategy for an LC game as 
\begin{equation}\label{eq:g_network}
  p(\mathbf a \vert \mathbf s) =
  \Bigbra{\psi\,} \, \Biggl[\bigotimes_{i=1}^n W_i(s_i)^{\dagger}\Biggr] 
  \Biggl[\bigotimes_{j=1}^n \Pi_{j, a_j} \Biggr]  
  \Biggl[\bigotimes_{i=1}^n W_i(s_i)\Biggr] \, \Bigket{\,\psi}.
\end{equation}
Note that the notation is somewhat misleading as the tensor product structure of the measurements is \emph{not} the same as that of the isometries. More explicitly, the $\bigotimes_{i=1}^n W_i(s_i)$ tensor product acts on
$ \bigotimes_{i=1}^n B_i$
and outputs
$\bigotimes_{i=1}^n B_i^\mathrm{out},$
while the $\bigotimes_{j=1}^n \Pi_{j,a_j}$ acts on
$\bigotimes_{j=1}^n B_j^\mathrm{in}.$
This is better explained by a figure: we draw a quantum strategy in~\Cref{fig:g-quantum} for an LC game where the connectivity graph $G$ is the graph 
$$v_1 \leftrightarrows  v_2 \leftrightarrows v_3.$$

We denote by $\omega_q$ the supremum of the winning probabilities attainable by a quantum behavior for an LC game:
\begin{align}
\label{eq:qvalue}
  \omega_q \coloneqq \sup_\text{quantum behaviors} p_\text{win}.
\end{align}
We use supremum because the Hilbert spaces $B_i, B_{i \to j}$ can be of any finite dimension, and the supremum may not be attained at any finite dimension~\cite{slofstra2019set}.
We call $\omega_q$ the \textbf{quantum value}. Computing the quantum value of an LC game answers the second question posed in~\Cref{sec:intro}.
When $\omega_q > \omega_c$, the corresponding LC inequality can be violated by using quantum resources.
\begin{figure}[htbp!]
  \centering
  \begin{tikzpicture}[scale=.9]
    \pgfmathsetlengthmacro{\xstep}{3.6cm}
    \pgfmathsetlengthmacro{\ystep}{2.4cm}

    \pic (box11) at (0,2*\ystep) {tensorboxin={$W_{1}$}{$s_{1}$}};
    \pic (box12) at (0,1*\ystep) {tensorboxin={$W_{2}$}{$s_{2}$}};
    \pic (box13) at (0,0*\ystep) {tensorboxin={$W_{3}$}{$s_{3}$}};

    \pic (box21) at (\xstep,2*\ystep) {tensorboxout={$M_{1}$}{$a_{1}$}};
    \pic (box22) at (\xstep,1*\ystep) {tensorboxout={$M_{2}$}{$a_{2}$}};
    \pic (box23) at (\xstep,0*\ystep) {tensorboxout={$M_{3}$}{$a_{3}$}};

    \draw (box11-in1) -- ([xshift=-18pt]box11-inpin1);
    \draw (box12-in2) -- ([xshift=-18pt]box12-inpin2);
    \draw (box13-in3) -- ([xshift=-18pt]box13-inpin3);

    \draw (box11-out1) -- (box21-in1);
    \draw (box11-out2) to[out=0,in=180] (box22-in1);

    \draw (box12-out1) to[out=0,in=180] (box21-in2);
    \draw (box12-out2) -- (box22-in2);
    \draw (box12-out3) to[out=0,in=180] (box23-in2);

    \draw (box13-out2) to[out=0,in=180] (box22-in3);
    \draw (box13-out3) -- (box23-in3);

    \draw [decorate,decoration={brace,amplitude=5pt,mirror}]
    ($(box11-inpin1)+(-.8,0)$) -- ($(box13-inpin3)+(-.8,0)$)
    node[midway,xshift=-15pt] {$\ket{\psi}$};
  \end{tikzpicture}
  \caption{A quantum strategy for an LC game, where the connectivity graph $G$
    is the graph
    $v_1 \leftrightarrows v_2 \leftrightarrows v_3$. The $W_i$ operators are the input-dependent isometries and the $M_i$ operators are the measurements. The lines in between $W_i$ and $M_i$ denote the communication between the parties.
    }\label{fig:g-quantum}
\end{figure}

We here make two remarks about the quantum strategy of an LC game.
First, we emphasize that a quantum strategy for an LC game does not
  necessarily require the physical transmission of quantum systems in real
  time.
Instead, quantum teleportation—which only requires pre-established entanglement
and real-time classical communication—is sufficient to turn quantum
communication into classical communication.

Second, note that even though the measurement appears to occur at the end of the
quantum strategy, this definition does not preclude the possibility that the
parties perform a measurement first and then transmit the outcomes to one
another.
Moreover, this measurement can be made input-dependent, thereby
recovering quantum strategies for conventional nonlocal games.
Suppose party $i$ wishes to perform a measurement described by the
input-dependent projective measurement
${\{\Pi_{i, a_i}(s_i)\}}_{a_i \in A_i}$ on his share of the quantum state
$\ket{\psi}$ and transmit the measurement result to party $j$.
He can use the isometry $W_i(s_i)$ that satisfies
\begin{align}
    W_i(s_i) \vert\psi\rangle = \sum_{a_i \in A_i} [\Pi_{i, a_i}(s_i) \vert\psi\rangle \otimes\ket{a_i}]_{B_{i\to i}}\otimes\ket{a_i}_{B_{i\to j}}.
    \label{Eq:iso_measure_and_allocate}
\end{align}
where $\vert\psi\rangle$ is an arbitrary vector in the Hilbert space $B_i$ and $\vert a_i \rangle$ denotes classical information encoded in a fixed basis. It is easy to see that $W_i(s_i)$ is an isometry for all $s_i \in S_i$. Indeed, for any quantum states $\ket{\psi},\ket{\phi}\in B_i$,
\begin{align*}
        \bra{\psi}W_i(s_i)^\dagger W_i(s_i)\ket{\phi}&=\tr[W_i(s_i)\ket{\phi}\bra{\psi}W_i(s_i)^\dagger]\\
        &=\sum_{a_i \in A_i} \tr[\Pi_{i, a_i}(s_i) \ket{\phi}\bra{\psi} \Pi_{i, a_i}(s_i)]\\
        &=\braket{\psi}{\phi}.
\end{align*}
After this isometry, parties $i$ and $j$ receive system $B_{i\to i}$ and $B_{i \to j}$, respectively, and perform measurements in the fixed basis. 
In this way, party $i$ effectively performs an input-dependent measurement and transmits the result to party $j$ while keeping a copy for himself. 


Now, let $\mathcal Q$ be the set of all quantum behaviors. Note that this set only depends on the input sets $\{S_i\}_{i=1}^n$, the output sets $\{A_i\}_{i=1}^n$, as well as the connectivity graph $G$. We can prove that this is convex via a proof similar to~\cite{pitowsky1986range}.
\begin{prp}
    For any connectivity graph $G$, input sets $S_i$, and output sets $A_i$, the set $\mathcal Q$ is convex.
    \label{prp_convex_LC}
\end{prp}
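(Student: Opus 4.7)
The plan is to adapt the standard Pitowsky-style convexity argument to LC quantum strategies by \emph{coherently} mixing two strategies with the help of a shared flag that is perfectly correlated across parties. Given strategies $\sigma_k = (\ket{\psi_k}, \{W_i^{(k)}(s_i)\}, \{\Pi_{i,a_i}^{(k)}\})$ for $k=1,2$, living in Hilbert spaces $B_i^{(k)}$ and $B_{i \to j}^{(k)}$ and realizing behaviors $p_1$ and $p_2$, and given a weight $\lambda \in [0,1]$, I will build a new quantum strategy whose behavior is $\lambda p_1 + (1-\lambda) p_2$.

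For the construction, introduce a flag qubit $F_i = \mathbb{C}^2$ at each party and set $B_i = B_i^{(1)} \otimes B_i^{(2)} \otimes F_i$, $B_{i \to j} = B_{i \to j}^{(1)} \otimes B_{i \to j}^{(2)}$ for $j \neq i$, and $B_{i \to i} = B_{i \to i}^{(1)} \otimes B_{i \to i}^{(2)} \otimes F_i$, so that the flag is kept by party $i$ and is seen by the measurement on $B_i^{\mathrm{in}}$. The new shared pure state is
\[ \ket{\Psi} = \sqrt{\lambda}\, \ket{\psi_1} \otimes \ket{\phi_2} \otimes \bigotimes_{i=1}^n \ket{0}_{F_i} + \sqrt{1-\lambda}\, \ket{\phi_1} \otimes \ket{\psi_2} \otimes \bigotimes_{i=1}^n \ket{1}_{F_i}, \]
where $\ket{\phi_k}$ is any fixed unit vector in $\bigotimes_i B_i^{(k)}$; because the GHZ-type flag factor splits across parties, $\ket{\Psi}$ lies in $\bigotimes_i B_i$. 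Each new isometry $\tilde W_i(s_i)$ is flag-controlled: on the $\ket{0}_{F_i}$ branch it applies $W_i^{(1)}(s_i)$ on $B_i^{(1)}$ and some fixed isometry $B_i^{(2)} \to B_i^{(2),\mathrm{out}}$ on the other factor; on the $\ket{1}_{F_i}$ branch it does the analogous thing with $W_i^{(2)}(s_i)$ active on $B_i^{(2)}$; and in both cases it copies the flag value into $B_{i \to i}$. Orthogonality of the two flag branches makes $\tilde W_i(s_i)$ an isometry. The new projective measurement is similarly flag-controlled,
\[ \tilde\Pi_{i,a_i} = \Pi_{i,a_i}^{(1)} \otimes I_{B_i^{(2),\mathrm{in}}} \otimes \ket{0}\bra{0}_{F_i} + I_{B_i^{(1),\mathrm{in}}} \otimes \Pi_{i,a_i}^{(2)} \otimes \ket{1}\bra{1}_{F_i}, \]
and one checks immediately that $\sum_{a_i}\tilde\Pi_{i,a_i} = I$ and $\tilde\Pi_{i,a_i}^2 = \tilde\Pi_{i,a_i}$.

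To finish, I would expand the behavior via~\ref{eq:g_network}. The flag projectors inside each $\tilde\Pi_{i,a_i}$ kill every coherence between the $\bigotimes_i \ket{0}_{F_i}$ and $\bigotimes_i \ket{1}_{F_i}$ branches of $\ket{\Psi}$, so the cross terms vanish and the two diagonal terms reduce to $\lambda\, p_1(\mathbf{a} \vert \mathbf{s})$ and $(1-\lambda)\, p_2(\mathbf{a} \vert \mathbf{s})$ respectively (the inactive-side ancillas contribute a factor of $1$ because they are hit by identities in both the projector and the inner product). The main subtlety is purely bookkeeping: routing the flag into $B_{i \to i}$ so it survives the isometry and is accessible to the measurement on $B_i^{\mathrm{in}}$, and carrying the inactive registers $B_i^{(2)}, B_{i \to j}^{(2)}$ (respectively $B_i^{(1)}, B_{i \to j}^{(1)}$) along as trivial ancillas so that $\tilde W_i(s_i)$ really maps into the declared output space $B_i^{\mathrm{out}}$. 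Once these bookkeeping choices are fixed, convexity follows from a direct computation.
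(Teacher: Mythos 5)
Your construction is correct, but it is a genuinely different route from the paper's. The paper proves convexity by taking \emph{direct sums}: it sets $C_i \coloneqq B_i \oplus B_i'$, $C_{i\to j}\coloneqq B_{i\to j}\oplus B_{i\to j}'$, uses the state $\sqrt{\lambda}\ket{\psi}\oplus\sqrt{1-\lambda}\ket{\psi'}$, and block-sums the isometries and projectors, so the two strategies live in orthogonal superselection sectors and the cross terms vanish automatically. You instead take \emph{tensor products} and adjoin a GHZ-correlated flag qubit $F_i$ at every party, routing the flag through $B_{i\to i}$ so that each party's final measurement can condition on it; orthogonality of the all-$0$ and all-$1$ flag branches plays the role that sector orthogonality plays in the direct-sum proof. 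Both arguments are sound. The direct-sum version is more compact and needs no ancillary ``inactive'' registers, though as written in the paper it quietly identifies $\bigotimes_j(B_{i\to j}\oplus B'_{i\to j})$ with a space containing $B_i^{\mathrm{out}}\oplus B_i'^{\mathrm{out}}$ and leaves the extension of $\Pi_{i,a_i}\oplus\Pi'_{i,a_i}$ to the full input space implicit; your flag construction has more bookkeeping (the fixed isometries on the idle factors, the padding states $\ket{\phi_k}$, the flag surviving into $B_i^{\mathrm{in}}$) but every operator is defined on the full declared spaces, and it makes the operational content transparent: the convex mixture is just shared (coherently encoded, perfectly correlated) randomness selecting which of the two strategies all parties run. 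One small point worth stating explicitly in your write-up is that the fixed isometry $B_i^{(2)}\to B_i^{(2),\mathrm{out}}$ you need on the inactive branch always exists --- e.g.\ take $W_i^{(2)}(s_i^0)$ for an arbitrary fixed input $s_i^0$ --- so the construction never requires a dimension assumption.
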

\begin{proof}
The key idea of the proof is to use the direct sum of Hilbert spaces of the strategies that realize the original two behaviors. Note that in general, if we impose a dimension constraint on the Hilbert spaces used, the set of quantum behaviors is not always convex~\cite{nonconvexity}.

Consider a convex combination of two quantum behaviors in $\mathcal{Q}$:
\begin{equation*}
    \begin{split}
        &\xi(\mathbf{a}\vert\mathbf{s})=\lambda p(\mathbf{a}\vert\mathbf{s})+(1-\lambda)p'(\mathbf{a}\vert\mathbf{s}),\\
        \quad&p(\mathbf a \vert \mathbf s) = \langle \psi \vert \bigotimes_{i=1}^n W_i(s_i)^{\dagger}  \bigotimes_{i=1}^n \Pi_{i, a_i} \bigotimes_{i=1}^n W_i(s_i) \vert \psi\rangle,\\
        &p'(\mathbf a \vert \mathbf s) = \langle \psi' \vert \bigotimes_{i=1}^n W'_i(s_i)^{\dagger} \bigotimes_{i=1}^n \Pi'_{i, a_i}  \bigotimes_{i=1}^n W'_i(s_i) \vert \psi'\rangle.
    \end{split}
\end{equation*}
Denote the Hilbert spaces for $p(\textbf a \vert \textbf s), p'(\textbf a \vert \textbf s)$ as $B_i,B_{i \to j}$ and $B_i', B_{i \to j}'$, respectively.
By replacing the underlying quantum system with the direct sum of the original two systems:
\begin{align*}
    C_i \coloneqq B_i \oplus B_i', C_{i \to j} \coloneqq B_{i\to j} \oplus B_{i\to j}'
\end{align*}
and defining the corresponding $C_i^\mathrm{out}, C_i^\mathrm{in}$ systems as in~\Cref{dfn:graph_quantum},
we can reconstruct $\xi(\mathbf{a}\vert\mathbf{s})$ as:
\begin{equation*}
    \begin{split}
        &\xi(\mathbf{a}\vert\mathbf{s})=\langle \phi \vert \bigotimes_{i=1}^n V_i(s_i)^{\dagger} R_G^\dagger \bigotimes_{i=1}^n \Xi_{i, a_i} R_G \bigotimes_{i=1}^n V_i(s_i) \vert \phi\rangle,
    \end{split}
\end{equation*}
where
\begin{equation*}
    \begin{split}
        &\ket{\phi} \coloneqq \sqrt{\lambda}\ket{\psi}\oplus\sqrt{1-\lambda}\ket{\psi'} \in \bigotimes_{i=1}^n C_i,\\
        &V_i(s_i) \coloneqq W_i(s_i)\oplus W'_i(s_i): C_i \to C_i^\mathrm{out},\\
        \quad&\Xi_{i, a_i} \coloneqq \Pi_{i, a_i}\oplus\Pi'_{i, a_i} : C_i^\mathrm{in} \to C_i^\mathrm{in},\\
        & R_G : \bigotimes_{i=1}^n C_i^\mathrm{out} \to \bigotimes_{i=1}^n C_i^\mathrm{in}.
    \end{split}
\end{equation*}
The conclusion follows.
\end{proof}

We next prove a simple result regarding $\mathcal Q$ for different connectivity graphs.
\begin{prp}
    Let $\{S_i\}_{i=1}^n$, $\{A_i\}_{i=1}^n$ be input and output sets, respectively, for $n$ parties and $G= (V,E)$ a directed graph with $n$ vertices. Let $\mathcal Q$ be the set of all possible quantum behaviors. Then, if $G' = (V, E')$ where $E' \subseteq E$, then the set of possible quantum behaviors $\mathcal Q'$ for the connectivity graph $G'$ is a subset of $\mathcal Q$.
\end{prp}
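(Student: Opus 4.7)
The plan is to show that any quantum behavior realizable with the sparser graph $G'$ can be realized with the denser graph $G$, simply because having more available communication channels never hurts: the parties can always ignore them. Concretely, I would take an arbitrary quantum strategy $(\ket{\psi'},\{W_i'(s_i)\},\{M_i'\})$ with Hilbert spaces $B_i'$ and $B_{i\to j}'$ realizing a behavior $p'(\mathbf{a}\vert\mathbf{s})\in\mathcal{Q}'$, and build a strategy on $G$ that reproduces $p'$ exactly.

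The construction goes as follows. Keep the same local spaces $B_i\coloneqq B_i'$ and shared state $\ket{\psi}\coloneqq\ket{\psi'}$. For each edge $(i,j)\in E'$ and for each self-loop $i=i$, keep $B_{i\to j}\coloneqq B_{i\to j}'$. For each extra directed edge $(i,j)\in E\setminus E'$, declare the corresponding channel space to be trivial, namely $B_{i\to j}\coloneqq\C$. Because tensoring with $\C$ is the identity operation on Hilbert spaces, this yields canonical isomorphisms
\begin{equation*}
    B_i^{\mathrm{out}}=\bigotimes_{j\in N_{\mathrm{out}}^{G}[i]} B_{i\to j}\;\cong\;\bigotimes_{j\in N_{\mathrm{out}}^{G'}[i]} B_{i\to j}'\;=\;{B_i'}^{\mathrm{out}},
\end{equation*}
and symmetrically $B_i^{\mathrm{in}}\cong {B_i'}^{\mathrm{in}}$. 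Under these identifications, define $W_i(s_i)\coloneqq W_i'(s_i)$ for all $s_i\in S_i$ and $\Pi_{i,a_i}\coloneqq\Pi_{i,a_i}'$ for all $a_i\in A_i$. Since $W_i'(s_i)$ is an isometry and $\{\Pi_{i,a_i}'\}$ is a projective measurement, the same holds for $W_i(s_i)$ and $\{\Pi_{i,a_i}\}$, so this is a legitimate quantum strategy for the game on $G$ in the sense of \Cref{dfn:graph_quantum}.

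To finish, I would substitute the above into the behavior formula \ref{eq:g_network} and observe that because every operator and every state coincides with its primed counterpart under the canonical isomorphism, the expression collapses term by term to the corresponding expression for the $G'$ strategy, yielding $p(\mathbf{a}\vert\mathbf{s})=p'(\mathbf{a}\vert\mathbf{s})$ for all $\mathbf{s}$ and $\mathbf{a}$. Hence $p'\in\mathcal{Q}$, and since $p'$ was arbitrary, $\mathcal{Q}'\subseteq\mathcal{Q}$. The only thing to be careful about is bookkeeping of tensor factors and the canonical identification with the trivial one-dimensional spaces on the extra edges; this is the main (and quite mild) technical obstacle, but it introduces no real difficulty.
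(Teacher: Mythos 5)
Your proposal is correct and follows essentially the same route as the paper: the paper also assigns trivial (one-dimensional) spaces $B_{i\to j}$ to the edges in $E\setminus E'$ and extends the isometries and measurements by tensoring with a fixed normalized state there, which under the canonical identification is exactly your construction. The argument is sound and complete.
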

\begin{proof}
    Consider a quantum strategy $(\vert \psi \rangle, \{W_i'(s_i)\}_{i \in [n], s_i \in S_i}, \{M_i'\}_{i=1}^n)$ where the connectivity graph is $G'$. Let $p'(\textbf a \vert \textbf s)$ be the behavior realized. Define additional Hilbert spaces $B_{i\to j}$ where $(i,j) \in E \setminus E'$ to be trivial (one-dimensional) and add them to the output spaces of the isometries $W_i'(s_i)$:
    \begin{align*}
        W_i(s_i) \vert \phi\rangle_{B_i} \coloneqq W_i'(s_i) \vert \phi\rangle_{B_i} \otimes \bigotimes_{j: (i,j) \in E \setminus E'}\vert \xi_{j}\rangle_{B_{i \to j}},
    \end{align*}
    where $\vert \xi_{j}\rangle_{B_{i \to j}}$ is a normalized state. 
    We also add them to the input spaces of the measurements $M_i' = \{\Pi_{i,a_i}'\}_{a_i \in A_i}$:
    \begin{align*}
        \Pi_{i,a_i} \Bigl(\vert \phi\rangle_{B_i^\mathrm{in}} \otimes \bigotimes_{j: (j,i) \in E \setminus E'} \vert \xi_{j}\rangle_{B_{j \to i}}\Bigr) \coloneqq \bigl (\Pi_{i,a_i}' \vert\phi\rangle_{B_i^\mathrm{in}} \bigr) \otimes \bigotimes_{j: (j,i) \in E \setminus E'} \vert \xi_{j}\rangle_{B_{j \to i}}.
    \end{align*}
    Hence, $(\vert \psi \rangle, \{W_i(s_i)\}_{i \in [n], s_i \in S_i}, \{M_i\}_{i=1}^n)$, where $M_i = \{\Pi_{i,a_i}\}_{a_i \in A_i}$, is a quantum strategy where the connectivity graph is $G$. Thus, $p'(\textbf a \vert \textbf s) \in \mathcal Q$. 
\end{proof}
\noindent That is, more correlations become possible as the latency constraint is relaxed. Hence, in particular, for a growing sequence of subgraphs
\begin{align*}
    G_1 \subseteq  G_2 \subseteq \cdots
\end{align*}
that have the same vertex set, the set of quantum behaviors satisfy 
\begin{align*}
    \mathcal Q_1 \subseteq \mathcal Q_2 \subseteq \cdots.
\end{align*}
In particular, when only one round of communication is possible between pairs of parties within the latency constraint, this implies that the quantum value $\omega_q$ is non-decreasing with time. This is clearly also true for the classical value $\omega_c$.

Lastly, we remark that unlike classical strategies, quantum strategies for LC games appear to be a different mathematical object compared to quantum strategies for nonlocal games. The main reason is the addition of the isometries $W$. We will make this statement more precise in~\Cref{subsec:forwarding_strategies}.

\subsection{Examples: distributed games}\label{subsec:distributed}

In this section, we present examples of LC games where classical and quantum
strategies have different winning probabilities.
The examples we consider are ``distributed'' variants of well-known nonlocal
games, specifically the CHSH game and the magic square game.
While the original versions involve only two parties, the distributed variants
feature three parties, with two of them collectively playing the role of a
single party from the original game.
This complicates the task, but we allow communication
between these two parties.
We show that the three parties together can achieve the quantum value of the original
games, thereby demonstrating the use of communication in a quantum strategy.

\subsubsection{Distributed CHSH game}
\label{subsubsec:dist_chsh}
We first give a distributed version of the CHSH game. The input and outputs sets are respectively $S_1,S_2,S_3 = \{0,1\}$ and $A_1, A_2, A_3 =\{0,1\}$. The predicate is given by
\begin{equation*}
    \mathcal{V}(a_1,a_2,a_3\vert s_1,s_2,s_3) = 
    \begin{cases}
        1 & a_1=a_2\text{ and } (s_1\oplus s_2)\wedge s_3 = a_1\oplus a_3,\\
        0 & \text{otherwise.}
    \end{cases}
\end{equation*}
That is, each of the first two parties is playing the CHSH game with the third party. The first two parties must produce the same output. Moreover, one input to the CHSH game is distributed in the sense that it is the parity of the inputs of the first two parties $s_1 \oplus s_2$. Neither of the first two parties has access to this parity information before communication. Lastly, let $\pi$ be the uniform distribution. The setup is shown in~\Cref{fig:dist_chsh}.
\begin{figure}
  \centering
  \begin{tikzpicture}[node distance=1.7cm,
    player/.style = {draw, circle, minimum width=.9cm, fill=gray!15},
    doublearrow/.style = {double arrow, draw, inner sep=0pt,
      minimum height=1.2cm,
      minimum width=4mm,
      double arrow head extend=.3pt,
    },
    singlearrow/.style = { single arrow, draw, inner sep=0pt,
      minimum height=.7cm,
      minimum width=4mm,
      single arrow head extend=.3pt,
      rotate=-90}]

    \node[player] (P1) at (0,0) {$1$};
    \node[player, right=of P1] (P2) {$2$};
    \node[player, right=of P2] (P3) {$3$};

    \node[doublearrow] at ($(P1)!.5!(P2)$) {};

    \node[singlearrow] at ([yshift=-1cm]P1) {};
    \node[singlearrow] at ([yshift=1cm]P1) {};
    \node[singlearrow] at ([yshift=-1cm]P2) {};
    \node[singlearrow] at ([yshift=1cm]P2) {};
    \node[singlearrow] at ([yshift=-1cm]P3) {};
    \node[singlearrow] at ([yshift=1cm]P3) {};

    \node at ([yshift=-1.7cm]P1) {$a_{1}$};
    \node at ([yshift=-1.7cm]P2) {$a_{2}$};
    \node at ([yshift=-1.7cm]P3) {$a_{3}$};

    \node at ([yshift=1.7cm]P1) {$s_{1}$};
    \node at ([yshift=1.7cm]P2) {$s_{2}$};
    \node at ([yshift=1.7cm]P3) {$s_{3}$};

    \node at ([yshift=-2.3cm]P2) {$a_{1} = a_{2}$ and $a_{1} \oplus a_{3} = (s_{1} \oplus s_{2}) \land s_{3}$};
  \end{tikzpicture}
  \caption{The distributed CHSH game with connectivity graph $v_1 \leftrightarrows v_2 \quad v_3$.}\label{fig:dist_chsh}
\end{figure}

We first claim that the nonlocal game (none of the parties can communicate) defined by the predicate $\mathcal{V}(a_1,a_2,a_3\vert s_1,s_2,s_3)$ does not have a quantum advantage. This would be the case if the latency constraint is too tight for communication to be possible. In this case, the connectivity graph is the empty graph and this becomes a conventional nonlocal game with three parties. We can write the winning probability as 
\begin{equation}
    \begin{aligned}
        p_\mathrm{win} =  \frac{1}{8}\Big[&p(000|000)+p(111|000) + p(000|010)+p(111|010) \\
        + & p(000|100)+p(111|100) + p(000|110)+p(111|110) \\
        + & p(000|001)+p(111|001) + p(110|011)+p(001|011)\\
        + & p(110|101)+p(001|101) + p(000|111)+p(111|111)\Big].
    \end{aligned}
    \label{eq:p_win_for_dist_CHSH}
\end{equation}
It will be useful to express this in correlator form~\cite{brunner2014bell}. Given the input $(s_1,s_2,s_3)$, the three parties measure binary observables $K_1^{s_1},K_2^{s_2},K_3^{s_3}$, respectively, with eigenvalues $+ 1,-1$. The measured eigenvalues $+1$ and $-1$ correspond to producing output bit values $0$ and $1$, respectively. We explicitly define correlators as
\begin{align*}
    \langle K_i^{s_i} \rangle & \coloneqq \sum_{a_i=0}^1 (-1)^{a_i} \cdot p(a_i \vert s_i)\\
    \langle K_i^{s_i} K_j^{s_j} \rangle & \coloneqq \sum_{a_i,a_j =0}^1 (-1)^{a_i+a_j} \cdot p(a_i, a_j \vert s_i, s_j)\\
    \langle K_i^{s_i} K_j^{s_j} K_k^{s_k}\rangle &\coloneqq \sum_{a_i,a_j,a_k=0}^1 (-1)^{a_i+a_j+a_k} \cdot p(a_i, a_j, a_k \vert s_i, s_j, s_k),
\end{align*}
where $p(a_i \vert s_i), p(a_i,a_j\vert s_i, s_j)$ are the marginal behaviors on party $i$ and a pair of parties $i \ne j$, which are well-defined due to the no-signaling condition. The equation
\begin{equation*}
    \begin{aligned}
        p(a_1,a_2,a_3|s_1,s_2,s_3) = & \frac{1}{8}\left[1+(-1)^{a_1}\langle K_1^{s_1}\rangle + (-1)^{a_2} \langle K_2^{s_2}\rangle + (-1)^{a_3}\langle K_3^{s_3}\rangle +(-1)^{a_1 \oplus a_2} \langle K_{1}^{s_1}K_{2}^{s_2}\rangle  \right. \\
        & \left. + (-1)^{a_1 \oplus a_3} \langle K_{1}^{s_1}K_{3}^{s_3}\rangle + (-1)^{a_2 \oplus a_3} \langle K_{2}^{s_2}K_{3}^{s_3}\rangle  + (-1)^{a_1 \oplus a_2 \oplus a_3} \langle K_{1}^{s_1}K_{2}^{s_2}K_{3}^{s_3}\rangle\right]
    \end{aligned}
\end{equation*}
holds for all $a_i, s_i \in \{0,1\}$.
For the outputs $(0,0,0),(1,1,1),(1,1,0),(0,0,1)$ that appear in $p_{\mathrm{win}}$, we have
\begin{equation*}
    \begin{aligned}
        & p(0,0,0|s_1,s_2,s_3)+p(1,1,1|s_1,s_2,s_3) = \frac{1}{4}[1+\langle K_1^{s_1}K_2^{s_2} \rangle + \langle K_1^{s_1}K_3^{s_3} \rangle + \langle K_2^{s_2}K_3^{s_3} \rangle ] \\
        & p(1,1,0|s_1,s_2,s_3)+p(0,0,1|s_1,s_2,s_3) = \frac{1}{4}[1+\langle K_1^{s_1}K_2^{s_2} \rangle - \langle K_1^{s_1}K_3^{s_3} \rangle - \langle K_2^{s_2}K_3^{s_3} \rangle ].
    \end{aligned}
\end{equation*}
Thus, the winning probability $p_\mathrm{win}$ in \Cref{eq:p_win_for_dist_CHSH} can be expressed in a correlator form as 
\begin{equation}
    \begin{aligned}
        p_\mathrm{win}  =& \frac{1}{4} + \frac{1}{16}[ \langle K_1^{0}K_2^{0}\rangle + \langle K_1^{0}K_2^{1}\rangle + \langle K_1^{1}K_2^{0}\rangle + \langle K_1^{1}K_2^{1}\rangle\\
        & + \langle K_1^{0}K_3^{0}\rangle + \langle K_1^{1}K_3^{0}\rangle + \langle K_2^{0}K_3^{0}\rangle + \langle K_2^{1}K_3^{0}\rangle]
    \end{aligned}
    \label{eq:distribited_CHSH_operator}
\end{equation}
Therefore, the RHS of~\Cref{eq:distribited_CHSH_operator} is at most $\frac{3}{4}$. Note that since we used the no-signaling constraints in deriving~\Cref{eq:distribited_CHSH_operator}, $\frac{3}{4}$ is the maximum value for no-signaling behaviors rather than the maximum algebraic value $\omega_a$, which is $1$. The winning probability of $\frac 3 4$ can be attained for a classical strategy where all parties have a constant output of $0$. Hence, the quantum value, which has to be between the classical and no-signaling value, also has to be $\frac 3 4$. The claim follows.

Now, we slightly relax the latency constraint so that the first two parties can communicate with each other, but no other communication is possible. That is, we have the connectivity graph $G$ given by
$$v_1 \leftrightarrows v_2~~~~ v_3.$$
This is possible if $v_3$ is slightly farther away, see for example the ``Partially communicating scenario 1'' shown in~\Cref{fig:ABC}. Here, we overload the letters $v_1, v_2, v_3$ to denote the parties as well as the vertices in $G$. 
We define the LC game, which we will call the \textbf{distributed CHSH game}, by the tuple $(\mathcal{V},\pi,G)$.  First of all, the classical value of this LC game is still $\frac 3 4$. This is because letting $v_1, v_2$ communicate effectively makes them one party\footnote{This will be stated more explicitly when we prove~\Cref{prp:agg_classical}. } as they can share inputs and locally compute outputs using their functions $f_1, f_2$ on their combined inputs. This ``aggregated'' party now has to play the regular CHSH game with $v_3$. Note that $s_1 \oplus s_2$ is uniformly distributed if $s_1$ and $s_2$ are. Hence, the winning probability is at most $\frac 3 4$ for a classical strategy. That is, even with the relaxed latency constraint, this is the maximum winning probability:
\begin{align*}
    p_\mathrm{win} \leq \frac 3 4.
\end{align*}
\emph{This is our first example of an LC inequality}.

However, we will find that in this relaxed latency constraint regime, there is a quantum violation. That is, although there was no quantum violation at first for very tight latency constraints, by waiting a little longer a quantum violation becomes possible! 
Moreover, the quantum strategy we find wins the distributed CHSH game with probability $\cos^2 \frac \pi 8$. This is the highest possible value for a quantum strategy. To see this, we aggregate $v_1,v_2$ and consider them as one party with input $(s_1,s_2)$ and output $(a_1,a_2)$. This can only increase the winning probability. Then, this aggregated party is effectively playing the usual CHSH game with $v_3$.
The CHSH game has a quantum value of $\cos^2 \frac \pi 8$. Hence, the quantum value of the distributed CHSH game is also $\cos^2 \frac \pi 8$. 

At first glance, it is not immediately clear how a quantum strategy for an LC game as defined in~\Cref{dfn:graph_quantum} can achieve this winning probability. In particular, for a quantum strategy, the first two parties can only communicate with each other once. If they can communicate over multiple rounds such that they can send each other their inputs, make a quantum measurement, and then send back their outputs
, the solution becomes trivial. We call this \textbf{back-and-forth communication}.
In this case, $v_1$ and $v_3$ share a maximally entangled state 
$$\vert \Phi^+\rangle \coloneqq (\ket{00}+\ket{11})/\sqrt{2}.$$
In the first round of communication, $v_2$ sends his input $s_2$ to $v_1$, who then computes $s_1 \oplus s_2$ and applies the corresponding measurement for an optimal quantum strategy of the original CHSH game on his share of $\vert \Phi^+\rangle$. $v_3$ applies his measurement according to $s_3$. Then, in the second round of communication, $v_1$ sends his output $a_1$ to $v_2$, who directly uses it as his own output. This clearly wins the distributed CHSH game with probability $\cos^2 \frac \pi 8$, the quantum value of the original CHSH game.

Surprisingly, we will find that a quantum strategy for an LC game, which only involves a single round of communication, also suffices to achieve this winning probability. From a physical perspective, this is interesting because it means that we can achieve the same winning probability while consuming only \emph{half} the time. We consider a strategy where the first two parties' initial quantum system is in a maximally entangled logical qubit state with the third party:
$$\ket\psi _{B_1B_2B_3} \coloneqq \frac{1}{\sqrt{2}}(\ket{0_L0}_{(B_1B_2)B_3}+\ket{1_L1}_{(B_1B_2)B_3}).$$
Here, the logical qubit states $\vert i_L\rangle_{B_1 B_2}$ are 
\begin{equation*}
    \begin{aligned}
        & \vert 0_L\rangle_{B_1 B_2} \coloneqq \frac{1}{\sqrt{2}}\left(\vert 0 \rangle_{B_1} \vert 1\rangle_{B_2} + \vert 1 \rangle_{B_1} \vert 0\rangle_{B_2}\right), \\
        & \vert 1_L\rangle_{B_1 B_2} \coloneqq \frac{1}{\sqrt{2}}\left(\vert 0 \rangle_{B_1} \vert 0\rangle_{B_2} - \vert 1 \rangle_{B_1} \vert 1\rangle_{B_2}\right), 
    \end{aligned}
\end{equation*}
which are two-qubit states residing in the $+1$ eigenspace of $Y_{B_1} \otimes Y_{B_2}$, with $v_1$ and $v_2$ each holding one qubit. We observe  
\begin{equation}
    \begin{aligned}
        &X_{B_1}\otimes X_{B_2}  \vert 0_L\rangle_{B_1 B_2} = \vert 0_L\rangle_{B_1 B_2},\quad X_{B_1}\otimes X_{B_2}  \vert 1_L\rangle_{B_1 B_2} = - \vert 1_L\rangle_{B_1 B_2}, \\
        &Z_{B_1}\otimes Z_{B_2}  \vert 0_L\rangle_{B_1 B_2} = -\vert 0_L\rangle_{B_1 B_2},\quad Z_{B_1}\otimes Z_{B_2}  \vert 1_L\rangle_{B_1 B_2} = \vert 1_L\rangle_{B_1 B_2}.
    \end{aligned}
    \label{eq:logical_Z}
\end{equation}
That is, both $X_{B_1}\otimes X_{B_2}$ and $-Z_{B_1}\otimes Z_{B_2}$ are logical $Z$ operators on the logical qubits. Similarly, as 
\begin{equation}
    \begin{aligned}
        &X_{B_1}\otimes Z_{B_2}  \vert 0_L\rangle_{B_1 B_2} = \vert 1_L\rangle_{B_1 B_2} \\
        &Z_{B_1}\otimes X_{B_2}  \vert 0_L\rangle_{B_1 B_2} = \vert 1_L\rangle_{B_1 B_2},
    \end{aligned}
    \label{eq:logical_X}
\end{equation}
both $X_{B_1}\otimes Z_{B_2}$ and $Z_{B_1}\otimes X_{B_2}$ are logical $X$ operators on the logical qubits. 
Based on the above observations, the following strategy allows both $v_1$ and $v_2$ to perform a CHSH game with $v_3$, while ensuring that $v_1$ and $v_2$ produce the same output:
\begin{itemize}
    \item $v_1$ measures $X_{B_1}$ on his qubit when $s_1=0$, and measures $Z_{B_1}$ on his qubit when $s_1=1$. The measurement result is denoted by $m_1 \in \{\pm 1\}$;
    \item $v_2$ measures $X_{B_2}$ on his qubit when $s_2=0$, and measures $Z_{B_2}$ on his qubit when $s_2=1$. The measurement result is denoted by $m_2\in \{\pm 1\}$;
    \item $v_1$ and $v_2$ share the measurement results $m_1,m_2$ and their inputs $s_1,s_2$ using a single round of classical communication. They subsequently output 
    \begin{align*}
        a_1=a_2 = \frac{1}{2}[1 - (-1)^{s_1s_2} m_1m_2].
    \end{align*}
    \item $v_3$ measures $\frac{Z_{B_{3}}+X_{B_3}}{\sqrt{2}}$ on his qubit when $s_3=0$, and measures $\frac{Z_{B_{3}}-X_{B_3}}{\sqrt{2}}$ on his qubit when $s_3=1$. Then $v_3$ uses his measurement result $m_3 \in \{\pm 1\}$ to give output $a_3 = \frac{1}{2}(1-m_3)$; 
\end{itemize}
With this strategy, both $v_1$ and $v_2$ effectively measure the observables $X_{B_1} \otimes X_{B_2}$, $X_{B_1} \otimes Z_{B_2}$, $Z_{B_1} \otimes X_{B_2}$, $-Z_{B_1} \otimes Z_{B_2}$ after communication for inputs $(s_1,s_2) \in \{(0,0),(0,1),(1,0),(1,1)\}$ respectively. These are exactly the logical $Z$ observable for $(s_1,s_2) \in \{(0,0),(1,1)\}$ and logical $X$ observable for $(s_1,s_2) \in \{(0,1),(1,0)\}$ on their logical qubit, as proved in \Cref{eq:logical_Z,eq:logical_X} above. Each of the first two parties $v_1$ and $v_2$, paired with the third party $v_3$, is therefore implementing an optimal quantum strategy for the usual CHSH game, and the outputs of the first two parties are guaranteed to be the same. 
We compute this explicitly.
According to \Cref{eq:p_win_for_dist_CHSH}, the winning probability of this strategy in the distributed CHSH game is 
\begin{equation*}
    \begin{aligned}
    p_\text{win} = &\frac{1}{8}\left[p(m_1m_2=m_3|000) +p(m_1m_2=m_3|001)+p(m_1m_2=m_3|010) + p(m_1m_2=-m_3|011)\right. \\
    &\left.+ p(m_1m_2=m_3|100) +p(m_1m_2=-m_3|101)+p(-m_1m_2=m_3|110) + p(-m_1m_2=m_3|111)\right] \\
    \end{aligned}
\end{equation*}
Here, we have\footnote{The subscripts $B_1,B_2,B_3$ for the state and measurements are omitted for simplicity.}
\begin{equation*}
    \begin{aligned}
        &p(m_1m_2=m_3|000) + p(m_1m_2=m_3|001)\\
        =\,& \frac{1 + \langle \psi|X\otimes X \otimes \frac{Z+X}{\sqrt{2}} |\psi \rangle }{2}+ \frac{1 + \langle \psi|X\otimes X \otimes \frac{Z-X}{\sqrt{2}} |\psi\rangle }{2} \\
        =\,& 1 + \frac{1}{\sqrt{2}} \langle \psi|X\otimes X \otimes Z |\psi \rangle \\
        =\,& 1 + \frac{1}{\sqrt{2}},
    \end{aligned}
\end{equation*}
and similarly, 
\begin{equation*}
    \begin{aligned}
        &p(m_1m_2=m_3|010) + p(m_1m_2=-m_3|011) = 1 + \frac{1}{\sqrt{2}}\langle \psi |X\otimes Z \otimes X | \psi \rangle = 1 + \frac{1}{\sqrt{2}}, \\
        &p(m_1m_2=m_3|100) + p(m_1m_2=-m_3|101) = 1 + \frac{1}{\sqrt{2}}\langle \psi |Z\otimes X \otimes X | \psi \rangle = 1 + \frac{1}{\sqrt{2}}, \\
        &p(-m_1m_2=m_3|110) + p(-m_1m_2=m_3|111) = 1 - \frac{1}{\sqrt{2}}\langle \psi |Z\otimes Z \otimes Z | \psi \rangle = 1 + \frac{1}{\sqrt{2}}. \\
    \end{aligned}
\end{equation*}
Therefore, this quantum strategy can achieve a winning probability of $\cos^2 \frac{\pi}{8}$
as claimed. 

We summarize our results for the distributed CHSH game in~\Cref{tab:dist_CHSH}.
\begin{table}[htp!]
\renewcommand{\arraystretch}{1.4}
    \centering
    \begin{tabular}{|c|c|c|c|}
    \hline 
         Connectivity graph &  $\omega_c$ & $\omega_q$ & Quantum violation?\\
    \hline
         $v_1 \quad v_2 \quad v_3$ & $\frac 3 4$ & $\frac 3 4$ & No \\
    \hline
         $v_1 \leftrightarrows v_2 \quad v_3$  & $\frac 3 4$ & $\cos^2 \frac{\pi}{8}$ & Yes\\
    \hline
    \end{tabular}
    \caption{Classical and quantum values for the distributed CHSH game with different connectivity graphs. The empty graph corresponds to a conventional nonlocal game, while the non-empty graph corresponds to an LC game. }
    \label{tab:dist_CHSH}
\end{table}

\noindent We state again the miraculous conclusion: while a quantum violation is impossible under very tight latency constraints where no parties can communicate, by slightly relaxing the latency constraint so that only $v_1,v_2$ can communicate with each other, a quantum violation becomes possible! 

Additionally, the three-party distributed CHSH game can be naturally generalized to an $(n+m)$-party distributed CHSH game in the following way. Each party receives input $s_i \in \{0,1\}$, which is chosen uniformly at random, and produces output $a_i \in \{0,1\}$. The probabilistic predicate is given by
\begin{equation*}
    \mathcal{V}(\mathbf{a}|\mathbf{s}) = \begin{cases}
        1 & a_1=a_2=\cdots= a_n,\,a_{n+1}=a_{n+2}=\cdots=a_{n+m},\, (\bigoplus_{i=1}^n s_i) \land (\bigoplus_{j=n+1}^{n+m} s_j) = a_1\oplus a_{n+1}, \\
        0 & \mathrm{otherwise}.
    \end{cases}
\end{equation*}
That is, there are two groups of parties $V_1=\{v_1,v_2,\cdots,v_n\}$ and $V_2=\{v_{n+1},v_{n+2},\cdots,v_{n+m}\}$. Each group collectively acts as a single party in the original CHSH game. The two inputs for this CHSH game correspond to the parity of the inputs from the parties within each group. Here the connectivity graph is $G=K_1 \cup K_2$, with $K_1,K_2$ being complete bidirected graphs on $V_1,V_2$, respectively. 

For such an $(n+m)$-party distributed CHSH game, we can consider a strategy where the parties share the $(n+m)$-qubit entangled state 
\begin{equation*}
    \vert \psi^{(n+m)} \rangle = \frac{1}{\sqrt{2}} \left[|0_L^{(n)}\rangle \left(\cos \frac{\pi}{8}|0_L^{(m)}\rangle + \sin \frac{\pi}{8} |1_L^{(m)}\rangle \right) + |1_L^{(n)}\rangle \left(\sin \frac{\pi}{8}|0_L^{(m)}\rangle - \cos \frac{\pi}{8} |1_L^{(m)}\rangle\right)  \right],
\end{equation*}
with each party holding one qubit.
Here, $\bigket{0^{(n)}_{L}}$ and $\bigket{1^{(n)}_{L}}$ are the logical $\ket{0}$ and $\ket{1}$ of the stabilizer code with stabilizers 
$\{Y_iY_{i+1}\}_{i=1}^{n-1}$ with $Z_{1}Z_{2} \cdots Z_{n}$ serving as the logical $Z$ operator.
Then each party in the first group measures $Z$ on his qubit when receives input $0$ and measures $X$ on his qubit when he receives input $1$. It is straightforward to check that, for a Pauli string consisting of $n$ $X$ and $Z$ operators, an odd number of $X$'s corresponds to the logical $\pm X$, while an even number corresponds to the logical $\pm Z$.
Similarly, 
for the $m$ parties in the second group,
$\bigket{0^{(m)}_{L}}$ and $\bigket{1^{(m)}_{L}}$ are the logical $\ket{0}$ and $\ket{1}$ of the code stabilized by $\{Y_iY_{i+1}\}_{i=n+1}^{n+m-1}$, with $Z_{n+1}Z_{n+2} \cdots Z_{n+m}$ serving as the logical $Z$ operator. Again, each of the $m$ parties  measures $Z$ when he receives input $0$ and measures $X$ when receives input $1$, so as to implement two anticommuting logical operators $\pm X$ and $\pm Z$ depending on the parity $\bigoplus_{j=n+1}^{n+m} s_j$. 

With the above shared entangled state and measurements, if the parties can have one round of communication within each group after their measurements in which they share the measurement results in order to correct for the possible sign flip $\pm 1$
, they can win this $(n+m)$-party distributed CHSH game with probability $\cos^2 \frac{\pi}{8}$.

\subsubsection{Distributed magic square game} 
We next consider a distributed version of the magic square game. In particular, we will use a variation of the magic square game \cite{arkhipov2012extending} where one party is given a row or a column to fill in while the other party fills in a specific entry in the magic square. The first party's answer must fulfill the parity requirements of a magic square while the second party's entry must match the first party's. The input distribution is constrained so that the row or column given contains the specified entry. We draw the setup of the magic square game in~\Cref{fig:magic_square} and show the optimal quantum strategy.
\begin{figure}[htbp!]
  \centering
  \begin{tikzpicture}
    \matrix (m) [matrix of nodes, nodes={cell},
    row sep=-\pgflinewidth,
    column sep=-\pgflinewidth]{
      $+1$ & $+1$ & $+1$ \\
      $-1$ & $+1$ & $-1$ \\
      $-1$ & $+1$ & \color{red}{$?$}\\
    };
    \node[celllabel] at ([xshift=1cm]m-1-3) {$+1$};
    \node[celllabel] at ([xshift=1cm]m-2-3) {$+1$};
    \node[celllabel] at ([xshift=1cm]m-3-3) {$+1$};
    \node[celllabel] at ([yshift=-1cm]m-3-1) {$+1$};
    \node[celllabel] at ([yshift=-1cm]m-3-2) {$+1$};
    \node[celllabel] at ([yshift=-1cm]m-3-3) {$-1$};

    \matrix (o) [matrix of nodes, nodes={cell}, xshift=5cm,
    row sep=-\pgflinewidth,
    column sep=-\pgflinewidth]{
      $XI$ & $IX$ & $XX$ \\
      $IZ$ & $ZI$ & $ZZ$ \\
      $XZ$ & $ZX$ & $YY$ \\
    };
  \end{tikzpicture}
  \caption{The magic square game. An optimal quantum strategy that has unit winning probability is to measure the observables on the right-hand square. }\label{fig:magic_square}
\end{figure}

In the \textbf{distributed magic square game}, $v_1$ and $v_2$ share the role of the party in the original game that has to fill in a specific entry. Instead of getting the full coordinates of that entry, the $x$-coordinate is given to $v_1$ and the $y$-coordinate to $v_2$. More explicitly, the LC game is defined as follows.
\begin{enumerate}
    \item The connectivity graph $G$ is again given by 
    $$v_1 \leftrightarrows v_2~~~~ v_3.$$
    \item $v_1$ receives input $s_1 \in [3]$, and $v_2$ receives input $s_2 \in [3]$. They each must output $\pm1$.
    \item $v_3$ receives either $s_1$ or $s_2+3$ as his input $s_3$ with equal probability. If $s_3\in\{1,2,3\}$, he must fill in the $s_3$-th row with $\pm1$. If instead $s_3\in\{4,5,6\}$, he must fill in the $(s_3-3)$-th column with $\pm1$. 
    \item The winning condition is: 
    \begin{itemize}
        \item $v_3$ fills the row or column with the correct parity according to the usual magic square game. The parity requirements are shown in~\Cref{fig:magic_square}. 
        \item $v_1$ and $v_2$ have the same output, which should also be the same as what $v_3$ filled in for the entry $(s_1, s_2)$.
    \end{itemize}
\end{enumerate}

There exists a quantum strategy with unit winning probability. In this strategy, two pairs of maximally entangled states 
$$\ket{\Phi^+}^{\otimes 2}=\frac{1}{2}(\ket{00}+\ket{11})^{\otimes 2}$$
are shared between $v_1$ and $v_3$. After receiving their inputs, $v_1$ performs measurements in the following bases depending on $s_1$:
\begin{equation*}
    \begin{aligned}
        &s_1=1: \{\ket{{+}{+}},\ket{{+}{-}},\ket{{-}{+}},\ket{{-}{-}}\} \\
        &s_1=2: \{\ket{00},\ket{10},\ket{01},\ket{11}\} \\
        &s_1=3: \{\frac{1}{\sqrt{2}}(\ket{0{+}}+\ket{1{-}}),\frac{1}{\sqrt{2}}(\ket{0{-}}+\ket{1{+}}),\frac{1}{\sqrt{2}}(\ket{0{+}}-\ket{1{-}}),\frac{1}{\sqrt{2}}(\ket{0{-}}-\ket{1{+}})\}.
    \end{aligned}
\end{equation*}
The four measurement outcomes, as ordered above, are labeled via $(m_1,m_1')\in\{\pm 1\}^{2}$. That is, the ordering is $(+1,+1), (+1,-1), (-1, +1), (-1,-1)$.
Meanwhile, the measurement bases of $v_3$ are chosen in the same way as in the magic square game in~\cite{arkhipov2012extending}:
\begin{equation*}
    \begin{aligned}
        &s_3=1: \{\ket{{+}{+}},\ket{{+}{-}},\ket{{-}{+}},\ket{{-}{-}}\} \\
        &s_3=2: \{\ket{00},\ket{10},\ket{01},\ket{11}\} \\
        &s_3=3: \Bigl\{\frac{1}{\sqrt{2}}(\ket{0{+}}+\ket{1{-}}),\frac{1}{\sqrt{2}}(\ket{0{-}}+\ket{1{+}}),\frac{1}{\sqrt{2}}(\ket{0{+}}-\ket{1{-}}),\frac{1}{\sqrt{2}}(\ket{0{-}}-\ket{1{+}})\Bigr\}\\
        &s_3=4: \{\ket{{+}0},\ket{{+}1},\ket{{-}0},\ket{{-}1}\} \\
        &s_3=5: \{\ket{0{+}},\ket{1{+}},\ket{0{-}},\ket{1{-}}\} \\
        &s_3=6: \Bigl\{\frac{1}{\sqrt{2}}(\ket{00}+\ket{11}),\frac{1}{\sqrt{2}}(\ket{01}+\ket{10}),\frac{1}{\sqrt{2}}(\ket{00}-\ket{11}),\frac{1}{\sqrt{2}}(\ket{01}-\ket{10})\Bigr\}.
    \end{aligned}
\end{equation*}
The four outcomes are also labeled by $(m_3,m_3')\in\{+1,-1\}^{ 2}$ in the same order.

After the measurements, $v_1$ sends his measurement outcome $(m_1,m_1')$ to $v_2$, and $v_2$ sends his input $s_2$ to $v_1$. Finally, both $v_1$ and $v_2$ output the result of the function 
\begin{equation*}
        f(s_2,m_1,m_1') \coloneqq \begin{cases}
            m_1 & \text{ if } s_2=1, \\
            m_1' & \text{ if } s_2=2, \\
            m_1m_1' & \text{ if } s_2=3.
            \end{cases}
\end{equation*}
Meanwhile, $v_3$ just fills the row or column specified by $s_3$ according to his measurement outcome exactly as in the quantum strategy for the variation of the magic square game. That is, if $s_3 \in \{1,2,3\}$, the $s_3$-th row is filled with $m_3,m_3',m_3m_3'$, respectively. If instead $s_3 \in \{4,5,6\}$, the $(s_3-3)$-th column is filled with $m_3,m_3',m_3m_3'$, respectively.

We now show that this strategy wins the distributed magic square game with unit probability. 
We observe that $v_1$ and $v_2$ together are effectively measuring the observable $M(s_1, s_2)$ consisting of $v_1$'s projective measurement followed by classical post-processing $f(s_2,m,m')$. For example, for $s_1=1$, the effective observables are 
\begin{equation*}
    \begin{aligned}
        & M(1,1)=\ket{{+}{+}}\bra{{+}{+}} + \ket{{+}{-}}\bra{{+}{-}} - \ket{{-}{+}}\bra{{-}{+}} - \ket{{-}{-}}\bra{{-}{-}} = X\otimes I, \\
        & M(1,2)=\ket{{+}{+}}\bra{{+}{+}} - \ket{{+}{-}}\bra{{+}{-}} + \ket{{-}{+}}\bra{{-}{+}} - \ket{{-}{-}}\bra{{-}{-}} = I\otimes X, \\
        & M(1,3)=\ket{{+}{+}}\bra{{+}{+}} - \ket{{+}{-}}\bra{{+}{-}} - \ket{{-}{+}}\bra{{-}{+}} + \ket{{-}{-}}\bra{{-}{-}} = X\otimes X.\\
    \end{aligned}
\end{equation*}
Similarly, one can check that for other possible inputs $(s_1,s_2)$, the effective measurements $M(s_1,s_2)$ are 
\begin{equation*}
    \begin{aligned}
        &M(2,1) = I\otimes Z,~M(2,2) = Z\otimes I,~M(2,3) = Z\otimes Z, \\
        &M(3,1) = X\otimes Z,~M(3,2) = Z\otimes X,~M(3,3) = Y\otimes Y. 
    \end{aligned}
\end{equation*}
This is exactly the same as the quantum strategy shown on the right side of \Cref{fig:magic_square}. Therefore, $v_1$ and $v_2$ always produce identical outputs, and each of them can be effectively regarded as playing the variation of the magic square game \cite{arkhipov2012extending} with $v_3$ using an optimal quantum strategy. Consequently, a quantum strategy with a single round of communication is sufficient to win the distributed magic square game with unit probability. In this case we were able to combine the inputs $s_1, s_2$ to measure the correct effective observable via classical post-processing.

\subsection{Aggregating parties}
\label{subsec:aggreg}
In latency-constrained settings, we often encounter situations where the parties can be split into groups. Parties within a group can communicate, while parties in different groups cannot.\footnote{This is also the setup of a Svetlichny scenario~\cite{PhysRevD.35.3066}. However, Svetlichny scenarios do not consider the possibility of communication in a quantum strategy, which is allowed in our case by the relaxed latency constraint. } To motivate this from a latency standpoint, we can imagine the parties can live in different cities. Parties in the same city can communicate with low latencies due to geographical proximity, while parties located in different cities communicate with much higher latencies. In this scenario, it is natural to ask when it is possible to ``aggregate'' multiple parties. For the distributed CHSH game in~\Cref{subsubsec:dist_chsh}, we found that when the first two parties can communicate, the quantum value is the same as that of the two-party nonlocal game obtained by aggregating the first two parties. This was achieved by implementing the optimal quantum strategy for the two-party CHSH game in a distributed manner with the help of a single round of communication between the first two parties. In general, for what kinds of LC games can we do this?

Define a \textbf{bidirected clique} of a directed graph as a subgraph that is itself a complete directed graph (every pair of vertices is connected in both directions). 
Then, we can prove that parties in a bidirected clique of $G$ that is isolated (vertices in the clique are not connected to vertices outside of it in either direction) can be aggregated into one party in the sense that the sets of classical and quantum behaviors for the aggregated game can be approximated arbitrarily well by those of the non-aggregated game.
\noindent For example, we can start with the connectivity graph
$$ G= v_1 \leftrightarrows v_2 \quad v_3.$$
The subgraph containing $v_1$ and $v_2$ is an isolated bidirected clique. After aggregating $v_1$ and $v_2$ we obtain the connectivity graph
$$G' = v_C \quad v_3,$$
where $v_C$ has the aggregated input set $S_1 \times S_2$ and the aggregated output set $A_1 \times A_2$. Our claim is that the possible classical and quantum behaviors for the three-party LC game with connectivity graph $G$ and the two-party nonlocal game obtained by aggregating $v_1$ and $v_2$ into one party $v_C$ are essentially the same.

To capture the notion of aggregation, we make a definition.
\begin{dfn}
    Let $\mathcal G = (\mathcal V, \pi, G)$ be an LC game.
    We call $\mathcal G$ \textbf{aggregable} if $G = (V,E)$ has a bidirected clique $C = (V_C,E_C) \subseteq G$ that is isolated.
    $G_\mathrm{agg}$ is the directed graph obtained from $G$ by replacing $C$ with a single isolated vertex $v_C$. We define the input and output sets for $v_C$ as
    \begin{align*}
        S_{v_C} \coloneqq \prod_{v \in V_C} S_{v}
    \end{align*}
    and 
    \begin{align*}
        A_{v_C} \coloneqq \prod_{v \in V_C} A_{v}.
    \end{align*}
    Let $\mathcal V_\mathrm{agg}, \pi_\mathrm{agg}$ correspond to $\mathcal V, \pi$ respectively after aggregating the inputs and outputs for parties in $C$. We say that $\mathcal G$, after \textbf{aggregation}, becomes the LC game $\mathcal G_\mathrm{agg} \coloneqq (\mathcal V_\mathrm{agg}, \pi_\mathrm{agg}, G_\mathrm{agg})$.
\end{dfn}
The proof for the classical case is straightforward.
\begin{prp}
\label{prp:agg_classical}
    Let $\mathcal G = (\mathcal V, \pi, G)$ be an aggregable LC game with isolated bidirected clique $C \subseteq G$ and let $\mathcal G_\mathrm{agg}$ be the corresponding LC game after aggregation. 
    Let $\mathcal L_C$ be the set of classical behaviors for $\mathcal G$ after aggregating inputs for parties in $C$ and $\mathcal L_\mathrm{agg}$ be the set of classical behaviors for $\mathcal G_\mathrm{agg}$.
    Then,
    $$\mathcal L_C = \mathcal L_\mathrm{agg}.$$
    In particular, 
    $$\omega_c(\mathcal G) = \omega_c( \mathcal G_\mathrm{agg}).$$
\end{prp}
\begin{proof}
    Consider the case of a deterministic strategy for $\mathcal G$. Every party $v \in V_C$ receives the inputs of all the other parties in $V_C$. Hence, each output $a_{v}$ is a function of an aggregated input in $S_{v_C}$. The collective output $\prod_{v \in V_C} a_v$ is in $A_{v_C}$. Thus the statement is true for deterministic strategies. We can then take convex combinations of deterministic strategies to obtain our conclusion.
\end{proof}

That the same conclusion essentially holds in the quantum case is somewhat counterintuitive. Just as in the case of the distributed CHSH game, we may think that back-and-forth communication is necessary. In this case, a ``central'' party can receive all the relevant inputs, make the corresponding quantum measurement, and then send back the corresponding outputs. However, just as we found for the distributed CHSH game, \emph{a single round of communication is sufficient}. The main technical tool we use to prove this in the general case is \textbf{port-based teleportation (PBT)}~\cite{PhysRevLett.101.240501,beigi2011simplified}. This is an alternative to quantum teleportation where the sender and receiver share many pairs of maximally entangled states, and the quantum state is teleported to the receiver's half of one of the entangled states. The receiver's halves of the entangled states are referred to as ``ports.'' The sender performs a quantum measurement and classically communicates to the receiver which of the ports the quantum state was teleported to.
This protocol can be used for implementing a bipartite unitary $U_{AB}$ between two remote parties using only local operations, classical communication, and shared entanglement~\cite{beigi2011simplified}. This fact will be central in proving our result. For more details on PBT, we refer the reader to~\Cref{app:pbt_dfn}.

We begin by extending this protocol to implement a multipartite measurement. 
Consider a measurement channel $\mathcal M$ (quantum-classical channel) on $n_{\mathrm{in}}$ input systems. In the following, when we say ``input party,'' we mean the party that holds the corresponding input quantum system.
We index the input parties by
\begin{equation*}
V_{\mathrm{in}}
=
\{i_1,\dots,i_{n_{\mathrm{in}}} \}.
\end{equation*}
For each input party \(i_a\), let
\(B_{i_a}\)
be a finite-dimensional Hilbert space for the corresponding input system.
The overall input to $\mathcal M$ is
\begin{equation*}
    B_{\mathrm{in}}
    \coloneqq
    B_{i_1} \otimes \dots \otimes B_{i_{n_{\mathrm{in}}}},
\end{equation*}
with dimension $d_{\mathrm{in}} = \dim(B_{\mathrm{in}})$.
Let \(\rho \in \mathcal L(B_{\mathrm{in}})\) be a quantum state, where $\mathcal L(\mathcal H)$ refers to the set of linear operators on a Hilbert space $\mathcal H$.
The measurement channel $\mathcal{ M }$ is 
of the form
\begin{equation*}
    \mathcal{M}(\rho)
    =
    \sum_{x\in X}
    \tr[M_x \rho]\,
    \lvert x\rangle\langle x\rvert,
\end{equation*}
where $X$ is the set of measurement results and $\{M_x\}_{x\in X}$ is a POVM.

Let $\rho_E \in \mathcal L(E)$ be a quantum state shared among the input parties, where \(E\) is a finite-dimensional Hilbert space, and each input party $i_a$ holds subsystem $E_{i_a}$:
\begin{equation*}
    E \coloneqq E_{i_1} \otimes \dots \otimes E_{i_{n_{\mathrm{in}}}}.
\end{equation*}
For $i_a \in V_\mathrm{in}$ let
\begin{equation*}
\{
M_{i_a, x_{i_a}}
\}_{x_{i_a}\in X_{i_a}}
\end{equation*}
be a POVM implemented by input party $i_a$ on $B_{i_a} \otimes E_{i_a}$
where $X_{i_a}$ is the set of possible measurement outcomes. 
Let $f$ be a classical post-processing function on the joint measurement results 
\begin{equation*}
   f: 
X_{i_1}\times\cdots\times X_{i_{n_{\mathrm{in}}}}
\rightarrow X.
\end{equation*}
We have the following result:
\begin{lem}
\label{lem:measurement_channel_reduction}
    Let $n_\mathrm{in} \in \mathbb{Z}^+$ and $n_\mathrm{in} \geq 2$. For any joint measurement channel $\mathcal{M}$ on $n_\mathrm{in}$ input parties as defined above, and any $\varepsilon >0$, there exists a quantum state $\rho_E$ defined on a Hilbert space of dimension $d_\varepsilon$, 
    with
\begin{equation*}
\begin{aligned}
&d_\varepsilon
\le d_{\mathrm{in}}^{2 + 2\sum_{a=1}^{n_\mathrm{in}-1} \left(\prod_{r=1}^{a} n_r \right) },\\
    &n_a = 2(n_{\mathrm{in}}-1)d_{\mathrm{in}}^{2\prod_{r=1}^{a-1} n_r }m_{\varepsilon}, 
\end{aligned}
\end{equation*}
where the empty product is equal to $1$ and $m_\varepsilon \coloneqq \lceil 1/\varepsilon\rceil$, as well as
POVMs $\{M_{i_a,x_{i_a}}\}_{x_{i_a}}$, and a function $f$, 
such that 
\begin{equation*}
    \| \mathcal{E}_M - \mathcal{M}\|_\diamond <\varepsilon, 
\end{equation*}
$\mathcal E_M$ being defined by
\begin{equation}
    \label{eq:op_single_round_meas}
        \mathcal{E}_M(\rho) \coloneqq  \sum_{\substack{x_{i_a} \\ i_a \in V_\mathrm{in}}}   
    \tr \left[\left(
    \bigotimes_{i_a\in V_\mathrm{in}}   M_{i_a,x_{i_a}}
    \right) (\rho \otimes \rho_E)\right]\vert  f(\vec x) \rangle\langle f(\vec x)\vert
\end{equation}
and $\vec x \coloneqq (x_{i_a})_{i_a\in V_{\mathrm{in}}}.$
\end{lem}
\begin{proof}
    The detailed proof is given in Appendix~\ref{app:box_reduction_proof}. 
\end{proof}

We can then use this result to show that an isolated bidirected clique can be aggregated in the quantum case as well. However, this will be an approximate result. To make this more precise, we make the following definition.
\begin{dfn}
    Let $T$ be a subset of behaviors. Then, the closure of $T$, denoted by $\overline{T}$, is defined by endowing the set of behaviors with a metric space structure, where the metric is induced by the 1-norm:
    \begin{align*}
        \Vert p-q\Vert_1 \coloneqq \sum_{\textbf{a},\textbf{s}}\vert p(\textbf a\vert \textbf s) - q(\textbf a \vert \textbf s)\vert.
    \end{align*}
\end{dfn}
\noindent We now state the theorem.
\begin{thm}
\label{thm:aggregable}
    Let $\mathcal G = (\mathcal V, \pi, G)$ be an aggregable LC game with isolated bidirected clique $C \subseteq G$ and let $\mathcal G_\mathrm{agg}$ be the corresponding LC game after aggregation.
    Then, let $\mathcal Q_C$ be the set of quantum behaviors for $\mathcal G$ after aggregating inputs and outputs in $C$ and $\mathcal Q_\mathrm{agg}$ be the set of quantum behaviors for $\mathcal G_\mathrm{agg}$. Then,
    \begin{align*}
        \mathcal Q_C \subseteq \mathcal Q_\mathrm{agg} \subseteq \overline{\mathcal Q_C}.
    \end{align*}
    In particular, 
    $$\omega_q(\mathcal G) = \omega_q( \mathcal G_\mathrm{agg}).$$
 \end{thm}
\begin{proof}
    Suppose $p(\mathbf a \vert \mathbf s)$ is a quantum behavior for $\mathcal G_\mathrm{agg}$. Let $\mathbf s_C = (s_i)_{v_i \in V_C}$ be the input of $v_C$. By definition, there exists a quantum strategy    
    $$
        (\vert \psi \rangle, \{W_i(s_i)\}, \{M_i\}),
    $$
    that realizes $p(\mathbf a \vert \mathbf s)$. 
    Now, since $v_C$ is isolated, its input-dependent isometry followed by a fixed projective measurement can be replaced by an ancillary state preparation followed by an input-dependent measurement, after enlarging the local Hilbert space if necessary (see~\Cref{Eq:iso_measure_and_allocate}).
    In this quantum strategy, party $v_C$ performs a measurement $M_{B_C}(\mathbf s_C) = \{M_k^{\mathbf s_C}\}_k$ on subsystem $B_C$ of the state $\vert \psi\rangle$ depending on the input $\mathbf{s}_C$. Equivalently, the aggregate party $C$ can first prepare state $|\mathbf s_C\rangle\langle \mathbf s_C|_{B_A} \coloneqq \bigotimes_{s_i\in \mathbf s_C} |s_i\rangle\langle s_i|_{B_{A_i}}$ in an ancillary system 
    $$B_A \coloneqq \bigotimes_{v_i \in V_C} B_{A_i},$$
    where $B_{A_i}$ are classical systems of dimension $\vert S_i\vert$.
    Then he performs the input-independent measurement 
    \begin{equation}
    \label{eq:joint_M}
        M_C = \{\sum_{\mathbf s_C \in S_{v_C}} |\mathbf s_C\rangle\langle \mathbf s_C|_{B_A} \otimes M_k^{\mathbf s_C} \}_{k}.
    \end{equation}
    Hence the quantum strategy 
    $$(\vert \psi \rangle , \{W_i(s_i)\}_{v_i \neq v_C} \cup\{ \mathrm{App}(\mathbf s_C)\}, \{M_i\}_{v_i \neq v_C} \cup \{M_C\}),$$
    where $\mathrm{App}(\mathbf s_C)$ appends the state $\vert \mathbf s_C\rangle$,
    realizes the same behavior. 
    
    For the original LC game $\mathcal G$, suppose that the parties outside $C$ execute the same strategy as in the aggregated game.
    Meanwhile, for the parties in $C$, each party $i$ prepares the subsystem $B_{A_i}$ of the state $\ketbra{\mathbf s_C}$ and some party $v_0 \in V_C$ also holds the subsystem $B_C$ of $\vert \psi\rangle$.     
    By~\Cref{lem:measurement_channel_reduction},
    for any $\varepsilon >0$, the parties in $C$ can simulate the measurement $M_C$ where the $n_\mathrm{in} = \abs{V_C}$ input parties hold the quantum systems $B_{A_i}$, with one also holding $B_C$,
    with a channel $\mathcal{E}_C$ of the form in~\Cref{eq:op_single_round_meas} such that
    $$
        \|    
        \mathcal{E}_C  - \mathcal{M}_C 
        \|_\diamond < \frac{\varepsilon }{\prod_{i=1}^{n} \vert S_i\vert},
    $$
    where $\mathcal M_C$ is the measurement channel corresponding to joint measurement $M_C$ in~\Cref{eq:joint_M}. 
    Now, $\vec x$ is the combination of the results of all the parties' measurements $M_{i_a, x_{i_a}}$ in $\mathcal E_C$. Since $C$ is a bidirected clique, every party in $C$ can access all of $\vec x$ during the communication round. Now, in our case the set of measurement outcomes of $M_C$ is 
    $$X = A_{v_C} = \prod_{v_i \in V_C} A_i.$$
    Each party can then output his corresponding part of $f(\vec x) \in X$.
    
    Let $p_{\mathcal{E}_C}(\mathbf{a}\vert\mathbf{s})$ be the behavior realized by this strategy for $\mathcal{G}$,    
    and $p_{\mathrm{agg}}(\mathbf{a}\vert\mathbf{s})$ be the behavior realized by the original strategy for $\mathcal{G}_{\mathrm{agg}}$.
    By the relationship between the diamond norm and the trace norm, we have  
    \begin{equation*}
    \begin{aligned}
            &\sum_{\mathbf a} |p_{\mathrm{agg}}(\mathbf{a}\vert\mathbf{s})- p_{\mathcal{E}_C}(\mathbf a \vert \mathbf s) | \\
           \le &\left   \| \left[ \mathcal I_{\{B_i: v_i \notin C\}} \otimes \left(   
        \mathcal{E}_C - \mathcal{M}_C \right) \right] \left( |\mathbf s_C\rangle\langle \mathbf s_C|\otimes |\psi\rangle\langle \psi|\right)
        \right \|_1  \\
        \le & \|\mathcal{E}_C - \mathcal{M}_C \|_\diamond \\
        <& \frac{\varepsilon }{\prod_{i=1}^n \vert S_i\vert},
    \end{aligned}
    \end{equation*}
    which holds for all inputs $\mathbf s$. Thus, 
    $$\Vert p_{\mathrm{agg}} - p_{\mathcal{E}_C} \Vert_1  = \sum_{\textbf a, \textbf s} \vert p_{\mathrm{agg}}(\textbf a \vert \textbf s) - p_{\mathcal{E}_C}(\textbf a \vert \textbf s) \vert < \varepsilon.$$
    Hence,
    $$ \mathcal Q_\mathrm{agg} \subseteq \overline{\mathcal Q_C}.$$

    Now, for any quantum strategy for $\mathcal G$, the quantum operations plus final measurement for the parties in $C$ can be expressed by a joint measurement. We let this joint measurement be part of the quantum strategy for $\mathcal G_\mathrm{agg}$, ceteris paribus. The realized quantum behavior is clearly exactly that of the quantum strategy for $\mathcal G$. Therefore,
    $$ \mathcal Q_C \subseteq \mathcal Q_\mathrm{agg}.$$
    
    Finally, since the quantum value as defined in~\Cref{eq:qvalue} is the supremum over all winning probabilities achievable by quantum behaviors, 
    \begin{align*}
        \mathcal Q_C \subseteq \mathcal Q_\mathrm{agg} \subseteq \overline{\mathcal Q_C}
    \end{align*}
    implies 
    \begin{align*}
        \omega_q(\mathcal G) = \omega_q(\mathcal G_\mathrm{agg})
    \end{align*}
    because of the continuity of the winning probability as a function of the behavior with respect to the 1-norm.
\end{proof}

Hence, using port-based teleportation, we can also aggregate parties in an isolated, bidirected clique in the quantum case just as in the classical case. However, this result is approximate and comes at a tremendous cost. As seen in the PBT protocol used in the proof of~\Cref{lem:measurement_channel_reduction}, the dimension of the entangled state used grows extremely quickly. Indeed, assuming
$$ \frac{\varepsilon}{2(n_\mathrm{in}-1)} <1,$$
taking $d_\mathrm{in}$ to be our asymptotic variable, we have 
\begin{align*}
    d_\varepsilon = \omega(^{n_\mathrm{in}} d_\mathrm{in} ),
\end{align*}
where $^n d$ denotes the $n$-fold tetration operation $d^{d^{\cdot^{\cdot^{d}}}}$ and we are using little $\omega$ notation.
Hence, in hardware implementations, it may be impractical to actually achieve the quantum value via this PBT-based approach.
Indeed, even if we do not use PBT, an amount of entanglement superlinear in the input state dimension is necessary to implement certain bipartite measurements via a measurement channel of the form in~\Cref{eq:op_single_round_meas}~\cite{beigi2011simplified}. 
Note however that a large amount of entanglement is not always necessary to achieve the quantum value of the aggregated game, as seen in the case of the distributed CHSH game.

\subsection{Forwarding strategies}
\label{subsec:forwarding_strategies}
We defined a quantum strategy for an LC game in~\Cref{dfn:graph_quantum}. Ostensibly, the core mathematical object that describes it is not the same as that of quantum strategies for conventional nonlocal games. However, it is worthwhile to check this statement more carefully. In particular, we want to check whether using the same mathematical object as in a quantum strategy for a nonlocal game, while allowing communication of inputs as in a classical strategy for an LC game, yields the same set of behaviors.

That is, we define a strategy where parties share inputs according to the connectivity graph and then make a measurement on a shared quantum state given the received information. This strategy is indeed consistent with~\Cref{dfn:graph_quantum}: if party $i$ wants to share his input $s_i$ to party $j$, he can simply use the isometry $W_i(s_i)$ which acts on a state $\vert \psi\rangle$ on $B_i$ by
\begin{align*}
    W_i(s_i) \vert \psi\rangle_{B_i} \coloneqq \vert \psi\rangle_{i\to i} \otimes \vert s_i\rangle_{B_{i\to j}},
\end{align*}
where $\vert s_i\rangle$ denotes classical information encoded in a fixed basis. Party $j$ performs a measurement on his quantum system conditioned on the classical information encoded in the fixed basis. This results in the strategy we described. In fact, we can give a new, simpler definition for this special class of quantum strategies.
\begin{dfn}
\label{dfn:forwarding}
    Let $(\mathcal V, \pi, G)$ be an LC game. Recall 
    \begin{align*}
        H_i \coloneqq \prod_{\substack{j\in N_\mathrm{in}[i]}} S_j
    \end{align*}
    is the past light cone of party $i$.
    Let $B_i$ be a finite-dimensional Hilbert space for $i \in [n]$. A \textbf{\boldmath forwarding strategy} is a tuple 
    \begin{align*}
        (\vert \psi \rangle, \{M_i(h_i)\}_{i \in [n], h_i \in H_i})
    \end{align*}
    where $\vert\psi\rangle \in \bigotimes_{i=1}^n B_i$ is a quantum state and for fixed $i \in [n]$ and $h_i \in H_i$, $M_i(h_i) \coloneqq \{\Pi_{i, a_i}(h_i)\}_{a_i \in A_i}$ is a projective measurement on $B_i$.
\end{dfn}
\noindent We choose the word ``forwarding'' because in such strategies, each party simply ``forwards'' his input to adjacent parties. 
Using the same notation as above, the realized behavior of a forwarding strategy, which we will call a \textbf{forwarding behavior}, is given by
\begin{align*}
    p(\mathbf a\vert \mathbf s) \coloneqq \Bigbra{\psi\,} \bigotimes_{i=1}^n \Pi_{i, a_i}(h_i) \Bigket{\,\psi},
\end{align*}
where 
\begin{align*}
    h_i \coloneqq \prod_{\substack{j \in N_\mathrm{in}[i]}} s_j.
\end{align*}
In~\Cref{fig:forwarding_strategy}, we show a forwarding strategy for the same graph as~\Cref{fig:g-quantum}: $v_1 \leftrightarrows v_2 \leftrightarrows v_3$.
\begin{figure}
  \centering

  \begin{tikzpicture}[scale=.9]
    \pgfmathsetlengthmacro{\xstep}{3.6cm}
    \pgfmathsetlengthmacro{\ystep}{2.7cm}

    \pic (box11) at (0,2*\ystep) {tensorbox={$M_{1}$}{}{}};
    \pic (box12) at (0,1*\ystep) {tensorbox={$M_{2}$}{}{}};
    \pic (box13) at (0,0*\ystep) {tensorbox={$M_{3}$}{}{}};

    \node[label=left:{\scriptsize $s_{1}, s_{2}$}] (s1) at (box11-inpin0) {};
    \node[label=left:{\scriptsize $s_{1}, s_{2}, s_{3}$}] (s2) at (box12-inpin0) {};
    \node[label=left:{\scriptsize $s_{2}, s_{3}$}] (s3) at (box13-inpin0) {};

    \node[label=right:{\scriptsize $a_{1}$}] (a1) at (box11-outpin0) {};
    \node[label=right:{\scriptsize $a_{2}$}] (a2) at (box12-outpin0) {};
    \node[label=right:{\scriptsize $a_{3}$}] (a3) at (box13-outpin0) {};

    \draw (box11-in1) -- ([xshift=-40pt]box11-inpin1);
    \draw (box12-in2) -- ([xshift=-40pt]box12-inpin2);
    \draw (box13-in3) -- ([xshift=-40pt]box13-inpin3);

    \draw [decorate,decoration={brace,amplitude=5pt,mirror}]
    ($(box11-inpin1)+(-1.8,0)$) -- ($(box13-inpin3)+(-1.8,0)$)
    node[midway,xshift=-15pt] {$\ket{\psi}$};
  \end{tikzpicture}
  \caption{A forwarding strategy, where $G$ is the graph
    $v_1 \leftrightarrows v_2 \leftrightarrows v_3$.}\label{fig:forwarding_strategy}
\end{figure}

Now, it is clear that the mathematical object describing a forwarding strategy is exactly that of a quantum strategy for a conventional nonlocal game. This is exactly analogous to classical strategies for LC games.
We give the following proposition to formalize this connection.
\begin{prp}
\label{prp:forwarding_is_q}
    Let $(\mathcal V, \pi, G)$ be an LC game and let 
    $$(\vert \psi \rangle, \{M_i(h_i)\}_{i \in [n], h_i \in H_i})$$
    be a forwarding strategy.
    Then, the winning probability realized by this strategy is equal to that of a quantum strategy described by the same tuple for the nonlocal game $(\mathcal V', \pi')$, where
    \begin{align*}
        \mathcal V' : & \prod_{i=1}^n A_i \times  \prod_{i=1}^n H_i  \to [0,1]\\
        \mathcal V'(\mathbf a \vert h_1, h_2, \ldots, h_n) & \coloneqq \mathcal V(\mathbf a \vert h_1(1), h_2(2), \ldots, h_n(n)),
    \end{align*}
    where for $j \in N_\mathrm{in}[i]$, $h_i(j)$ denotes the component of
    $$h_i \in H_i\coloneqq \prod_{\substack{k \in N_\mathrm{in}[i]}} S_k$$ corresponding to $S_j$, 
    and
    \begin{align*}
        &\pi' : \prod_{i=1}^n H_i  \to  [0,1] \\
        \pi'(h_1, h_2, \ldots, h_n) \coloneqq \pi&(h_1(1), h_2(2), \ldots, h_n(n))  \prod_{(i, j) \in E} \delta_{h_i(i), h_j(i)}.
    \end{align*}
\end{prp}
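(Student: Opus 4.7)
The plan is to unfold the two winning-probability expressions and match them term-by-term using a bijection between input assignments $\mathbf{s} \in \prod_{i} S_{i}$ of the LC game and tuples $\mathbf{h} = (h_{1}, \ldots, h_{n})$ in the support of $\pi'$.

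First I would write out the forwarding winning probability on the LC game side,
\begin{equation*}
    p_{\text{win}}^{\text{LC}} = \sum_{\mathbf{s}} \sum_{\mathbf{a}} \pi(\mathbf{s})\, \mathcal{V}(\mathbf{a} \vert \mathbf{s})\, \Bigbra{\psi} \bigotimes_{i=1}^{n} \Pi_{i, a_{i}}(h_{i}(\mathbf{s})) \Bigket{\psi},
\end{equation*}
where $h_{i}(\mathbf{s}) \in H_{i}$ denotes the tuple whose $j$-th entry is $s_{j}$ for each $j \in N_{\text{in}}[i]$, exactly as in the definition of a forwarding strategy. In parallel I would write the quantum winning probability of the tuple $(\ket{\psi}, \{M_{i}(h_{i})\})$, now viewed as a conventional nonlocal game strategy for $(\mathcal{V}', \pi')$:
\begin{equation*}
    p_{\text{win}}^{\text{NL}} = \sum_{\mathbf{h}} \sum_{\mathbf{a}} \pi'(\mathbf{h})\, \mathcal{V}'(\mathbf{a} \vert \mathbf{h})\, \Bigbra{\psi} \bigotimes_{i=1}^{n} \Pi_{i, a_{i}}(h_{i}) \Bigket{\psi}.
\end{equation*}

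The core step is to argue that the product of Kronecker deltas in the definition of $\pi'$ restricts the second sum to a set of $\mathbf{h}$'s in natural bijection with the $\mathbf{s}$'s of the first sum. Concretely, $\pi'(\mathbf{h})$ vanishes unless $h_{j}(i) = h_{i}(i)$ for every edge $(i,j) \in E$. Setting $s_{i} := h_{i}(i)$ (the self-entry, which is well defined because $i \in N_{\text{in}}[i]$), the edge conditions propagate to show $h_{i}(j) = s_{j}$ for every $j \in N_{\text{in}}[i]$, i.e.\ $h_{i} = h_{i}(\mathbf{s})$ for all $i$. Conversely, every $\mathbf{s}$ determines a consistent tuple $h_{i}(\mathbf{s})$, and this gives the claimed bijection.

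Under this bijection, $\pi'(\mathbf{h}) = \pi(h_{1}(1), \ldots, h_{n}(n)) = \pi(\mathbf{s})$ and $\mathcal{V}'(\mathbf{a} \vert \mathbf{h}) = \mathcal{V}(\mathbf{a} \vert \mathbf{s})$ straight from their definitions, while the measurement term $\Pi_{i, a_{i}}(h_{i})$ coincides with $\Pi_{i, a_{i}}(h_{i}(\mathbf{s}))$ by construction. Substituting and relabeling the dummy index then identifies the two sums. There is no real obstacle here; it is essentially a bookkeeping argument and the only part requiring care is parsing the indexing in $\pi'$, namely verifying that $h_{i}(i)$ encodes party $i$'s own input $s_{i}$ and that the deltas $\delta_{h_{i}(i), h_{j}(i)}$ enforce agreement across all parties who receive $s_{i}$ through the connectivity graph.
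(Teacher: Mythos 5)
Your proposal is correct and follows the same route as the paper, which simply notes that writing out the two winning probabilities shows they coincide; you have just made the bookkeeping explicit, correctly identifying that the deltas $\delta_{h_i(i),h_j(i)}$ force every consistent tuple $\mathbf h$ to be of the form $h_i = h_i(\mathbf s)$ with $s_i \coloneqq h_i(i)$, giving the term-by-term match.
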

\noindent In words, the equivalent nonlocal game takes as inputs for each party from the set $H_i$ and the input distribution identifies all copies of $S_i$ across different $H_i$'s. We can see this pictorially in~\Cref{fig:forwarding_strategy} where we are now considering the multiple input components for one party as one large input.
\begin{proof}
    We can see this directly by writing down the two winning probabilities and finding them to be exactly the same.
\end{proof}

We now ask the question that motivated the definition of forwarding strategies. That is,
given an LC game, let $\mathcal Q$ be the set of all quantum behaviors and $\mathcal {FW}$ be the set of all forwarding behaviors. Is $\mathcal {FW}$ equal to $\mathcal Q$? It's obvious that $\mathcal {FW} \subseteq \mathcal Q$ since forwarding strategies are special cases of quantum strategies. Furthermore, each of these sets is convex (\cite{pitowsky1986range} and~\Cref{prp_convex_LC}) and only depend on the input sets $S_i$, output sets $A_i$, and connectivity graph $G$. Interestingly, we ultimately find that these two sets are not always equal, \emph{establishing that quantum strategies for LC games does indeed involve new mathematics}. 

Consider the case of three parties and consider again the connectivity graph $G$: 
\begin{align}
\label{eq:a-bc}
    v_1 \leftrightarrows v_2 \quad v_3.
\end{align}
At a high level, our counterexample is the following: Let $ v_1, v_3 $ play the CHSH game. Then, using a quantum strategy, $ v_2 $ can simply output the measurement results of $v_1$ using the isometry in~\Cref{Eq:iso_measure_and_allocate}. A forwarding strategy cannot do this. This is because the behavior of $ v_1, v_3 $ self-tests the quantum state that $ v_1, v_3 $ are using, and therefore by monogamy of entanglement, $ v_2 $ cannot possibly know the measurement result of $v_1$. We state our result as a theorem and give a formal proof.
\begin{thm}
\label{thm:q_neq_f}
    There exist input sets $S_i$, output sets $A_i$, and a connectivity graph $G$ for which the corresponding sets of forwarding and quantum behaviors satisfy
    \begin{align*}
        \mathcal {FW} \subsetneq \mathcal Q.
    \end{align*}
\end{thm}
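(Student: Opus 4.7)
The plan is to exhibit a concrete three-party LC game whose quantum value strictly exceeds its forwarding value, using the high-level idea already sketched in the excerpt: let $v_1, v_3$ play CHSH while $v_2$ must match $v_1$'s output. Take the connectivity graph $G$ given by $v_1 \leftrightarrows v_2$ with $v_3$ isolated, input sets $S_1 = S_3 = \{0,1\}$ and $S_2 = \{0\}$ trivial, output sets $A_1 = A_2 = A_3 = \{0,1\}$, uniform input distribution $\pi$ on $S_1 \times S_3$, and the predicate
\begin{equation*}
    \mathcal V(a_1,a_2,a_3 \mid s_1,s_2,s_3) = \begin{cases} 1 & a_1 = a_2 \text{ and } a_1 \oplus a_3 = s_1 s_3, \\ 0 & \text{otherwise.} \end{cases}
\end{equation*}

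For the quantum lower bound, I would have $v_1$ and $v_3$ share $\ket{\Phi^+}$ and measure the canonical CHSH observables (a rotation of $X,Z$) on inputs $s_1, s_3$ respectively. On his side, $v_1$ implements these measurements via the isometry of \ref{Eq:iso_measure_and_allocate} with $j=2$, so that his outcome $a_1$ is copied both into $B_{1\to 1}$ and into the register $B_{1\to 2}$ sent to $v_2$. Party $v_2$ measures $B_{1\to 2}$ in the computational basis and outputs the recovered classical value. This guarantees $a_1 = a_2$ deterministically while the marginal on $(v_1, v_3)$ attains the Tsirelson bound $\cos^{2}(\pi/8)$, giving $\omega_q \geq \cos^{2}(\pi/8)$.

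For the forwarding upper bound, by \Cref{prp:forwarding_is_q} any forwarding strategy is equivalent to a quantum strategy for a three-party nonlocal game in which $v_2$ receives both $s_1$ and $s_2$ as input but makes an independent measurement on his local subsystem of $\ket\psi$. Suppose for contradiction some forwarding strategy achieves $\omega_f = \cos^{2}(\pi/8)$. Tracing out $v_2$ and $a_2$, the induced bipartite behavior on $v_1, v_3$ saturates the CHSH Tsirelson bound. By the rigidity (self-testing) of CHSH, there is a local isometry on $v_1, v_3$ under which $\rho_{13}$ is a maximally entangled pair tensored with ancillas and $v_1$'s observables are the canonical anticommuting Paulis; in particular $\rho_1 = I/2$ and $a_1$ is uniform conditioned on any $s_1$. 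By monogamy of entanglement, any tripartite pure extension of a maximally entangled $\rho_{13}$ forces $v_2$'s subsystem to decouple from $(v_1,v_3)$, so $v_2$'s outcome $a_2$ is independent of $a_1$. Then $\Pr[a_1 = a_2] \leq 1/2$, contradicting $\omega_f = \cos^{2}(\pi/8)$. Hence $\omega_f < \cos^{2}(\pi/8) \leq \omega_q$, and the inclusion is strict.

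The main obstacle is promoting the exact contradiction above to a quantitatively robust inequality $\omega_f \leq \cos^{2}(\pi/8) - \delta$ for some explicit $\delta > 0$. I would do this by combining a robust self-testing statement for CHSH (for instance the McKague–Yang–Scarani bound, where an $\varepsilon$-violation of the Tsirelson bound forces the state and measurements to be $O(\sqrt\varepsilon)$-close to the canonical ones) with a monogamy-of-entanglement estimate that bounds $\Pr[a_1 = a_2]$ away from $1$ by a function of that closeness parameter; optimizing the tradeoff between the CHSH score and the match probability yields a concrete strict gap. If a fully analytic quantitative bound proves unwieldy, one can instead invoke a low-level NPA SDP relaxation applied to the equivalent three-party nonlocal game of \Cref{prp:forwarding_is_q} to numerically certify $\omega_f < \cos^{2}(\pi/8)$, which is sufficient to establish the theorem.
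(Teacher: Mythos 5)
Your construction and both halves of the argument match the paper's proof: the same game (CHSH between $v_1,v_3$ with $v_2$ required to echo $a_1$, graph $v_1\leftrightarrows v_2$ with $v_3$ isolated), the same quantum strategy via the copying isometry \ref{Eq:iso_measure_and_allocate}, and the same impossibility argument for forwarding strategies via exact CHSH self-testing plus the resulting decoupling of $v_2$'s outcome. The one point to correct is your assessment of what remains to be done: the theorem asserts $\mathcal F \subsetneq \mathcal Q$ as sets of \emph{behaviors}, so it suffices to exhibit the single behavior $q$ (perfect agreement $a_1=a_2$ together with the optimal CHSH marginal) in $\mathcal Q\setminus\mathcal F$, and your exact contradiction already does this — any forwarding realization of $q$ would have marginal saturating Tsirelson, hence by (exact, not robust) rigidity $a_2$ would be independent of the uniformly distributed $a_1$, contradicting $\Pr[a_1=a_2]=1$. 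No quantitative $\delta$-gap, no robust self-testing, and certainly no numerical NPA certificate (which would not constitute a proof) is needed here; you have conflated the theorem with the stronger winning-probability separation $\omega_f<\omega_q$, which also requires care about whether the supremum $\omega_f$ is attained. The paper treats that value separation separately in \Cref{prop:sep_forwarding_quantum}, and rather than robustifying self-testing it invokes \Cref{prp:forwarding_is_q} together with Toner's monogamy result that the $1$-extension of CHSH has no-signaling (hence quantum) value $\tfrac34$, giving the clean statement $\omega_f=\tfrac34<\cos^2(\pi/8)=\omega_q$; if you do want the quantitative gap, that is the route to take rather than McKague--Yang--Scarani-type robustness estimates.
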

\begin{proof}
Let $G$ be the graph in~\Cref{eq:a-bc} with 3 vertices. Let 
$$S_1, S_3 = \{0,1\}$$
be the possible inputs of $v_1, v_3$ respectively, while $S_2$ is a singleton. Since there is only one possible input, we will suppress $s_2$ in our notation for conciseness.
Also, let 
$$A_1, A_2 , A_3 = \{0,1\} $$
be the possible outputs of $v_1, v_2 , v_3$, respectively. Consider the behavior 
$$ 
q(a_1, a_2, a_3 \vert s_1, s_3) \coloneqq 
\begin{cases}
p_\text{CHSH}(a_1, a_3 \vert s_1, s_3) & a_2 = a_1\\
0 & \text{otherwise}
\end{cases},
$$
where $ p_\text{CHSH}(a_1, a_3 \vert s_1, s_3) $ is the behavior of $ v_1, v_3 $ implementing an optimal quantum strategy for the CHSH game.

Then, a quantum strategy for an LC game can achieve this behavior where $ v_1, v_3 $ implement an optimal quantum strategy for the CHSH game, and $v_1$ sends his measurement result
$a_1$ to $ v_2 $. $ v_1, v_3 $ output their measurement results $ a_1, a_3 $, while $ v_2 $ outputs $a_1$. This clearly realizes the desired behavior.

We next show that no forwarding strategy can realize $ q(a_1, a_2, a_3 \vert s_1, s_3) $. We see that the marginal distribution $ q(a_1, a_3 \vert s_1, s_3) $ is simply $ p_\text{CHSH}(a_1, a_3\vert s_1, s_3) $. Now, this behavior self-tests for the maximally entangled state of two qubits $ \vert \Phi^+\rangle \coloneqq \frac{1}{\sqrt{2}}(\vert 00\rangle +\vert 11\rangle) $ as well as the corresponding measurements $ P $ in an optimal quantum strategy for the CHSH game. That is, if $q$ is a forwarding behavior:
$$ q(a_1, a_2, a_3 \vert s_1, s_3) = \langle\psi\vert_{B_1 B_2 B_3} \Pi_{1, a_1}(s_1) \otimes \Pi_{2, a_2}(s_1)\otimes \Pi_{3, a_3}(s_3) \vert \psi \rangle_{B_1 B_2 B_3}, $$
there exists local isometries $ V_{1}: B_1 \to B_1'\otimes \bar B_1 $ , $ V_{3}: B_3 \to B_3' \otimes \bar B_3  $ such that
\begin{align}
\label{eq:self-test}
    &(V_{1} \otimes I_{B_2} \otimes V_{3}) (\Pi_{1, a_1}(s_1) \otimes I_{B_2} \otimes \Pi_{3, a_3}(s_3)\vert\psi\rangle_{B_1 B_2 B_3}) \nonumber \\
    &= (P_{1,a_1}(s_1) \otimes P_{3,a_3}(s_3)\vert \Phi^+\rangle_{B_1'B_3'}) \otimes \vert\xi\rangle_{\bar B_1 B_2 \bar B_3},
\end{align} 
where $ \xi $ is some state. Hence,
\begin{align*}
& \langle\psi\vert \Pi_{1, a_1}(s_1) \otimes \Pi_{2, a_2}(s_1)\otimes \Pi_{3, a_3}(s_3)\vert \psi \rangle \\
&=
\langle\psi\vert (V_{1}^{\dagger} \otimes I_{B_2} \otimes  V_{3}^{\dagger})(V_{1} \otimes I_{B_2} \otimes V_{3})  (\Pi_{1, a_1}(s_1) \otimes \Pi_{2, a_2}(s_1)\otimes \Pi_{3, a_3}(s_3))\vert \psi \rangle\\
& = \langle\Phi^+ \vert_{B_1' B_3'}\langle \xi\vert_{\bar B_1 B_2 \bar B_3} (P_{1,a_1}(s_1) \otimes P_{3,a_3}(s_3)\vert \Phi^+\rangle_{B_1'B_3'}) \otimes \Pi_{2, a_2}(s_1)\vert\xi\rangle_{\bar B_1 B_2 \bar B_3}\\
& = \langle\Phi^+ \vert_{B_1'B_3'} P_{1,a_1}(s_1) \otimes P_{3,a_3}(s_3)\vert \Phi^+\rangle_{B_1'B_3'} \langle \xi\vert_{\bar B_1 B_2 \bar B_3} \Pi_{2, a_2}(s_1)\vert\xi\rangle_{\bar B_1 B_2 \bar B_3},
\end{align*}
where $V^\dagger$ is the adjoint of $V$ and the second equality follows from~\Cref{eq:self-test} and from summing it over $a_1, a_3$. This implies that conditioned on $ s_1, s_3 $, the output of $ v_2
 $ is independent of the output of $v_1$, a contradiction.
\end{proof}
\noindent In fact, for any connectivity graph $G = (V,E)$ that is not weakly connected and not empty, $\mathcal {FW} \subsetneq \mathcal Q$ for an appropriate choice of sets $A_i, S_i$. This is because we can always find a triple of vertices $v_1, v_2, v_3$ such that $(v_1, v_2)\in E$ but $v_3$ is not weakly connected to either $v_1$ or $v_2$. We can then define the LC game where all parties that are not $v_1, v_3$ have singleton input sets, $v_1, v_3$ have 1-bit inputs, $v_1, v_2, v_3$ have 1-bit outputs, and all other parties have singleton output sets. That is, effectively all the parties other than $v_1,v_2,v_3$ are irrelevant. 
By the same argument as above, $\mathcal {FW} \subsetneq \mathcal Q$.

The gap in realizable behaviors proved in~\Cref{thm:q_neq_f} also translates to a gap in achievable winning probabilities for the corresponding LC game. We consider the LC game $\mathcal G=(\mathcal V, \pi, G)$ where
\begin{align*}
\mathcal V(a_1, a_2, a_3 \vert s_1, s_3) \coloneqq \mathcal V_\text{CHSH}(a_1, a_3 \vert s_1, s_3) \cdot [a_2 = a_1],
\end{align*}
where $\mathcal V_\mathrm{CHSH}$ is the winning condition for the CHSH game, $[\cdot = \cdot]$ is an indicator function,
and $G$ is given by~\Cref{eq:a-bc}. We also assume a uniform input distribution $\pi(s_1,s_3) \coloneqq \frac 1 4$. We will call this the \textbf{extended CHSH game} following~\cite{toner2009monogamy}, but as an LC game with partial communication allowed. We show the extended CHSH game schematically in~\Cref{fig:ext_chsh}.
\begin{figure}[htbp!]
  \centering
  \begin{tikzpicture}[node distance=1.7cm,
    player/.style = {draw, circle, minimum width=.9cm, fill=gray!15},
    doublearrow/.style = {double arrow, draw, inner sep=0pt,
      minimum height=1.2cm,
      minimum width=4mm,
      double arrow head extend=.3pt,
    },
    singlearrow/.style = { single arrow, draw, inner sep=0pt,
      minimum height=.7cm,
      minimum width=4mm,
      single arrow head extend=.3pt,
      rotate=-90}]

    \node[player] (P1) at (0,0) {$1$};
    \node[player, right=of P1] (P2) {$2$};
    \node[player, right=of P2] (P3) {$3$};

    \node[doublearrow] at ($(P1)!.5!(P2)$) {};

    \node[singlearrow] at ([yshift=-1cm]P1) {};
    \node[singlearrow] at ([yshift=1cm]P1) {};
    \node[singlearrow] at ([yshift=-1cm]P2) {};
    \node[singlearrow] at ([yshift=-1cm]P3) {};
    \node[singlearrow] at ([yshift=1cm]P3) {};

    \node at ([yshift=-1.7cm]P1) {$a_{1}$};
    \node at ([yshift=-1.7cm]P2) {$a_{2}$};
    \node at ([yshift=-1.7cm]P3) {$a_{3}$};

    \node at ([yshift=1.7cm]P1) {$s_{1}$};
    \node at ([yshift=1.7cm]P3) {$s_{3}$};

    \node at ([yshift=-2.3cm]P2) {$a_{1} = a_{2}$ and $a_{1} \oplus a_{3} = s_{1} \land s_{3}$.};
  \end{tikzpicture}
  \caption{The extended CHSH game.}\label{fig:ext_chsh}
\end{figure}
We denote the maximum achievable winning probabilities of classical, quantum, and forwarding strategies as $\omega_c, \omega_q, \omega_f$, respectively.
Then, it is clear that $\omega_q \geq \cos^2(\frac \pi  8)$ by simply letting $ v_1, v_3 $ implement an optimal quantum strategy for the CHSH game and then $v_1$ sends his output to $ v_2 $, who simply outputs exactly that. This actually achieves the quantum value by the same argument as for the distributed CHSH game. That is, 
$$\omega_q= \cos^2 (\frac \pi 8).$$

However, it is easy to see that $\omega_f < \cos^2(\frac \pi 8)$ again via self-testing. 
In fact, we can actually prove a stronger result: 
$$\omega_f = \frac 3 4,$$
the classical value of the CHSH game. To see this, let
\begin{align*}
    p(a_1, a_2, a_3 \vert s_1, s_3) = \langle\psi\vert\Pi_{1, a_1}(s_1) \otimes \Pi_{2, a_2}(s_1)\otimes \Pi_{3, a_3}(s_3)\vert \psi \rangle
\end{align*}
be a forwarding behavior. 
By~\Cref{prp:forwarding_is_q},
$$\omega_f(\mathcal G) = \omega_q(\mathcal G_1),$$
where $\mathcal G_1$ is a nonlocal game, namely the $1$-extension~\cite{toner2009monogamy} of the CHSH game. By Theorem 2.1 in~\cite{toner2009monogamy}, the classical and no-signaling values $\omega_c(\mathcal G_1), \omega_\mathrm{ns}(\mathcal G_1)$ are equal to the classical value of the CHSH game, $\frac 3 4$. Thus, $\omega_q(\mathcal G_1) = \frac 3 4$. We conclude $\omega_f(\mathcal G) = \frac 3 4$ as desired. We can summarize this result with a proposition. 
\begin{prp}
\label{prop:sep_forwarding_quantum}
    Let the LC game $\mathcal G$ be defined as above. Then, 
    $$\omega_q(\mathcal G) = \cos^2(\frac \pi 8) > \omega_f(\mathcal G) = \frac 3 4.$$
\end{prp}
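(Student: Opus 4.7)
The plan is to prove the two equalities separately. For $\omega_q(\mathcal{G}) = \cos^2(\pi/8)$, I would handle the lower bound by exhibiting an explicit quantum strategy: have $v_1$ and $v_3$ share $\ket{\Phi^+}$ and carry out an optimal CHSH strategy on their inputs $s_1, s_3$; then, using the single round of communication along the edge $v_1 \to v_2$ permitted by the connectivity graph, have $v_1$ forward his measurement outcome $a_1$ to $v_2$, who simply outputs it. The predicate $[a_1 = a_2]$ is satisfied with certainty, and the CHSH condition $a_1 \oplus a_3 = s_1 \wedge s_3$ is satisfied with probability $\cos^2(\pi/8)$. For the matching upper bound, I would invoke the standard ``aggregation'' argument already used in the distributed CHSH analysis: since $v_1 \leftrightarrows v_2$, in any quantum strategy we can collapse $v_1, v_2$ into a single party with input $(s_1, s_2)$ (here $s_2$ is trivial) and output $(a_1, a_2)$, which can only increase the winning probability. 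The resulting two-party game is exactly the standard CHSH game between the aggregated party and $v_3$, whose quantum value is $\cos^2(\pi/8)$ by Tsirelson's bound.

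For $\omega_f(\mathcal{G}) = 3/4$, I would first apply \Cref{prp:forwarding_is_q} to rewrite $\omega_f(\mathcal{G})$ as $\omega_q(\mathcal{G}_1)$, where $\mathcal{G}_1$ is the conventional nonlocal game whose inputs to each party are elements of $H_i$ (with the input distribution enforcing the forwarding consistency constraints via Kronecker deltas). Unpacking the past light cones for the graph \ref{eq:a-bc}: $v_1$ receives $s_1$, $v_2$ receives $(s_1, s_2) = s_1$ (since $S_2$ is trivial), and $v_3$ receives $s_3$. Thus $\mathcal{G}_1$ is the three-player nonlocal game in which $v_1$ and $v_2$ share the same CHSH input $s_1$ (independently drawn from $v_3$'s input $s_3$) and must jointly produce equal outputs satisfying CHSH against $v_3$. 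This is precisely the $1$-extension of the CHSH game in the sense of \cite{toner2009monogamy}.

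I would then invoke Theorem 2.1 of \cite{toner2009monogamy}, which by monogamy of CHSH correlations gives $\omega_{\mathrm{ns}}(\mathcal{G}_1) = 3/4$. Since $\omega_c \leq \omega_q \leq \omega_{\mathrm{ns}}$ and the trivial constant-zero strategy attains $3/4$ (as the CHSH condition $a_1 \oplus a_3 = s_1 \wedge s_3$ is satisfied on $3/4$ of inputs), we conclude $\omega_q(\mathcal{G}_1) = 3/4$, hence $\omega_f(\mathcal{G}) = 3/4$. Combining the two halves yields the strict separation $\omega_q(\mathcal{G}) = \cos^2(\pi/8) > 3/4 = \omega_f(\mathcal{G})$.

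The main obstacle, and the conceptual content of the argument, is the upper bound on $\omega_f$. The aggregation trick used for $\omega_q$ does not work here because a forwarding strategy cannot use the communication link to transmit quantum information or even measurement outcomes; it only forwards classical inputs. Instead, one must translate the forwarding-strategy framework into a genuine nonlocal game and then exploit a monogamy-of-entanglement bound. If one wanted to avoid citing \cite{toner2009monogamy} directly, the natural alternative is a self-testing argument: the marginal on $(v_1, v_3)$ achieving $\cos^2(\pi/8)$ would self-test $\ket{\Phi^+}$, forcing $v_2$'s state to factor out by monogamy, so that $v_2$'s output is independent of $v_1$'s and the constraint $a_1 = a_2$ can hold on at most a classical-CHSH fraction of inputs. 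Either route reduces to the same monogamy principle, which is the heart of the separation.
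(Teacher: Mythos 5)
Your proposal is correct and follows essentially the same route as the paper: the lower bound via CHSH between $v_1,v_3$ with $v_1$ forwarding $a_1$ to $v_2$, the matching upper bound by aggregating $v_1,v_2$ into one party playing CHSH against $v_3$, and the bound $\omega_f = 3/4$ by using \Cref{prp:forwarding_is_q} to identify $\omega_f(\mathcal G)$ with the quantum value of the $1$-extension of CHSH and then invoking Theorem 2.1 of~\cite{toner2009monogamy} together with $\omega_c \leq \omega_q \leq \omega_{\mathrm{ns}}$. The self-testing alternative you sketch is also the fallback the paper mentions, so no substantive difference remains.
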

\noindent In \Cref{app:extendedXOR} we observe similar separations for extended versions of other XOR games.

\subsection{Broadcasting strategies}
\label{sec:broadcasting}
A natural dual to forwarding strategies is a class of strategies in which parties perform quantum measurements prior to the transmission instead of after the transmission. This is indeed a quantum strategy, the $W_i(s_i)$ isometries being given by~\Cref{Eq:iso_measure_and_allocate}.
Like forwarding strategies, we make a new and simpler definition for this special class of quantum strategies:
\begin{dfn}
    Let $(\mathcal V, \pi, G)$ be an LC game. Let $A_i'$ be a finite set for $i \in [n]$ and define 
    \begin{align*}
        L_i \coloneqq \prod_{j \in N_\mathrm{in}[i]} A_j'.
    \end{align*}
    Let $B_i$ be finite-dimensional Hilbert spaces for $i \in [n]$.
    Then, a \textbf{broadcasting strategy} is a tuple 
    \begin{align*}
        (\vert \psi \rangle, \{M_i(s_i)\}_{i \in [n], s_i \in S_i}, \{f_i\}_{i \in [n]}),
    \end{align*}
    where $\vert \psi \rangle \in \bigotimes_{i=1}^n B_i$ is a quantum state, $M_i(s_i) \coloneqq \{\Pi_{a_i'}(s_i)\}_{a_i' \in A_i'}$ is a projective measurement on $B_i$ with outcomes in $A_i'$ for all $s_i \in S_i$, and 
    \begin{align*}
        f_i: L_i \to A_i
    \end{align*}
    is a function.
\end{dfn}
\noindent In words, each party receives his own input, performs a corresponding measurement on his share of the quantum state, shares the measurement outcome during communication, and produces his output based on the shared measurement outcomes. That is, the behavior of a broadcasting strategy is given by
\begin{align*}
    p(\mathbf a \vert \mathbf s) \coloneqq \sum_{a_i' \in A_i'} \langle \psi \vert \bigotimes_{i=1}^n \Pi_{a_i'}(s_i) \vert \psi \rangle \cdot \prod_{i=1}^n \delta_{f_i(l_i), a_i},
\end{align*}
where 
$$l_i \coloneqq \prod_{j \in N_\mathrm{in}[i]} a_j'.$$
If the connectivity graph $G$ is empty, a broadcasting strategy where $A_i' =A_i$ and $f_i$ is the identity function is just a quantum strategy for a conventional nonlocal game. In~\Cref{fig:measurement-to-broadcasting} we show schematically a broadcasting strategy for the connectivity graph 
$v_1 \leftrightarrows v_2 \quad v_3.$
    \begin{figure}[htbp!]
      \centering
      \begin{tikzpicture}[scale=.75]
        \pgfmathsetlengthmacro{\xstep}{3.8cm}
        \pgfmathsetlengthmacro{\ystep}{2.75cm}
        \pgfmathsetlengthmacro{\arrowx}{1.75*\xstep}
        \pgfmathsetlengthmacro{\rightshift}{3.0*\xstep}




        \begin{scope}[xshift=0]
          \pic (box31) at (0,2*\ystep) {tensorboxin={$M_{1}(s_{1})$}{$s_{1}$}};
          \pic (box32) at (0,1*\ystep) {tensorboxin={$M_{2}(s_{2})$}{$s_{2}$}};
          \pic (box33) at (0,0*\ystep) {tensorboxin={$M_{3}(s_{3})$}{$s_{3}$}};
          \pic (box41) at (\xstep,2*\ystep) {tensorboxout={$f_{1}$}{$a_{1}$}};
          \pic (box42) at (\xstep,1*\ystep) {tensorboxout={$f_{2}$}{$a_{2}$}};
          \pic (box43) at (\xstep,0*\ystep) {tensorboxout={$f_{3}$}{$a_{3}$}};

          \draw (box31-in1) -- ([xshift=-18pt]box31-inpin1);
          \draw (box32-in2) -- ([xshift=-18pt]box32-inpin2);
          \draw (box33-in3) -- ([xshift=-18pt]box33-inpin3);
          \draw[wire/classical] (box31-out1) -- (box41-in1);
          \draw[wire/classical] (box31-out2) to[out=0,in=180] (box42-in1);
          \draw[wire/classical] (box32-out1) to[out=0,in=180] (box41-in2);
          \draw[wire/classical] (box32-out2) -- (box42-in2);
          \draw[wire/classical] (box33-out3) -- (box43-in3);
          \draw[decorate,decoration={brace,amplitude=5pt,mirror}]
            ($(box31-inpin1)+(-.8,0)$) -- ($(box33-inpin3)+(-.8,0)$)
            node[midway,xshift=-15pt] {$\ket{\psi}$};
          \node at (2,-1.3) {\scriptsize $\textbf{a}'$};
        \end{scope}
      \end{tikzpicture}
      \caption{A broadcasting strategy for the connectivity graph $v_1 \leftrightarrows v_2 \quad v_3$. Each party makes a local input-dependent measurement $M_i(s_i)$ and sends its outcome to the receiver parties that implement post-processing functions $f_j$ on the combined received message. Double wires denote the classical messages $\textbf{a}' \coloneqq (a_1',a_2',a_3')$ that are transmitted. }
      \label{fig:measurement-to-broadcasting}
    \end{figure}

Now, we notice that the counterexample given in~\Cref{thm:q_neq_f} is actually a broadcasting strategy. We then ask, defining $\mathcal {BC}$ as the set of all broadcasting behaviors, is $\mathcal {BC} = \mathcal Q$? Using a proof similar to that of~\Cref{thm:aggregable}, we can actually prove the following.
\begin{thm}
\label{thm:LC_to_broadcasting}
    Let $(\mathcal V, \pi, G)$ be an LC game with $n$ parties. Let $\mathcal {BC}$ be the set of all broadcasting behaviors and $\mathcal Q$ the set of all quantum behaviors. Then,
    \begin{align*}
        \mathcal Q \subseteq \overline{\mathcal{BC}}.
    \end{align*}
\end{thm}
\begin{proof}
    The result in~\Cref{lem:measurement_channel_reduction} allows us to reduce quantum strategies to broadcasting strategies by replacing joint measurements on transmitted quantum systems by local measurements followed by classical processing of the measurement results. 
    Consider a quantum strategy 
    \begin{equation*}
        (\vert \psi \rangle, \{W_i(s_i)\}_{i \in [n], s_i \in S_i}, \{M_i\}_{i=1}^n)
    \end{equation*}
    for the LC game. 
    Let 
    \begin{equation*}
        \rho(\mathbf{s}) \coloneqq \left( \bigotimes_{i=1}^n W_i(s_i)\right)\ket{\psi}\bra{\psi} \left( \bigotimes_{i=1}^n W_i^\dagger(s_i)\right) 
    \end{equation*}
    denote the state obtained after the parties apply their input-dependent local isometries.

    The parties need to apply the measurement 
    \begin{equation*}
             \bigotimes_{j=1}^n  M_j
    \end{equation*} 
    to obtain their outputs. 
    Let $\mathcal{M}_j$ be the measurement channel associated with the joint measurement $M_j$ of party $j$ with input parties $i \in N_\mathrm{in}[j]$ with input system $B_{i \to j}$, and define the overall measurement channel by 
    \begin{equation*} 
        \mathcal{M} \coloneqq \bigotimes_{j=1}^n \mathcal{M}_j.
    \end{equation*}
    The realized quantum behavior $p_q(a\vert s)$ is given by the diagonal coefficients of $\mathcal M(\rho(\mathbf s))$.

    Let $\varepsilon > 0$. According to Lemma~\ref{lem:measurement_channel_reduction}, for every $j\in [n]$,
    the measurement channel $\mathcal{M}_j$ can be approximated by $\mathcal{E}_{M_j}$ of the form in~\Cref{eq:op_single_round_meas}, satisfying
    \begin{equation*}
        \| \mathcal{E}_{M_j} - \mathcal M_j \|_\diamond < \frac{\varepsilon}{n \prod_{i=1}^n \vert S_i \vert},
    \end{equation*}
    The channel $\mathcal{E}_{M_j}$ consists of POVMs
    $\{M'_{i\rightarrow j}\}_{i\in N_{\mathrm{in}}[j]}$
    performed by the parties $i\in N_{\mathrm{in}}[j]$,
    followed by classical communication of the corresponding
    measurement outcomes
    $\{x_{i\rightarrow j}\}_{i\in N_{\mathrm{in}}[j]}$
    to party $j$, who then applies a post-processing function
    $f_j$ to these outcomes to determine the final output.

    In this way, the overall measurement channel $\mathcal M$ is approximated by $\mathcal{E}_M \coloneqq \bigotimes_{j=1}^n \mathcal{E}_{M_j}$. 
    The quantum channel $\mathcal{E}_M$ can also be decomposed as 
    \begin{equation*}
        \mathcal{E}_M = \left(\bigotimes_{j=1}^n \mathcal F_j\right) \circ \left(\bigotimes_{i=1}^n \mathcal{M}_i'\right).
    \end{equation*}
    where $\mathcal{M}_i'$ is the measurement channel associated with local POVM
    \begin{equation*}
        M_i' \coloneqq \bigotimes_{j\in N_{\mathrm{out}}[i]} M_{i\rightarrow j}'
    \end{equation*}
    performed by party $i$, and $\mathcal{F}_j$ is the classical channel corresponding to the post-processing function $f_j$. 

    When we replace $\mathcal M$ in a quantum strategy with $\mathcal{E}_M$, 
    the result of this replacement is 
    a broadcasting strategy given by the tuple
    \begin{equation*}
        (|\psi\rangle, \{M_i(s_i)\}_{i\in [n], s_i \in S_i}, \{f_i\circ g_i\}_{i\in [n]} ),
    \end{equation*}
    where $M_i(s_i)$ is implemented by first applying the local isometry $W_i(s_i)$, followed by the measurement channel $\mathcal{M}_i'$,    
    and $g_i$ is a function that only keeps the measurement results $x_{j \rightarrow i}$ from $M_{j \to i}'$ for $j \in N_\mathrm{in}[i]$.
    This strategy realizes a broadcasting behavior $p_b$. 
    We have by the contractivity of the diamond norm under composition
    \begin{equation*}
        \| \mathcal{E}_M- \mathcal M \|_\diamond < n \frac{\varepsilon}{n\prod_{i=1}^n \vert S_i\vert} = \frac{\varepsilon}{\prod_{i=1}^n \vert S_i\vert}.
    \end{equation*}
    
    Consequently, for every input tuple $\mathbf{s}$, we have 
    \begin{equation*}
        \sum_{\mathbf{a}} |p_b(\mathbf{a}\vert \mathbf{s}) - p_q(\mathbf{a}\vert \mathbf{s}) | = \| \mathcal E_M (\rho(\mathbf{s}))-  \mathcal M (\rho(\mathbf{s})) \|_1 \le  \| \mathcal{E}_M- \mathcal M \|_\diamond <\frac{\varepsilon}{\prod_{i=1}^n \vert S_i\vert}.
    \end{equation*}
    Hence, the behaviors $p_q \in \mathcal Q$ and $p_b \in \mathcal {BC}$ induced by these two channels satisfy 
    \begin{equation*}
        \Vert p_b - p_q \Vert_1 = \sum_{\mathbf a,\mathbf s} |p_b(\mathbf{a}|\mathbf{s})- p_q(\mathbf{a}|\mathbf{s}) | < \prod_{i=1}^n \vert S_i\vert \frac{\varepsilon}{\prod_{i=1}^n \vert S_i\vert} = \varepsilon.
    \end{equation*}
    That is, the quantum behavior $p_q$ can be approximated arbitrarily well in 1-norm by broadcasting behaviors $p_b$ of this LC game. Consequently, 
    \begin{align*}
        \mathcal Q \subseteq \overline{\mathcal{BC}}.
    \end{align*}
\end{proof}

\noindent Therefore, it is sufficient to use a broadcasting strategy to realize any quantum behavior to arbitrary accuracy. The key ingredient is port-based teleportation, so again, an unbounded amount of entanglement may be necessary to achieve vanishing error. It may be better in practice to use a non-broadcasting strategy to reduce the amount of entanglement used to implement the strategy.

\section{Multi-Step Latency-Constrained (LC) Games}\label{sec:tau_lc}
A latency-constrained game as defined in~\Cref{dfn:lc} is the simplest extension of a nonlocal game that arises from latency constraints. It allows for only a single round of communication between a subset of the parties. To disambiguate, we will refer to this type of LC game as a \textbf{simple latency-constrained game}. Now, in real-world scenarios, the parties can have multiple rounds of communication, and may even receive inputs and produce outputs over different rounds. An example is in HFT where trading servers at various stock exchanges around the globe are monitoring local market data, communicating, and making trades over an extended period of time. We would therefore like to define a multi-round extension of LC games. Now, communication between different parties may incur different latencies, for example due to different geographical distances. Hence, we need to introduce \emph{a notion of time}. That is, we want a formulation of LC games involving multiple time steps.
We first define for a non-negative integer $\tau$,  $[0:\tau] \coloneqq \{0,1, \cdots, \tau\}$.
\begin{dfn}
\label{dfn:tau_lc}
    Let $n \geq 2$, $\tau \geq 0$ be integers. Let $S_i^{(t)}, A_i^{(t)}$ be finite sets for $i \in [n], t \in [0:\tau]$, and
    \begin{align*}
        S^{(t)} \coloneqq\prod_{i =1}^n S_i^{(t)}, A^{(t)} \coloneqq \prod_{i=1}^n A_i^{(t)}.
    \end{align*}
    We define a probabilistic predicate $\mathcal V$ which is a map
    \begin{align*}
        \mathcal V: \prod_{t = 0}^\tau A^{(t)} \times \prod_{t=0}^\tau  S^{(t)} \to [0,1].
    \end{align*}
    Furthermore, we define the input distribution $\pi$ which is a probability distribution on $\prod_{t=0}^\tau S^{(t)}$. 
    Finally, let 
    \begin{align*}
        \ell: [n]\times [n] \to \mathbb{Z}^+
    \end{align*}
    be a function such that 
    \begin{align*}
        \forall i \in [n], \, \ell(i,i)=1.
    \end{align*}
    This will be called the \textbf{latency function}.
    Then, a \textbf{$\tau$-step latency-constrained (LC) game} is the tuple $(\mathcal V, \pi, \ell)$. 
\end{dfn} 
\noindent We will refer to this class of games as~\textbf{multi-step latency-constrained (LC) games}. Here, $t \in [0:\tau]$ specifies the time step. Each party is to receive an input and produce an output at each time step from $t=0$ to $t =\tau$. $S_i^{(t)}$ is the input set for party $i$ at time step $t$ and $A_i^{(t)}$ is the corresponding output set. The probabilistic predicate evaluates the total ``score'' of the combined outputs of all the parties over all time steps given the combined inputs. $\pi$ describes the probability distribution of the different inputs. Finally, the latency function $\ell$ encodes the latency information. That is, $\ell(i,j)$ is the number of time steps required for party $i$ to transmit information to party $j$. We will assume this is always positive.\footnote{If $\ell(i,j) = \ell(j,i)=0$ then party $i$ and $j$ should be treated as the same party. If one direction is positive and the other zero, defining a strategy becomes messy. }
Moreover, we define for all $i \in [n]$
$$\ell(i,i) = 1$$
to more conveniently define strategies for $\tau$-step LC games below. For the sake of generality we make no additional assumptions on the latency function. For example, it could be asymmetric or violate the triangle inequality\footnote{However, we will add the triangle inequality to obtain some of the results below. } and therefore is not a distance function. \emph{In particular, when $\ell$ is the speed-of-light delay, then the corresponding realizable behaviors reflect fundamental physical limits}. Note here we are assuming that local operations consume zero time. The only operation that consumes time is communication between the parties, which is given by the latency function $\ell$. As mentioned in~\Cref{sec:intro}, this is justified by allowing the parties to use more energy in order to arbitrarily shorten their local operation times. 
Our assumption is in general a good approximation to real-world scenarios where latencies are much longer than local operation times, for example when the parties are physically distant from each other.  
We can define a \textbf{behavior} $p(\mathbf a\vert \mathbf s)$ and corresponding \textbf{winning probability} $p_\mathrm{win}$ for a $\tau$-step LC game analogously to the simple LC game case in~\Cref{sec:lc}.

In~\Cref{dfn:tau_lc} we seem to be assuming that time is discrete and that receiving inputs, producing outputs, and communication are done in a synchronous manner. However, the definition is more general than it may seem.
The time values we want to include in the definition of the LC game are the communication latencies and the times at which the parties need to receive inputs and produce outputs. To model any physical implementation with finite-precision clocks, we assume all of these time values are rational numbers. Let the set of all time values be
\begin{align*}
    V_T \subseteq \mathbb{Q}.
\end{align*}
This set is finite. We can then set one time step as a length of time 
\begin{align}
\label{eq:time_increment}
    t_0 \coloneqq \frac{1}{\mathrm{lcm}\{q: p/q \in V_T, \mathrm{gcd}(p,q)=1\}},
\end{align}
where $\mathrm{lcm},\mathrm{gcd}$ are short for least common multiple and greatest common divsor, respectively. It is clear that all time values in $V_T$ are integer multiples of $t_0$. For cases where parties are receiving inputs, producing outputs, and communicating asynchronously, we can simply let $S_i^{(t)}$ or $A_i^{(t)}$ be singletons when there is no input or output, respectively, at time step $t$ for party $i$. The parties can also choose to communicate nothing at a particular time step. This will depend on the specific strategy they choose to implement, which is what we will next discuss. Note that a $0$-step LC game is just a nonlocal game. A $1$-step LC game where $S_i^{(1)},A_i^{(0)}$ are singletons
is a simple LC game. 

Our $\tau$-step LC games framework can be used in real-world scenarios by specifying the points of time at which inputs are received and outputs are produced for a set of spatially separated parties with corresponding communication latencies. This determines the input sets $S_i^{(t)}$ and output sets $A_i^{(t)}$, as well as the latency function $\ell$. The probabilistic predicate $\mathcal V$ then specifies what is the desired time-series of outputs given the time-series of inputs, while $\pi$ is the prior on the inputs. By optimizing the winning probability using classical strategies versus quantum strategies, we can determine whether a quantum advantage is possible. This framework is a lot more natural than that of~\cite{ding2024coordinating} because real-world scenarios often involve a sequence of inputs and outputs and can involve communication (but with latency). Hence, we expect our framework to be useful for identifying real-world applications of quantum nonlocality.

We also mention that our $\tau$-step LC games framework goes beyond conventional nonlocal games \emph{even in the two-party setting}. This is because the two parties receive multiple inputs and produce multiple outputs, and they can also communicate, albeit with latency. For example, this can be used to analyze real-world scenarios where outputs in the past can affect the game being played in the present. That is, the game exhibits a ``memory effect.'' This may be applicable to scenarios such as~\cite{da2025entanglement}, in which the routing decisions for previous customer requests can affect the routing logic of later requests.  

For general $\tau$-step LC games we will mostly make definitions and prove some simple properties. Future work will be to establish a more comprehensive theory of $\tau$-step LC games. Such a theory could possibly make use of mathematical tools from works such as~\cite{gutoski2007toward}.

\subsection{Classical strategies} 

As we argued in~\Cref{subsec:classical_strategy}, we can always copy classical information, so in the deterministic setting the only nontrivial information that the parties can transmit to each other are their inputs. Hence, we can define a deterministic strategy for a $\tau$-step LC game as follows.
\begin{dfn}
    Let $(\mathcal V, \pi, \ell)$ be a $\tau$-step LC game. 
    Let 
    \begin{align*}
        H_i^{(t)} \coloneqq \prod_{t' \leq t} S_i^{(t')}\times\prod_{\substack{j \neq i\\ \ell(j,i) \leq  t}}\prod_{t' =0}^{t-\ell(j,i)} S_j^{(t')}
    \end{align*}
    be the \textbf{past light cone of party $i$ at time step $t$}.
    A \textbf{\boldmath deterministic strategy} is given by functions $\{f_i^{(t)}\}_{i \in [n], t\in [0:\tau]}$, where
    \begin{align*}
        f_i^{(t)}: H_i^{(t)} \to A_i^{(t)}.
    \end{align*}
\end{dfn}
\noindent That is, in a deterministic strategy, each party produces an output at time step $t$ given all his past inputs and the inputs of all parties that can be sent to him by time step $t$.

The behavior of a deterministic strategy is simply
\begin{align*}
    p(\mathbf a\vert \mathbf s)\coloneqq \prod_{t=0}^\tau \prod_{i=1}^n \delta_{a_i^{(t)}, f_i^{(t)}(h_i^{(t)})},
\end{align*}
where 
$h_i^{(t)}$ is the element of $H_i^{(t)}$ made up of corresponding elements in $\mathbf s$. 
Using the same proof as~\Cref{prp:num_det}, we can establish the following.
\begin{prp}
    Given a $\tau$-step LC game $(\mathcal V, \pi, \ell)$, the number of possible deterministic behaviors is 
    \begin{align}
    \label{eq:num_det_tau}
    \prod_{t=0}^\tau \prod_{i=1}^n \vert A_i^{(t)}\vert^{\vert H_i^{(t)}\vert }.
    \end{align}
\end{prp}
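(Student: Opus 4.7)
The plan is to mirror the two-sided counting argument of~\Cref{prp:num_det} almost verbatim, since the multi-step setting only changes the indexing of the local functions from $i$ to pairs $(i,t)$. First I would establish the upper bound: a deterministic strategy is by definition a tuple $\{f_i^{(t)}\}_{i \in [n], t \in [0:\tau]}$ with $f_i^{(t)} : H_i^{(t)} \to A_i^{(t)}$, and the total number of such tuples is the product of the number of functions at each pair $(i,t)$, namely $\prod_{t=0}^\tau \prod_{i=1}^n |A_i^{(t)}|^{|H_i^{(t)}|}$. Since the map from strategies to behaviors is well-defined, the number of deterministic behaviors is at most this quantity.

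For the lower bound I would argue that distinct strategies yield distinct behaviors. Suppose $\{f_i^{(t)}\}$ and $\{g_i^{(t)}\}$ differ at some $(i_0, t_0)$ on some argument $h_0 \in H_{i_0}^{(t_0)}$, with $f_{i_0}^{(t_0)}(h_0) = a \ne a' = g_{i_0}^{(t_0)}(h_0)$. The key step is to exhibit a full input sequence $\mathbf{s} \in \prod_{t} S^{(t)}$ whose past-light-cone restriction at party $i_0$ and time $t_0$ equals $h_0$. This is possible because $H_{i_0}^{(t_0)}$ is simply a Cartesian product of components of the full input space, with no extra consistency constraints, so one can freely extend $h_0$ to a complete $\mathbf{s}$ by choosing arbitrary values for all remaining coordinates. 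Evaluating the two deterministic behaviors at this $\mathbf{s}$ with output $\mathbf{a}$ chosen so that $a_{i_0}^{(t_0)} = a$ gives a $\delta$-product equal to $1$ for $\{f_i^{(t)}\}$ but equal to $0$ for $\{g_i^{(t)}\}$, since the latter forces $a_{i_0}^{(t_0)} = a'$.

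The main obstacle, mild as it is, is bookkeeping: verifying that the definition of $H_i^{(t)}$ really does have no hidden consistency constraints between the different factors (inputs of distinct parties at distinct past times), so that every element of $H_i^{(t)}$ genuinely lifts to a complete input sequence. Once that is confirmed, the two bounds coincide and the count is exactly~\Cref{eq:num_det_tau}. No new machinery beyond the argument of~\Cref{prp:num_det} is needed.
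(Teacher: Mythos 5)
Your proposal is correct and follows essentially the same route as the paper, which simply invokes the two-sided counting argument of~\Cref{prp:num_det} with the index $i$ replaced by the pair $(i,t)$; your explicit check that every element of $H_i^{(t)}$ lifts to a full input sequence (since its factors are distinct coordinates of $\mathbf s$) is exactly the point implicit in that argument.
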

\noindent A \textbf{\boldmath classical strategy} is a probabilistic mixture of deterministic strategies where shared randomness can be used. Since the set of all classical behaviors is by definition the convex hull of the set of deterministic behaviors, it is a polytope with the number of vertices given in~\Cref{eq:num_det_tau}. Again, via the Minkowski-Weyl theorem, we can obtain the bounds on correlations between classical parties receiving multiple inputs and producing multiple outputs over time and each party having different communication latencies. These would be multi-round extensions of LC inequalities.

\subsection{Quantum strategies}
The core mathematical object describing a quantum strategy for a $\tau$-step LC game involves a set of  \textbf{interaction tensors} $W$. For a $\tau$-step LC game with $n$ parties, party $i$ at time step $t>0$ uses the interaction tensor
$W_i^{(t)}$, which has $(n+1)$ input legs and $(n+1)$ output legs. This is shown diagrammatically in~\Cref{fig:int_tensor}.
\begin{figure}[h!]
  \centering
  \begin{tikzpicture}[scale=.9]
    \pic (box) at (0,0) {tensorbox={}{}{}};
    \node[font=\scriptsize] at ($(box-in1)!.5!(box-out2)$) {$W^{(t)}_{i}$};
    \foreach \i in {1,2,3} {
      \draw (box-in\i) -- (box-inpin\i);
      \draw (box-out\i) -- (box-outpin\i);
    }
  \end{tikzpicture}
  \caption{The interaction tensor $W_i^{(t)}$ for party $i$ at time step
    $t$.}\label{fig:int_tensor}
\end{figure}
One input leg will take in the classical input from $S_i^{(t)}$, having dimension $\vert S_i^{(t)}\vert$. One output leg is a classical system with basis indexed by $A_i^{(t)}$. These legs correspond to classical systems. The other $n$ output legs are quantum systems sent to other parties (including party $i$ himself). That is, these connect with input legs of other interaction tensors in the future (larger $t$). Specifically, the $j$-th quantum output leg of $W_i^{(t)}$ will connect with the $i$-th quantum input leg of $W_j^{(t+ \ell(i,j))}$. 
Also, since we assumed $\ell(i,i)=1$ for all $i \in [n]$, the quantum system that party $i$ sends to himself can just be regarded as a local quantum register that is left behind and is sent to the interaction tensor for the next time step: $W_i^{(t+1)}$.

This specifies the connections of all the legs except for interaction tensors near the initial and final time steps. An ``early-time'' interaction tensor of party $i$ at time step $t$ may have an input leg corresponding to party $j$ that does not connect with an interaction tensor because $t- \ell(j,i) <0$. That is, no quantum system was transmitted to the interaction tensor via that leg because not enough time elapsed for a transmission from $j$ to $i$. We will assume these legs are trivial (dimension 1). 
Similarly, the ``late-time'' interaction tensors of party $i$ at time step $t$ may have an output leg corresponding to party $j$ that does not connect with an interaction tensor because $t+\ell(i,j) >\tau$. That is, no interaction tensor receives this transmission because the $\tau$-step LC game already ended by the time the transmission is completed. We will also assume these legs are trivial.
Lastly for $t=0$, $W_i^{(0)}$ has two input legs and $(n+1)$ output legs. The first input leg takes in the input from $S_i^{(0)}$ and the second input leg is connected to the $B_i$ subsystem of the initial shared quantum state $\vert \psi\rangle$. The output legs are connected in the same way as the other interaction tensors.

To keep everything physical, we require that for any interaction tensor $W_i^{(t)}$, for a fixed classical input leg index $s_i^{(t)}$, the map from the remaining (quantum) input legs to the remaining output legs (including the classical leg) is an isometry. 
With these interaction tensors and the initial shared quantum state, we can define a quantum strategy:

\begin{dfn}
    Let $(\mathcal V, \pi, \ell)$ be a $\tau$-step LC game. A \textbf{quantum strategy} is a tuple 
    $$(\vert \psi \rangle, \{W_i^{(t)}\}_{i \in [n], t\in [0:\tau]}).$$
\end{dfn}
\noindent Hence, the behavior of a quantum strategy is given by
\begin{align*}
    & p( \mathbf a \vert \mathbf s)
    =
    \Vert \vert \psi(\mathbf a \vert \mathbf s)\rangle \Vert^2
\end{align*}
where this time
\begin{align*}
    \mathbf{a} \coloneqq (a_i^{(t)})_{i \in [n], t \in [0:\tau]},\,
    \mathbf{s} \coloneqq (s_i^{(t)})_{i \in [n], t \in [0:\tau]}
\end{align*}
and
\begin{align}
\label{eq:tau_quantum}
    \vert \psi(\mathbf{a} \vert \mathbf{s})\rangle \coloneqq  \prod_{t=0}^{\tau} \left[\bigotimes_{i=1}^n \langle a_i^{(t)} \vert W_i^{(t)} \vert s_i^{(t)}\rangle \right]\vert \psi \rangle. 
\end{align}
The product in~\Cref{eq:tau_quantum} denotes the tensor-network contraction specified above: the $j$-th quantum output leg of $W_i^{(t)}$ is connected with the $i$-th quantum input leg of $W_j^{(t+\ell(i,j))}$.
The sandwiching of the $W_i^{(t)}$ interaction tensor in~\Cref{eq:tau_quantum} sets the classical input leg index to be $s_i^{(t)} \in S_i^{(t)}$ and the classical output leg index to be $a_i^{(t)} \in A_i^{(t)}$. Note that $\vert \psi(\mathbf{a} \vert \mathbf{s})\rangle$ is not a normalized vector.
A quantum strategy captures the most general quantum protocol that the parties can implement in $\tau$ time steps, where $\ell(i,j)$ is how long it takes for $i$ to signal to $j$. 
In~\Cref{fig:quantum_strategy}, we draw a diagram of a quantum strategy for three parties where $\ell(i,j) = \max\{1, \vert i-j\vert\}$. 
Note again that it is not strictly necessary for the parties to send quantum systems in a hardware implementation: they could simply perform quantum teleportation by using shared entanglement and communicating classical information. 
\begin{figure}[h]
  \centering

  \begin{tikzpicture}[scale=.9]
    \pgfmathsetlengthmacro{\xstep}{3.6cm}
    \pgfmathsetlengthmacro{\ystep}{2.7cm}

    \pic (box11) at (0,2*\ystep) {tensorbox={$W^{(0)}_{1}$}{$s^{(0)}_{1}$}{$a^{(0)}_{1}$}};
    \pic (box12) at (0,1*\ystep) {tensorbox={$W^{(0)}_{2}$}{$s^{(0)}_{2}$}{$a^{(0)}_{2}$}};
    \pic (box13) at (0,0*\ystep) {tensorbox={$W^{(0)}_{3}$}{$s^{(0)}_{3}$}{$a^{(0)}_{3}$}};

    \pic (box21) at (\xstep,2*\ystep) {tensorbox={$W^{(1)}_{1}$}{$s^{(1)}_{1}$}{$a^{(1)}_{1}$}};
    \pic (box22) at (\xstep,1*\ystep) {tensorbox={$W^{(1)}_{2}$}{$s^{(1)}_{2}$}{$a^{(1)}_{2}$}};
    \pic (box23) at (\xstep,0*\ystep) {tensorbox={$W^{(1)}_{3}$}{$s^{(1)}_{3}$}{$a^{(1)}_{3}$}};

    \pic (box31) at (2*\xstep,2*\ystep) {tensorbox={$W^{(2)}_{1}$}{$s^{(2)}_{1}$}{$a^{(2)}_{1}$}};
    \pic (box32) at (2*\xstep,1*\ystep) {tensorbox={$W^{(2)}_{2}$}{$s^{(2)}_{2}$}{$a^{(2)}_{2}$}};
    \pic (box33) at (2*\xstep,0*\ystep) {tensorbox={$W^{(2)}_{3}$}{$s^{(2)}_{3}$}{$a^{(2)}_{3}$}};

    \pic (box41) at (3*\xstep,2*\ystep) {tensorbox={$W^{(3)}_{1}$}{$s^{(3)}_{1}$}{$a^{(3)}_{1}$}};
    \pic (box42) at (3*\xstep,1*\ystep) {tensorbox={$W^{(3)}_{2}$}{$s^{(3)}_{2}$}{$a^{(3)}_{2}$}};
    \pic (box43) at (3*\xstep,0*\ystep) {tensorbox={$W^{(3)}_{3}$}{$s^{(3)}_{3}$}{$a^{(3)}_{3}$}};

    \pic (box51) at (4*\xstep,2*\ystep) {tensorbox={$W^{(\tau)}_{1}$}{$s^{(\tau)}_{1}$}{$a^{(\tau)}_{1}$}};
    \pic (box52) at (4*\xstep,1*\ystep) {tensorbox={$W^{(\tau)}_{2}$}{$s^{(\tau)}_{2}$}{$a^{(\tau)}_{2}$}};
    \pic (box53) at (4*\xstep,0*\ystep) {tensorbox={$W^{(\tau)}_{3}$}{$s^{(\tau)}_{3}$}{$a^{(\tau)}_{3}$}};

    \draw (box11-in1) -- ([xshift=-8pt]box11-inpin1);
    \draw (box12-in2) -- ([xshift=-8pt]box12-inpin2);
    \draw (box13-in3) -- ([xshift=-8pt]box13-inpin3);

    \draw (box11-out1) -- (box21-in1);
    \draw (box11-out2) to[out=0,in=180] (box22-in1);
    \wire{(box11-out3) to[out=0,in=180] (box33-in1)}{box22}

    \draw (box12-out1) to[out=0,in=180] (box21-in2);
    \draw (box12-out2) -- (box22-in2);
    \draw (box12-out3) to[out=0,in=180] (box23-in2);

    \wire{(box13-out1) to[out=0,in=180] (box31-in3)}{box22}
    \draw (box13-out2) to[out=0,in=180] (box22-in3);
    \draw (box13-out3) -- (box23-in3);

    \draw (box21-out1) -- (box31-in1);
    \draw (box21-out2) to[out=0,in=180] (box32-in1);
    \wire{(box21-out3) to[out=0,in=180] (box43-in1)}{box32}

    \draw (box22-out1) to[out=0,in=180] (box31-in2);
    \draw (box22-out2) -- (box32-in2);
    \draw (box22-out3) to[out=0,in=180] (box33-in2);

    \wire{(box23-out1) to[out=0,in=180] (box41-in3)}{box32}
    \draw (box23-out2) to[out=0,in=180] (box32-in3);
    \draw (box23-out3) -- (box33-in3);

    \draw (box31-out1) -- (box41-in1);
    \draw (box31-out2) to[out=0,in=180] (box42-in1);
    \clipwire{(box31-out3) to[out=0,in=180] (box53-in1)}{box42}{box31-out3}{box43-outpin3}

    \draw (box32-out1) to[out=0,in=180] (box41-in2);
    \draw (box32-out2) -- (box42-in2);
    \draw (box32-out3) to[out=0,in=180] (box43-in2);

    \draw (box33-out2) to[out=0,in=180] (box42-in3);
    \draw (box33-out3) -- (box43-in3);
    \clipwire{(box33-out1) to[out=0,in=180] (box51-in3)}{box42}{box31-out3}{box43-outpin3}

    \foreach \x in {0.2*\xstep,0.3*\xstep,.4*\xstep} {
      \fill ([xshift=\x]box41-outpin1) circle (.8pt);
      \fill ([xshift=\x]box42-outpin1) circle (.8pt);
      \fill ([xshift=\x]box43-outpin1) circle (.8pt);
    }

    \foreach \i in {1,2,3} {
      \foreach \j in {1,2,3} {
        \draw (box4\i-out\j) -- (box4\i-outpin\j);
        \draw (box5\i-in\j) -- (box5\i-inpin\j);
      }
    }

    \draw [decorate,decoration={brace,amplitude=5pt,mirror}]
    ($(box11-inpin1)+(-.4,0)$) -- ($(box13-inpin3)+(-.4,0)$)
    node[midway,xshift=-15pt] {$\ket{\psi}$};
  \end{tikzpicture}

  \caption{A quantum strategy for a three-party, $\tau$-step latency-constrained
    game, where the latency function is given by
    $\ell(i,j)= \max\{1,\vert i-j\vert\}$.
    The curved lines represent communicated information, with time flowing from left to right.
    For instance, the second quantum output leg takes one time step to reach
    party $v_{2}$, and the third quantum output leg takes two time steps to reach $v_{3}$.
    Trivial legs are omitted.}\label{fig:quantum_strategy}
\end{figure}

\indent Let $\mathcal{Q}$ be the set of all quantum behaviors in a $\tau$-step LC game $(\mathcal{V},\pi,\ell)$. Similar to~\Cref{prp_convex_LC}, we can prove the convexity of this set.
\begin{prp}
    For any $\tau$-step LC game $(\mathcal{V},\pi,\ell)$, the set $\mathcal{Q}$ is convex.
\end{prp}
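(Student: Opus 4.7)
The plan is to mirror the direct-sum construction used for \Cref{prp_convex_LC}. Given two quantum strategies $(\vert\psi\rangle,\{W_i^{(t)}\})$ and $(\vert\psi'\rangle,\{W_i'^{(t)}\})$ realizing behaviors $p$ and $p'$ on internal Hilbert spaces $B_{i\to j}^{(t)}$ and $B_{i\to j}'^{(t)}$ respectively, I would build a single strategy realizing any convex combination $\xi(\mathbf a\vert\mathbf s) = \lambda p(\mathbf a\vert\mathbf s) + (1-\lambda) p'(\mathbf a\vert\mathbf s)$ by replacing every internal wire in the tensor network of \Cref{fig:quantum_strategy} with the direct sum $C_{i\to j}^{(t)} \coloneqq B_{i\to j}^{(t)} \oplus B_{i\to j}'^{(t)}$, taking the new initial state to be $\vert\phi\rangle \coloneqq \sqrt{\lambda}\vert\psi\rangle \oplus \sqrt{1-\lambda}\vert\psi'\rangle$, and defining each new interaction tensor $V_i^{(t)}$, for every fixed classical input leg value $s_i^{(t)} \in S_i^{(t)}$, as the direct sum $V_i^{(t)}(s_i^{(t)}) \coloneqq W_i^{(t)}(s_i^{(t)}) \oplus W_i'^{(t)}(s_i^{(t)})$ acting on the direct-sum input space with values in the direct-sum output space.

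Next, I would verify the isometry condition required in \Cref{dfn:tau_lc}: fixing any $s_i^{(t)}$, the map $V_i^{(t)}(s_i^{(t)})$ sends the direct-sum quantum input space to the direct-sum quantum output space tensored with the classical output register spanned by $\{\vert a_i^{(t)}\rangle\}_{a_i^{(t)}\in A_i^{(t)}}$, and is an isometry because its two block-diagonal summands are isometries by hypothesis and they map into orthogonal subspaces. Then I would compute $\vert\phi(\mathbf a\vert\mathbf s)\rangle$ via the product in \Cref{eq:tau_quantum} and observe that, because every contracted wire carries a block-diagonal direct-sum structure, cross terms between the unprimed and primed sectors vanish throughout the contraction. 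Hence $\vert\phi(\mathbf a\vert\mathbf s)\rangle = \sqrt{\lambda}\vert\psi(\mathbf a\vert\mathbf s)\rangle\oplus\sqrt{1-\lambda}\vert\psi'(\mathbf a\vert\mathbf s)\rangle$, and the resulting behavior $\Vert\vert\phi(\mathbf a\vert\mathbf s)\rangle\Vert^2$ equals $\xi(\mathbf a\vert\mathbf s)$ as desired.

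The main obstacle will be bookkeeping rather than any conceptual difficulty: the $\tau$-step setting has many more wires than the single-round case of \Cref{prp_convex_LC}, wires with different temporal endpoints, and the ``early-time'' and ``late-time'' trivial legs described after \Cref{fig:int_tensor}. Care is needed to define $V_i^{(t)}$ uniformly across all of these so that the direct-sum structure is consistently preserved under every tensor contraction in \Cref{eq:tau_quantum}. Once this is set up correctly, the block-diagonality argument is an immediate consequence of $(A\oplus A')(B\oplus B') = AB \oplus A'B'$ applied iteratively along every wire of the network, and convexity of $\mathcal Q$ follows.
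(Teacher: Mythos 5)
Your proposal is correct and follows essentially the same route as the paper's proof: both replace every underlying Hilbert space by the direct sum of the two strategies' spaces, take the initial state $\sqrt{\lambda}\vert\psi\rangle\oplus\sqrt{1-\lambda}\vert\psi'\rangle$ and the interaction tensors $W_i^{(t)}\oplus W_i'^{(t)}$, and conclude via block-diagonality exactly as in \Cref{prp_convex_LC}. The extra bookkeeping you flag (trivial early/late legs, wires with different endpoints) is handled implicitly in the paper the same way you propose, so no further comparison is needed.
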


\begin{proof}
    Consider a convex combination of two quantum behaviors in $\mathcal{Q}$:
    \begin{equation*}
    \begin{split}
        &\xi(\mathbf{a}\vert\mathbf{s})=\lambda p(\mathbf{a}\vert\mathbf{s})+(1-\lambda)p'(\mathbf{a}\vert\mathbf{s}),\\
    \end{split}
    \end{equation*}
    where
    \begin{equation*}
    \begin{split}
        \quad&p(\mathbf a \vert \mathbf s) = \Vert \vert \psi(\mathbf a \vert \mathbf s)\rangle \Vert^2,~~\vert \psi(\mathbf a \vert \mathbf s)\rangle =  \prod_{t=0}^{\tau} \left[\bigotimes_{i=1}^n \langle a_i^{(t)} \vert W_i^{(t)} \vert s_i^{(t)}\rangle \right]\vert \psi \rangle,\\
        &p'(\mathbf a \vert \mathbf s) = \Vert \vert \psi'(\mathbf a \vert \mathbf s)\rangle \Vert^2,~~\vert \psi'(\mathbf a \vert \mathbf s)\rangle =  \prod_{t=0}^{\tau} \left[\bigotimes_{i=1}^n \langle a_i^{(t)} \vert {W'_i}^{(t)} \vert s_i^{(t)}\rangle \right]\vert \psi' \rangle.
    \end{split}
\end{equation*}
By replacing the underlying quantum system with the direct sum of the original two systems, we can reconstruct $\xi(\mathbf{a}\vert\mathbf{s})$:
\begin{equation*}
    \begin{split}
        &\xi(\mathbf a \vert \mathbf s) = \Vert \vert \phi(\mathbf a \vert \mathbf s)\rangle \Vert^2,\\
    \end{split}
\end{equation*}
where 
\begin{equation*}
    \begin{split}
        \quad&\vert \phi(\mathbf a \vert \mathbf s)\rangle =  \prod_{t=0}^{\tau} \left[\bigotimes_{i=1}^n \langle a_i^{(t)} \vert \left( W_i^{(t)} \oplus {W'_i}^{(t)} \right)  \vert s_i^{(t)}\rangle \right] \left( \sqrt{\lambda}\vert \psi \rangle \oplus\sqrt{1-\lambda}\ket{\psi'} \right).
    \end{split}
\end{equation*}
The direct sums can be defined similarly to~\Cref{prp_convex_LC}. 

\end{proof}


\subsection{Single-input single-output (SISO) LC games}
\label{subsec:siso}
We would like to consider a special class of $\tau$-step LC games where there are multiple rounds of communication, but each party only has a single input and output:
\begin{dfn}
    A $\tau$-step LC game $(\mathcal V, \pi,\ell)$ is \textbf{single-input single-output (SISO)} if $S_i^{(t)}$ is a singleton for all $i \in [n]$ and $t > 0$, and $A_i^{(t)}$ is a singleton for all $i \in [n]$ and $t < \tau$.
\end{dfn}
\noindent That is, the parties only receive inputs in the very beginning at time step 0, and they only give outputs at the very end at time step $\tau$. 
In between they are able to transmit information to each other, but they do not receive inputs or give outputs. 
For conciseness, we will define
\begin{align*}
    S_i \coloneqq S_i^{(0)}, A_i \coloneqq A_i^{(\tau)}.
\end{align*}
Note that a $0$-step SISO game\footnote{For conciseness we will sometimes omit ``LC''. } is just a nonlocal game, and a $1$-step SISO game is just a simple LC game.
In~\Cref{fig:siso} we show a 2-step SISO game with 3 parties  where $\ell(1,2) = \ell(2,1) = 1$, $\ell(2,3) = \ell(3,2) = 2$, and $\ell(1,3), \ell(3,1) > 2$. 
\begin{figure}[htbp!]
  \centering
  \begin{tikzpicture}[scale=.9]
    \pgfmathsetlengthmacro{\xstep}{3.6cm}
    \pgfmathsetlengthmacro{\ystep}{2.7cm}

    \pic (box11) at (0,2*\ystep) {tensorboxin={$W_{1}^{(0)}$}{$s_{1}$}};
    \pic (box12) at (0,1*\ystep) {tensorboxin={$W_{2}^{(0)}$}{$s_{2}$}};
    \pic (box13) at (0,0*\ystep) {tensorboxin={$W_{3}^{(0)}$}{$s_{3}$}};

    \pic (box21) at (\xstep,2*\ystep) {tensorboxmid={$W_{1}^{(1)}$}};
    \pic (box22) at (\xstep,1*\ystep) {tensorboxmid={$W_{2}^{(1)}$}};
    \pic (box23) at (\xstep,0*\ystep) {tensorboxmid={$W_{3}^{(1)}$}};

    \pic (box31) at (2*\xstep,2*\ystep) {tensorboxout={$M_{1}$}{$a_{1}$}};
    \pic (box32) at (2*\xstep,1*\ystep) {tensorboxout={$M_{2}$}{$a_{2}$}};
    \pic (box33) at (2*\xstep,0*\ystep) {tensorboxout={$M_{3}$}{$a_{3}$}};

    \draw (box11-in1) -- ([xshift=-18pt]box11-inpin1);
    \draw (box12-in2) -- ([xshift=-18pt]box12-inpin2);
    \draw (box13-in2) -- ([xshift=-18pt]box13-inpin2);

    \draw (box11-out1) -- (box21-in1);
    \draw (box11-out2) to[out=0,in=180] (box22-in1);

    \draw (box12-out1) to[out=0,in=180] (box21-in2);
    \draw (box12-out2) -- (box22-in2);
    \draw (box12-out3) to[out=0,in=180] (box33-in0);

    \draw (box13-out2) -- (box23-in2);
    \draw (box13-out0) to[out=0,in=180] (box32-in3);

    \draw (box21-out1) -- (box31-in1);
    \draw (box21-out2) to[out=0,in=180] (box32-in1);

    \draw (box22-out1) to[out=0,in=180] (box31-in2);
    \draw (box22-out2) -- (box32-in2);
    \draw (box23-out2) -- (box33-in2);

    \draw [decorate,decoration={brace,amplitude=5pt,mirror}]
    ($(box11-inpin1)+(-.8,0)$) -- ($(box13-inpin2)+(-.8,0)$)
    node[midway,xshift=-15pt] {$\ket{\psi}$};
  \end{tikzpicture}
  \caption{A 2-step SISO game with $3$ parties where
    $\ell(1,2) = \ell(2,1) = 1$, $\ell(2,3) = \ell(3,2) =2$, and $\ell(1,3), \ell(3,1) >2$. 
    For clarity we denote the last interaction tensors as $M_i$ since they are measurement operations.
    }\label{fig:siso}
\end{figure}

A SISO game is naturally interpreted as multiple parties that receive inputs and need to produce their outputs within latency constraint $\tau$, given their pairwise communication latencies. If $\tau$ is too small, no communication is possible, and this becomes a conventional nonlocal game. On the other hand, if $\tau$ is sufficiently long for all the parties to receive all the inputs, then this becomes trivial. That is, we can prove:
\begin{prp}
    \label{prp:siso_trivial}
    Let $(\mathcal V,\pi,\ell)$ be a SISO, $\tau$-step LC game. Suppose $\tau \geq \ell_{\max}\coloneqq \max_{i,j\in[n]} \ell(i,j)$. Then, a classical strategy can realize all possible behaviors. In particular, it can achieve the maximum algebraic value $\omega_a$.
\end{prp}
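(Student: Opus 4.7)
The plan closely mirrors the proof of \Cref{prp:trivial}. The key structural observation is that under the SISO assumption combined with $\tau \geq \ell_{\max}$, the past light cone $H_i^{(\tau)}$ of every party $i$ already contains the entire input tuple $\mathbf{s}$. First I would unpack the definition of $H_i^{(\tau)}$ and note that for a SISO game the only nontrivial input sets are $S_j^{(0)} = S_j$, since $S_j^{(t')}$ is a singleton for $t' > 0$. Because $\ell(i,j) \leq \ell_{\max} \leq \tau$ for every $j$, the factor $S_j^{(0)}$ appears in the product defining $H_i^{(\tau)}$, so the full $\mathbf{s}$ is recoverable from any element of $H_i^{(\tau)}$.

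Given an arbitrary target behavior $p(\mathbf{a} \vert \mathbf{s})$, I would then construct a classical strategy as follows. For each $\mathbf{s} \in \prod_i S_i$, the parties pre-share, as part of their shared randomness, a correlated random variable $(r_1(\mathbf{s}), \ldots, r_n(\mathbf{s}))$ distributed according to $p(\cdot \vert \mathbf{s})$, with party $i$ holding the $i$-th coordinate. At time step $\tau$, each party $i$ applies the deterministic function $f_i^{(\tau)} : H_i^{(\tau)} \to A_i$ that extracts $\mathbf{s}$ from its argument (possible by the preceding paragraph) and returns $r_i(\mathbf{s})$. At intermediate time steps, the outputs are forced to belong to singletons, so nothing needs to be done there. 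Averaging over the shared randomness then reproduces exactly the behavior $p(\mathbf{a} \vert \mathbf{s})$.

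The claim about the maximum algebraic value follows immediately by specializing to the deterministic target
\[
p_0(\mathbf{a} \vert \mathbf{s}) \coloneqq \delta_{\mathbf{a},\, \mathbf{a}_0(\mathbf{s})}, \qquad \mathbf{a}_0(\mathbf{s}) \in \argmax_{\mathbf{a}} \mathcal{V}(\mathbf{a} \vert \mathbf{s}),
\]
which achieves $\omega_a$ by definition. The only step that requires any care is the past-light-cone containment; once this is established, the remainder is a direct adaptation of the proof of \Cref{prp:trivial} from the complete-graph setting to the multi-step setting with sufficiently large latency budget.
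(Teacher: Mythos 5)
Your proof is correct and follows essentially the same route as the paper, which likewise notes that $\tau \geq \ell_{\max}$ puts every input in each party's past light cone and then defers to the construction in \Cref{prp:trivial} (shared randomness sampled according to the target behavior for each $\mathbf{s}$). You simply spell out the light-cone containment and the time-$\tau$ output functions in more detail, which is fine.
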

\begin{proof}
    Since, $\tau \geq \ell_{\max}$, all parties have access to all inputs. The rest follows the proof of~\Cref{prp:trivial}.
\end{proof}
\noindent However, if $\tau$ is not sufficiently long, then only limited communication is possible. In fact, by computing the classical value $\omega_c(\tau)$ or quantum value $\omega_q(\tau)$ for the $\tau$-step SISO game $(\mathcal V, \pi, \ell)$ for different $\tau$, ceteris paribus, we are actually computing the highest possible winning probability achievable with classical or quantum resources \emph{as a function of time}, thereby answering the questions posed in~\Cref{sec:intro}. It is clear that these functions are monotonically non-decreasing.
We can then take the difference $\omega_q(\tau) -\omega_c(\tau)$ to evaluate the quantum advantage as a function of time.

We can also define a quantum advantage from the other direction. We can define $\tau_c(\alpha)$ for $\alpha \in [0,\omega_a]$ as the \textbf{$\alpha$-classical threshold time}:
the minimum number of time steps for a classical strategy to achieve a winning probability of $\alpha$ or higher. We can similarly define $\tau_q(\alpha)$ as the \textbf{$\alpha$-quantum threshold time}. If 
$$\tau_q(\alpha) < \tau_c(\alpha),$$
for some $\alpha$, then we have a \emph{mathematically provable time advantage using quantum resources} (as described in~\Cref{sec:intro}) for this SISO game and specific $\alpha$. Note that~\Cref{prp:siso_trivial} implies that for all $\alpha \in [0,\omega_a]$, 
\begin{align*}
    \tau_c(\alpha), \tau_q(\alpha)\leq \ell_{\max}.
\end{align*}

For a SISO game, we observe that both classical and quantum strategies lead to behaviors which satisfy a relaxation of the no-signaling condition. Namely, parties that are within each other's light cones can communicate, but the marginal behavior of parties outside of each other's light cones must be no-signaling with respect to each other. Just like simple LC games, information regarding which parties are in each other's light cones can be captured by a directed graph. The only condition we need to add is that the latency function satisfies the triangle inequality. Otherwise, going through an intermediate party can lower the total number of time steps taken to communicate between two parties.
\begin{dfn}
\label{dfn:siso_graph}
    Let $(\mathcal V, \pi, \ell)$ be a SISO, $\tau$-step LC game. 
    We say the SISO game is \textbf{path-consistent} if $\ell$ satisfies the triangle inequality:
    $$\forall i,j,k \in [n], \quad \ell(i,j) +\ell(j,k) \geq \ell (i,k).$$
    The \textbf{connectivity graph} $G =(V,E)$ is then defined as: 
     $$V \coloneqq [n]$$ 
     and
    \begin{align*}
        E \coloneqq \{(i,j) : i,j \in V, i\neq j, \ell(i,j) \leq \tau\}.
    \end{align*}
\end{dfn}
\noindent That is, the connectivity graph's vertices are the $n$ parties, and two vertices are connected in one direction if the first vertex can communicate to the second within $\tau$ time steps. Hence, the connectivity graph describes which parties can communicate with which other parties. 
We then define the generalization of no-signaling behaviors as follows: 
\begin{dfn}
Let $G = (V,E)$ be a directed graph.
We say a behavior $p(\mathbf a\vert \mathbf s)$ for a path-consistent SISO game is \textbf{\boldmath $G$-signaling} if for fixed $i \in [n]$, $s_1, s_2, \cdots, s_{i-1}, s_{i+1}, \cdots, s_n$, 
\begin{align*}
    \sum_{\substack{a_j \in A_j \\ j \in N_\mathrm{out}[i]}} p(a_1, a_2, \cdots, a_n \vert s_1, s_2, \cdots, s_{i-1}, s_i, s_{i+1},\cdots, s_n)
\end{align*}
is independent of $s_i \in S_i$. 
\end{dfn}
\noindent In words, this is saying that the marginal distribution of the outputs for parties that are not out-neighbors of some party is independent of the input of that party. It is clear that quantum and classical strategies for a path-consistent SISO game realize behaviors that are $G$-signaling, where $G$ is the connectivity graph. Note that $(V, \emptyset)$-signaling is equivalent to no-signaling. Furthermore, any marginal of a $G$-signaling behavior that only involves parties that are not adjacent (in either direction) of each other is no-signaling for every possible set of inputs of the parties that were traced out. That is, let $V' \subseteq V$ such that 
$$\forall i,j \in V', (i,j) \not \in E.$$
Then, for fixed $\{s_i\}_{v_i \not \in V'}$
\begin{align*}
    \sum_{\substack{a_i \in A_i \\v_i \not \in V'}} p(a_1, a_2,\cdots, a_n \vert s_1, s_2, \cdots, s_n)
\end{align*}
is a no-signaling behavior. 
Furthermore, we can optimize the winning probability over all $G$-signaling behaviors. This is a linear program as in the case of no-signaling behaviors~\cite{toner2009monogamy}. The corresponding maximum value is the $G$-signaling value:
\begin{align*}
    \omega_G \coloneqq \max_{G\text{-signaling behaviors}} p_\mathrm{win}.
\end{align*}

Now, we can prove an upper bound on the affine dimension of the set of $G$-signaling behaviors $\mathcal B_G$. Based on the notations introduced in \Cref{subsec:classical_strategy}, for any $X\subseteq V$, define
\begin{equation*}
    N_\text{in}[X]\coloneqq\bigcup_{i\in X}N_\text{in}[i].
\end{equation*}
\begin{prp}
    \begin{align*}
        \dim\mathcal{B}_G\leq\sum_{X\in\mathcal{P}^+(V)}\biggl(\prod_{w\in N_\mathrm{in}[X]}\vert S_w\vert\biggr)\biggl(\prod_{t\in X}(\vert A_t\vert-1)\biggr),
    \end{align*}
    where $\mathcal{P}^+(V)$ is the set of all nonempty subsets of $V$.
    \label{prp:dim}
\end{prp}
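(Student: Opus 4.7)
The plan is to adapt the Collins--Gisin parameterization of no-signaling behaviors to the $G$-signaling setting. I first fix an arbitrary reference output $a_i^{*} \in A_i$ for each party $i \in V$. For every nonempty $X \subseteq V$, every tuple $a_X \in \prod_{i \in X}(A_i \setminus \{a_i^{*}\})$, and every configuration $s^{*} \in \prod_{w \in N_\mathrm{in}[X]} S_w$, I introduce the marginal parameter
\begin{equation*}
  \theta_{X, a_X, s^{*}} \coloneqq \sum_{a_{V \setminus X}} p(\mathbf{a} \vert \mathbf{s})
\end{equation*}
for any $\mathbf{s}$ extending $s^{*}$. The first step is to verify this is well defined. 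The $G$-signaling property says that for any $i$, $\sum_{a_j : j \in N_\mathrm{out}[i]} p(\mathbf{a}\vert\mathbf{s})$ is independent of $s_i$. For any $i \notin N_\mathrm{in}[X]$ one has $X \subseteq V \setminus N_\mathrm{out}[i]$, so summing the above further over $a_{(V \setminus N_\mathrm{out}[i]) \setminus X}$ shows that $\sum_{a_{V\setminus X}} p(\mathbf{a}\vert\mathbf{s})$ is also independent of $s_i$. Iterating this for every $i \notin N_\mathrm{in}[X]$ gives joint independence from $s_{V \setminus N_\mathrm{in}[X]}$, so $\theta$ really depends only on $s^{*} = s_{N_\mathrm{in}[X]}$.

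Next I express every behavior $p \in \mathcal{B}_G$ as an affine function of these parameters via inclusion--exclusion. Starting from $\mathbbm{1}[a_i = a_i^{*}] = 1 - \sum_{b \neq a_i^{*}} \mathbbm{1}[a_i = b]$ applied to each party with $a_i = a_i^{*}$, and expanding the product over all parties, one obtains
\begin{equation*}
  p(\mathbf{a} \vert \mathbf{s}) = \sum_{Z \supseteq X(\mathbf{a})} (-1)^{\vert Z \setminus X(\mathbf{a}) \vert} \sum_{\substack{c_Z \in \prod_{i \in Z}(A_i \setminus \{a_i^{*}\}) \\ c_i = a_i \text{ for } i \in X(\mathbf{a})}} \theta_{Z, c_Z, s_{N_\mathrm{in}[Z]}},
\end{equation*}
where $X(\mathbf{a}) \coloneqq \{i : a_i \neq a_i^{*}\}$, and the $Z = \emptyset$ term (present only when $X(\mathbf{a}) = \emptyset$) contributes the normalization constant $1$ rather than a free parameter. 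I would verify this formula by a direct expansion of the product of indicator identities and grouping terms by $Z = X(\mathbf{a}) \cup W$.

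Finally I count. For each nonempty $X \in \mathcal{P}^+(V)$, there are $\prod_{i \in X}(\vert A_i\vert - 1)$ choices of $a_X$ and $\prod_{w \in N_\mathrm{in}[X]} \vert S_w \vert$ choices of $s^{*}$, matching the summand on the right-hand side of the proposition. Since the inclusion--exclusion formula exhibits $\mathcal{B}_G$ as contained in the image of an affine map whose domain has dimension equal to the total parameter count, the claimed upper bound on $\dim \mathcal{B}_G$ follows. The main subtlety is the first step: the $G$-signaling condition is stated one party at a time, and one must argue carefully that iterating it really does produce joint independence from all inputs outside $N_\mathrm{in}[X]$, rather than merely pairwise independence. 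Once this iteration is justified, the remainder of the argument is a direct generalization of the standard dimension count for no-signaling behaviors.
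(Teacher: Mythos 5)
Your proposal is correct and follows essentially the same route as the paper: the paper also parametrizes $G$-signaling behaviors Collins--Gisin style, using the marginals of non-reference outputs on each nonempty subset $X$ with inputs restricted to $N_\mathrm{in}[X]$ (well-defined by the signaling conditions), and counts $\prod_{w\in N_\mathrm{in}[X]}\vert S_w\vert\cdot\prod_{t\in X}(\vert A_t\vert-1)$ parameters per subset. Your additional care about deriving joint independence from the one-party-at-a-time $G$-signaling condition via a chain argument, and the explicit inclusion--exclusion reconstruction, simply fills in details the paper delegates to its cited reference.
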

\begin{proof}
    For simplicity, in this proof, we assume $A_i=\{0,1,...,\vert A_i\vert-1\}$.
    The upper bound in this proposition can be proven by an explicit parametrization in \cite{lift_bell_inequalities} from an observation that all conditional probabilities involving the outcome 0 can be reconstructed using the normalization conditions and no-signaling conditions. For example, from the no-signaling condition 
    $$\sum_{a_j=0}^{\vert A_j\vert-1}p(a_i,a_j\vert s_i,s_j)=p(a_i\vert s_i),$$
    one may express 
    $$p(a_i,0\vert s_i,s_j)=p(a_i\vert s_i)-\sum_{a_j=1}^{\vert A_j\vert-1}p(a_i,a_j\vert s_i,s_j).$$
    In this parametrization scheme, for each nonempty subset $X\subseteq V$, there are $\prod_{w\in N_\text{in}[X]}\vert S_w\vert$ input possibilities and for each input scenario, there are $\prod_{t\in X}(\vert A_t\vert-1)$ specifiable output probabilities. The summation over all the subsets then gives an upper bound on the dimension.

\end{proof}

We next prove a surprising result. Although a SISO game appears to be more general than a simple LC game, using port-based teleportation~\cite{PhysRevLett.101.240501}, we can actually show that the behaviors achievable in a path-consistent SISO game are essentially the same as those of the simple LC game with the same connectivity graph. In other words, \emph{the intermediate interaction tensors $W_i^{(t)}$ for $0 < i < \tau$ can be removed}, with no effect on the possible classical and quantum behaviors up to closure. We first make a definition:
\begin{dfn}
    Let $\mathcal G = (\mathcal V,\pi,\ell)$ be a SISO, $\tau$-step LC game with $n$ parties that is path consistent. Let $G$ be its connectivity graph. The \textbf{induced simple LC game} $\mathcal G_\mathrm{sim}$ is a simple LC game with the tuple $(\mathcal V, \pi, G)$.
\end{dfn}

The classical statement is obvious:
\begin{prp}
\label{prp:siso_classical}
    Let $(\mathcal{V},\pi,\ell)$ be a SISO, $\tau$-step LC game with $n$ parties that is path-consistent.
    Let $\mathcal C$ be the set of classical behaviors and let $G$ be the connectivity graph. Let $\mathcal C_\mathrm{sim}$ be the set of classical behaviors for
    the induced simple LC game. Then,
    $$\mathcal C = \mathcal C_\mathrm{sim}.$$
\end{prp}
\begin{proof}
    The only nontrivial communication in a deterministic strategy is sending inputs. Hence in a SISO game, the intermediate interaction tensors are simply relaying inputs since any function the tensors implement can be implemented at the end by $W_i^{(\tau)}$. We can thus simply include the relayed information in the direct transmission from the initial sender to the final receiver. Since $\ell$ satisfies the triangle inequality, going through an intermediate party can only increase the number of time steps needed, and so direct transmission can only be faster. Hence, it is sufficient for the parties to share their inputs in accordance with the connectivity graph. That is, a deterministic strategy for the SISO game after replacing the intermediate interaction tensors with direct transmission of inputs becomes a deterministic strategy for the simple LC game. Adding shared randomness proves 
    $$\mathcal C \subseteq \mathcal C_\mathrm{sim}.$$
    The other direction is immediate. The conclusion follows.
\end{proof}

We now consider the quantum case. In SISO games, the intermediate interaction tensor $W_i^{(t)}$ can be regarded as a quantum channel
which serves as a ``middleman,'' responsible for relaying quantum information from the input parties to the output parties. 
Using port-based teleportation, we can effectively cut out the middleman and replace him with local quantum operations and direct communication from each of the input parties to each of the output parties. An example of this is shown in~\Cref{fig:interaction-tensor-elimination}. 
\begin{figure*}[htbp!]
  \centering
  \begin{tikzpicture}[scale=.9]
    \pgfmathsetlengthmacro{\xstep}{2.4cm}
    \pgfmathsetlengthmacro{\ystep}{2.4cm}

    \pgfmathsetlengthmacro{\arrowx}{2.8*\xstep}
    \pgfmathsetlengthmacro{\rightshift}{3.8*\xstep}


    \pic (box11) at (0,1*\ystep) {tensorboxin={$W_{1}^{(0)}$}{$s_{1}$}};
    \pic (box12) at (0,0*\ystep) {tensorboxin={$W_{2}^{(0)}$}{$s_{2}$}};

    \pic (box21) at (\xstep,0*\ystep) {tensorboxmid={$W_2^{(1)}$}};

    \pic (box31) at (2*\xstep,1*\ystep)  {tensorboxout={$M_{1}$}{$a_{1}$}};
    \pic (box32) at (2*\xstep,0*\ystep)  {tensorboxout={$M_{2}$}{$a_{2}$}};
    \pic (box33) at (2*\xstep,-1*\ystep) {tensorboxout={$M_{3}$}{$a_{3}$}};

    \draw (box11-in1) -- ([xshift=-18pt]box11-inpin1);
    \draw (box12-in2) -- ([xshift=-18pt]box12-inpin2);

    \draw (box11-out2) to[out=0,in=180] (box21-in1);
    \draw (box12-out2) -- (box21-in2);

    \draw (box21-out1) to[out=0,in=180] (box31-in2);
    \draw (box21-out2) -- (box32-in2);
    \draw (box21-out3) to[out=0,in=180] (box33-in2);



    \node at (\arrowx,0) {\Large $\Rightarrow$};


    \begin{scope}[xshift=\rightshift]

      \pic (box41) at (0,1*\ystep) {tensorboxin={$W_{1}'$}{$s_{1}$}};
      \pic (box42) at (0,0*\ystep) {tensorboxin={$W_{2}'$}{$s_{2}$}};

      \pic (box51) at (1.6*\xstep,1*\ystep)  {tensorboxout={$M_{1}'$}{$a_{1}$}};
      \pic (box52) at (1.6*\xstep,0*\ystep)  {tensorboxout={$M_{2}'$}{$a_{2}$}};
      \pic (box53) at (1.6*\xstep,-1*\ystep) {tensorboxout={$M_{3}'$}{$a_{3}$}};

      \draw (box41-in1) -- ([xshift=-18pt]box41-inpin1);
      \draw (box42-in2) -- ([xshift=-18pt]box42-inpin2);

      \draw (box41-out1) -- (box51-in1);
      \draw (box41-out2) to[out=0,in=180] (box52-in1);
      \draw (box41-out3) to[out=0,in=180] (box53-in1);

      \draw (box42-out1) to[out=0,in=180] (box51-in2);
      \draw (box42-out2) -- (box52-in2);
      \draw (box42-out3) to[out=0,in=180] (box53-in2);


    \end{scope}

  \end{tikzpicture}
  \caption{Cutting out the middleman $W_2^{(1)}$ in a three-party scenario. The interaction tensor is replaced by direct communication, along with additional local operations. }
  \label{fig:interaction-tensor-elimination}
\end{figure*}
We can then apply this iteratively to obtain the desired result.
\begin{thm}
\label{thm:siso_to_LC}
    Let $\tau \geq 2$ and let $(\mathcal{V},\pi,\ell)$ be a SISO, $\tau$-step LC game with $n$ parties that is path-consistent.
    Let $G$ be the connectivity graph. 
    Let $\mathcal Q$ be the set of quantum behaviors of the SISO game and $\mathcal Q_\mathrm{sim}$ that of the induced simple LC game. Then,
    \begin{align*}
        \mathcal Q_\mathrm{sim} \subseteq \mathcal Q \subseteq \overline{\mathcal Q_\mathrm{sim}}.
    \end{align*}
\end{thm}
\begin{proof}
    The detailed proof is given in Appendix~\ref{app:siso_to_LC_proof}.
\end{proof}
Together,~\Cref{prp:siso_classical} and~\Cref{thm:siso_to_LC} constitute a quite astounding result. They imply that the correlations achievable in an arbitrary SISO game, where there are multiple rounds of communication between different pairs of parties in a complicated network structure such as in~\Cref{fig:siso}, can also be achieved by a simple LC game, where there is only one round of communication. The only latency-associated information that is retained in this reduction is the connectivity graph $G$.\footnote{Note that this reduction is up to set closure, but this mathematical subtlety does not have any physical consequences since infinitesimal differences cannot be measured in experiments. }
\emph{This reveals an interesting fact about the nature of quantum mechanics that is not obvious a priori: entanglement, local operations, and a single round of communication are the only physical resources necessary to achieve a quantum correlation}.

\subsection{Auxiliary parties}
In our definition of quantum strategies in~\Cref{dfn:graph_quantum}, we assumed that the only physical resources available to the parties are local quantum registers that can perform gates and transmit quantum systems. Conceivably, the parties could also use \emph{remote quantum registers}, in particular ones located between a pair of parties. Such remote quantum registers can be interpreted as \textbf{auxiliary parties} whose input and output sets are singletons and whose sole purpose is to help the non-auxiliary parties win the game. As long as the latency constraint is still satisfied, this is an allowed physical resource.

For example, consider three collinear parties where $v_1$ and $v_2$ are closer together and the latency constraint is such that they can communicate, but $v_3$ is isolated. 
We model this as three-party SISO, $\tau$-step game where
\begin{align*}
    \ell(1,2) = 2, \, \ell(2,3) = 3, \, \ell(1,3) = \ell(1,2)+\ell(2,3)=5
\end{align*}
and $\tau=2$. We assume $\ell$ is a symmetric function. 
This is visualized in~\Cref{fig:auxiliary_before}. We choose these numbers so that an auxiliary party can be accommodated.
Then, we can add an auxiliary party $v_A$ located at the midpoint between $v_1$ and $v_2$. This is visualized in~\Cref{fig:auxiliary_after}.
\begin{figure}[htbp!]
    \centering
    \begin{subfigure}{0.8\textwidth}
    \centering
    \begin{tikzpicture}
    \node (A) at (0,0) {$v_1$};
    \node (B) at (2,0) {$v_2$};
    \node (D) at (5,0) {$v_3$};
    
    \draw[dashed] (A) -- node[midway, above] {$2$} (B);
    \draw[dashed] (B) -- node[midway, above] {$3$} (D);
    \end{tikzpicture}
    \caption{}
    \label{fig:auxiliary_before}
    \end{subfigure}
    \hfill
    \begin{subfigure}{0.8\textwidth}
    \centering
    \begin{tikzpicture}
    \node (A) at (0,0) {$v_1$};
    \node (B) at (2,0) {$v_2$};
    \node (C) at (1,0) {$v_A$};
    \node (D) at (5,0) {$v_3$};
    
    \draw[dashed] (A) -- node[midway, above] {$1$} (C);
    \draw[dashed] (C) -- node[midway, above] {$1$} (B);
    \draw[dashed] (B) -- node[midway, above] {$3$} (D);
    \end{tikzpicture}
    \caption{}
    \label{fig:auxiliary_after}
    \end{subfigure}
    
    \caption{(a) Three parties in a line where $v_1$ and $v_2$ are closer together and $v_3$ is farther away. (b) Three parties in a line with an additional auxiliary party $v_A$ in between $v_1$ and $v_2$. }
\end{figure}
    
\noindent This now becomes a four-party SISO game where $\tau=2$ and
$$\ell(1,A) = \ell(A,2) = 1, \ell(A,3) = 4.$$
In particular, $v_A$ can serve as a remote quantum register in between $v_1$ and $v_2$ that can facilitate their mutual transmissions. This is illustrated in~\Cref{fig:auxiliary_strategy}. 
\begin{figure}[htbp!]
  \centering
  \begin{tikzpicture}[scale=.9]
    \pgfmathsetlengthmacro{\xstep}{2.0cm}
    \pgfmathsetlengthmacro{\ystep}{2.7cm}


    \pic (box11) at (0,2*\ystep) {tensorboxin={$W_{1}^{(0)}$}{$s_{1}$}};
    \pic (box12) at (0,1*\ystep) {tensorboxin={$W_{2}^{(0)}$}{$s_{2}$}};
    \pic (box13) at (0,0*\ystep) {tensorboxin={$W_{3}^{(0)}$}{$s_{3}$}};

    \pic (box21) at (\xstep,1.5*\ystep) {tensorboxmid={$W_A$}};

    \pic (box31) at (2*\xstep,2*\ystep) {tensorboxout={$M_{1}$}{$a_{1}$}};
    \pic (box32) at (2*\xstep,1*\ystep) {tensorboxout={$M_{2}$}{$a_{2}$}};
    \pic (box33) at (2*\xstep,0*\ystep) {tensorboxout={$M_{3}$}{$a_{3}$}};


    \draw (box11-in1) -- ([xshift=-18pt]box11-inpin1);
    \draw (box12-in2) -- ([xshift=-18pt]box12-inpin2);
    \draw (box13-in2) -- ([xshift=-18pt]box13-inpin2);

    \coordinate (box21-inA) at ($(box21-in1)!0.5!(box21-in2)$);
    \coordinate (box21-inpinA) at ($(box21-inA)+(-2.2,0)$);
    \draw (box21-inA) -- ([xshift=-18pt]box21-inpinA);


    \draw (box11-out2) to[out=0,in=180] (box21-in1);
    \draw (box12-out1) to[out=0,in=180] (box21-in2);

    \draw (box21-out1) to[out=0,in=180] (box31-in2);
    \draw (box21-out2) to[out=0,in=180] (box32-in1);


    \draw (box13-out2) -- (box33-in2);


    \draw [decorate,decoration={brace,amplitude=5pt,mirror}]
    ($(box11-inpin1)+(-.8,0)$) -- ($(box13-inpin2)+(-.8,0)$)
    node[midway,xshift=-15pt] {$\ket{\psi}$};

  \end{tikzpicture}
  \caption{A quantum strategy with three parties plus one auxiliary party in between $v_1$ and $v_2$. Some elements of the quantum strategy are not drawn in order to simplify the figure. These elements can be omitted without loss of generality. }
  \label{fig:auxiliary_strategy}
\end{figure}

Such an auxiliary party can greatly simplify quantum strategies. For example, suppose we are playing the distributed CHSH game. We found in~\cref{subsubsec:dist_chsh} an optimal quantum strategy that uses a three-qubit entangled state between the three parties. However, if we add an auxiliary party $v_A$ in between $v_1$ and $v_2$, then a trivial quantum strategy is available. Namely, $v_1$ and $v_2$ send their inputs to $v_A$, who shares a Bell pair with $v_3$. Then, $v_A$ and $v_3$ use an optimal quantum strategy for the usual CHSH game, after which $v_A$ sends the measurement results back to $v_1, v_2$. This clearly achieves the quantum value $\cos^2 \frac \pi 8$. 

More generally, for a three-party LC game with the connectivity graph
$$ v_1 \leftrightarrows v_2 \quad v_3,$$
we can use port-based teleportation to aggregate $v_1$ and $v_2$ as per~\Cref{thm:aggregable} and then use an optimal quantum strategy for the two-party nonlocal game thus attained to achieve the quantum value of the post-aggregation two-party game. However, this may require a large amount of entanglement, as we saw in~\Cref{lem:measurement_channel_reduction}. Using an auxiliary party as above would obviate this entanglement overhead. Indeed, in this case there is no overhead at all since $v_A$ and $v_3$ can directly use the quantum strategy for the two-party nonlocal game.

An important question is whether the quantum value of the game can increase if we add auxiliary parties. This would imply that auxiliary parties are necessary physical resources in order to achieve the highest physically possible winning probabilities. Interestingly, we can prove that this is not true, by directly using the result of~\Cref{thm:siso_to_LC} that reduces a SISO game to its induced simple LC game. That is, the auxiliary parties' operations can all be replaced by direct communication between the non-auxiliary parties.
To state our result, we define what it means to have auxiliary parties. 
\begin{dfn}
    Let $(\mathcal V, \pi, \ell)$ be a SISO, $\tau$-step LC game with $n$ parties that is path-consistent.
    Let $n_A \in \mathbb Z^+$. An \textbf{$n_A$-auxiliary extension} $(\mathcal V_A, \pi_A, \ell_A)$ is a SISO, $\tau$-step LC game with $n+n_A$ parties that is path-consistent and also has the following properties. The input and output sets for the first $n$ parties are the same as those of the original game, while the input and output sets of the additional $n_A$ parties are singletons. $\mathcal V_A, \pi_A$ are the extensions of $\mathcal V$, $\pi$ to include the auxiliary parties (the additional input and output sets are singletons). Lastly,     $$ \ell_A(i,j) = \ell(i,j)$$
    for all $i,j \leq n$.
\end{dfn}
\noindent The latency function needs to satisfy the triangle inequality because otherwise information could travel from one party to another faster via an auxiliary party, which would contradict our claim.

We will prove the following. As the operations of auxiliary parties can be described by quantum channels, this is a direct corollary of~\Cref{thm:siso_to_LC}.
\begin{cor}
    Let $\mathcal Q$ be the set of quantum behaviors of a SISO game $\mathcal G$ and $\mathcal Q_\mathrm{aux}$ be that of an auxiliary extension $\mathcal G_\mathrm{aux}$. Then, letting $\mathcal Q_A$ be the extension of the behaviors in $\mathcal Q$ to include the auxiliary parties,
    \begin{align*}
        \mathcal Q_A \subseteq \mathcal Q_\mathrm{aux} \subseteq \overline{\mathcal Q_A}.
    \end{align*}
    In particular, the quantum values are the same.
\end{cor}
\begin{proof}
    The first containment is trivial.
    
We apply~\Cref{thm:siso_to_LC} to $\mathcal G_\mathrm{aux}$ to obtain
$$ \mathcal Q_\mathrm{aux} \subseteq \overline{(\mathcal Q_\mathrm{aux})_\mathrm{sim}},$$
where $(\mathcal Q_\mathrm{aux})_\mathrm{sim}$ is the set of quantum behaviors for the simple LC game induced by $\mathcal G_\mathrm{aux}$.

For a simple LC game, there is only a single round of communication. Now, the auxiliary parties' input and output sets are singletons. Furthermore, any quantum system received by an auxiliary party can be traced out, while any quantum system sent from an auxiliary party to a non-auxiliary party can be included into the initial quantum state shared between the non-auxiliary parties. Therefore, the presence of auxiliary parties does not affect the set of achievable quantum behaviors of the induced simple LC game. That is, 
$$(\mathcal Q_\mathrm{aux})_\mathrm{sim} = (\mathcal Q_{\mathrm{sim}})_A,$$
where $\mathcal (Q_{\mathrm{sim}})_A$ is the set of quantum behaviors for the simple LC game induced by $\mathcal G$ extended to include the auxiliary parties.
Hence,
$$\mathcal Q_\mathrm{aux} \subseteq \overline{(\mathcal Q_{\mathrm{sim}})_A}.$$
On the other hand, we can apply~\Cref{thm:siso_to_LC} to $\mathcal G$ to obtain
$$ \mathcal Q_\mathrm{sim} \subseteq\mathcal Q \subseteq \overline{\mathcal Q_\mathrm{sim}},$$
and so
$$ (\mathcal Q_{\mathrm{sim}})_A \subseteq\mathcal Q_A \subseteq \overline{(\mathcal Q_{\mathrm{sim}})_A},$$
This implies that 
$$\overline{\mathcal Q_A} = \overline{(\mathcal Q_\mathrm{sim})_A}.$$
We conclude
$$\mathcal Q_\mathrm{aux} \subseteq \overline{\mathcal Q_A}.$$

The equality of the quantum values follows directly from the set containments and the continuity of the winning probability as a function of the behavior with respect to the 1-norm.
\end{proof}
\noindent\emph{This again reveals an interesting fact about quantum mechanics: to realize the set of physically achievable correlations between spatially separated parties, we do not need remote registers located between the parties. }
The cost, again, is possibly an enormous amount of quantum entanglement. We note that from a hardware implementation standpoint, there is also a equipment cost to installing these remote quantum registers, and so the relative practicality of using remote registers would have to be carefully evaluated.

\subsection{Continuous-time LC games}
\label{subsec:cont_time}
Although our definition of $\tau$-step LC games can approximate physical latency-constrained scenarios arbitrarily well, it is still worthwhile to consider how an LC game should be formulated in the continuous-time setting where the parties are continuously receiving inputs and producing outputs. This would be more natural from a physics perspective. In the same spirit, we will also assume inputs and outputs are continuous-valued, in particular real-valued~\cite{aharon2013continuous}. We will describe in words how continuous-time LC games can be formulated. We leave a more formal mathematical formulation and subsequent analyses to future work.

For simplicity, we consider the two-party case as drawn in~\Cref{fig:cont_time}.
\begin{figure}
  \centering
  \begin{tikzpicture}[node distance=1.7cm,
    player/.style = {draw, circle, minimum width=.9cm, fill=gray!15},
    doublearrow/.style = {double arrow, draw, inner sep=0pt,
      minimum height=1.2cm,
      minimum width=4mm,
      double arrow head extend=.3pt,
    },
    singlearrow/.style = { single arrow, draw, inner sep=0pt,
      minimum height=.7cm,
      minimum width=4mm,
      single arrow head extend=.3pt,
      rotate=-90}]

    \node[player] (P1) at (0,0) {$1$};
    \node[player, right=of P1] (P2) {$2$};

    \node[doublearrow] (DA) at ($(P1)!.5!(P2)$) {};

    \node[singlearrow] at ([yshift=-1cm]P1) {};
    \node[singlearrow] at ([yshift=1cm]P1) {};
    \node[singlearrow] at ([yshift=-1cm]P2) {};
    \node[singlearrow] at ([yshift=1cm]P2) {};

    \node at ([yshift=-1.7cm]P1) {$a_1(t)$};
    \node at ([yshift=-1.7cm]P2) {$a_2(t)$};

    \node at ([yshift=1.7cm]P1) {$s_1(t)$};
    \node at ([yshift=1.7cm]P2) {$s_2(t)$};

    \node at ([yshift=0.7cm]DA) {$\ell$};

  \end{tikzpicture}
  \caption{A continuous-time LC game with two parties.}\label{fig:cont_time}
\end{figure}
Let $\tau \geq 0$ be a real number and let $\mathcal S_1(t), \mathcal S_2(t)$, where $t \in [0, \tau]$, be continuous-time stochastic processes that take values in $\mathbb R$. We interpret these as inputs to $v_1$ and $v_2$, respectively. Likewise, we let $\mathcal A_1(t), \mathcal A_2(t)$ be continuous-time stochastic processes that take values in $\mathbb R$. These will be the outputs of $v_1$ and $v_2$, respectively. Let $s_1(t), s_2(t)$ and $a_1(t), a_2(t)$ be real-valued functions on $[0,\tau]$.
We are interested in the \textbf{continuous-time behavior}
\begin{align*}
    p(\mathcal A_1(t)=a_1(t), \mathcal A_2(t) = a_2(t) \text{ for } t \leq \tau \vert \mathcal S_1(t)=s_1(t), \mathcal S_2(t) = s_2(t) \text{ for } t \leq \tau).
\end{align*}
That is, we are interested in the probability of a \emph{trajectory} of the outputs $a_1(t), a_2(t)$ for $t \in [0,\tau]$ given the trajectory of the inputs $s_1(t),s_2(t)$. Thus, unlike $\tau$-step LC games where the behavior $p(\textbf a\vert \textbf s)$ is a function whose domain is a Cartesian product of discrete sets $S_i, A_i$, in the continuous-time setting, the behavior is a function whose domain is a Cartesian product of the space of functions 
$$\{f: [0,\tau]\to \R\}.$$ 
Then, we can define the latency function $\ell$ in the continuous-time setting as real-valued:
$$\ell(v_1,v_2), \ell(v_2,v_1) \in \R^+,$$
where $\ell(v_1,v_2)$ is the latency from $v_1$ to $v_2$ and vice versa. 

We can strategies as involving continuous-time classical or quantum operations. Again we will assume local operations consume zero time. This is somewhat extreme in the continuous-time setting but as we discussed in~\Cref{sec:intro}, this can be justified by simply increasing the energy. In the end, we are concerned with fundamental physical limits: 
\begin{center}
    ``Given all possible classical/quantum resources (including arbitrarily large energies), what correlations can be exhibited?''
\end{center}

A classical strategy can include actions such as $v_1$ at time $t$ sending classical information to $v_2$, for example his input $s_1(t)$ or his output $a_1(t)$. $v_2$ will receive the information at time $t+ \ell(v_1,v_2)$. $v_2$ can apply a function on this information and other information that he has to produce his output $a_2(t+\ell(v_1,v_2))$. Similarly, quantum strategies can include actions such as $v_2$ locally preparing and sending a quantum system $\vert \psi\rangle_{B_2}$ at time $t$ to $v_1$. $v_1$ would receive this quantum system at time $t+\ell(v_2,v_1)$. He can locally perform a measurement $\{\Pi_{1, a_1}\}_{a_1}$ on $\vert \psi\rangle_{B_2}$ together with some quantum system on his local quantum register to obtain his output $a_1(t+\ell(v_2,v_1))$. In essence, these strategies are the continuous-time versions of classical and quantum strategies for $\tau$-step LC games.

The above two questions about fundamental physical limits then translate to the mathematical questions:
\begin{center}
    ``What is the set of continuous-time behaviors realized by classical/quantum strategies?''
\end{center}
We leave the rigorous formulation and answering of this question to future work.

\section{Numerical Techniques and Examples}
\label{sec:numerical}
We want to compute by how much quantum resources, which are the most general physical resources, can violate the bounds on classical correlations in the latency-constrained games setting. That is, we want to numerically compute the quantum value $\omega_q$ for arbitrary simple LC games and compare it to the classical value $\omega_c$. Via~\Cref{prp:siso_classical} and~\Cref{thm:siso_to_LC}, these numerical methods immediately apply to SISO games as well.

In general, optimizing conventional nonlocal games is already a computationally hard problem~\cite{kempe2011entangled,ji16classical,ji2020mip}. Hence, we can only hope to provide some heuristic algorithms. Standard methods for nonlocal games, like the see-saw algorithm~\cite{werner2001bell} or the Navascu\'es-Pironio-Ac\'in (NPA) algorithm~\cite{navascues2007bounding,navascues2008convergent}, may also apply directly to LC games with special connectivity graphs $G$. For example, this is possible when $G$ is the empty graph or we can aggregate parties in a simple LC game such that it reduces to a conventional nonlocal game. In general, however, such methods cannot be used directly on games that cannot be reduced to conventional nonlocal games.

For this reason, we provide in \Cref{subsec:seesaw} a generalization of the see-saw algorithm to compute lower bounds on the quantum values of arbitrary simple LC games. Upper bounds on the quantum value of LC games could possibly be obtained via a generalization of the NPA hierarchy along the lines of~\cite{klep2024state}. We can ultimately use the tools we develop to analyze the achievable correlations between different parties with a fixed spatial layout as a function of the latency constraint in~\Cref{subsec:pert_xor}.

\subsection{Lower bounds via a see-saw algorithm}
\label{subsec:seesaw}

We can compute lower bounds on the quantum value $\omega_q$ of a simple LC game via a generalization of the see-saw algorithm~\cite{werner2001bell} used to compute lower bounds for conventional nonlocal games.
We first rewrite the behavior~\Cref{eq:g_network} in the mixed quantum formalism: 
\begin{align*}
    p(\mathbf a \vert \mathbf s) = \tr \left[ \bigotimes_{i=1}^n \Pi_{i, a_i} \left(\bigotimes_{i=1}^n \mathcal W_i(s_i)\right) (\rho) \right],
\end{align*}
where, writing $\mathcal L(\mathcal H)$ the space of linear operators on a Hilbert space $\mathcal H$, 
$$\rho \in \mathcal L \Bigl(\bigotimes_{i=1}^n B_i\Bigr)$$
is a quantum state and where, for each party $i$,
$$\mathcal W_i(s_i): \mathcal L(B_i) \to \mathcal L( B^\text{out}_i)$$
is a quantum channel parametrized by $s_i$, and $\Pi_i = \{\Pi_{i, a_i}\}_{a_i \in A_i}$ is a projective measurement with projectors $\Pi_{i, a_i} \in \mathcal L \left(B^\text{in}_i\right)$.

Together, a parametrized quantum channel and a projective measurement define the ``local'' strategy of a party $i$. A key observation is that such a strategy is formally equivalent to a one-slot quantum comb~\cite{chiribella2008quantum, chiribella2099theoretical, gutoski2007toward}, which is a higher-order quantum operation~\cite{taranto2025higherorderquantumoperations} corresponding to the combination of two quantum operations connected by a memory link, which in this case is the space $B_{i \to i}$.

To represent the strategy of a party as a single higher-order operation, we use the ``link product''~\cite{chiribella2008quantum, chiribella2099theoretical} denoted `$*$', which represents the composition of operators at the level of their Choi matrices~\cite{Choi75linear, Jamiolkowski72Linear}. It is defined for any matrices $X_{CD} \in \mathcal L(C \otimes D)$ and $Y_{DE} \in \mathcal L(D\otimes E)$ by 
\begin{equation*} 
X_{CD} * Y_{DE} \coloneqq \tr_D \left[(X_{CD} \otimes I_E)^{\top[D]} \cdot (I_C \otimes Y_{DE}) \right] \in \mathcal L(C\otimes E), 
\end{equation*}
where $\cdot ^{\top[D]}$ and $\tr_D$ are respectively the partial transpose and the partial trace on the space $D$, and where $I_E$ is the 
operator on space $E$. We start by writing $\mathsf{W}_i(s_i)$ as the non-normalized Choi operator of $\mathcal{W}_i(s_i)$ defined as 
\begin{align*}
    \mathsf{W}_i(s_i) \coloneqq \sum_{j,k=1}^{\dim{B_i}} \vert j \rangle\langle k \vert  \otimes \mathcal{W}_i(s_i)( \vert j \rangle \langle k \vert) \in \mathcal L\left(B_i \otimes B_i^\text{out}\right).
\end{align*}
The local strategy of party $i$ can then be written as a collection of operators $\{ \mathsf{K}_{i, a_i}(s_i)\}_{a_i, s_i}$ with $\mathsf K_{i, a_i}(s_i) \in \mathcal L  \left(B_i \otimes B^\text{out}_{(i)} \otimes B^\text{in}_{(i)} \right)$ where
$$B^{\text{in}}_{(i)} := \bigotimes_{j\in {N_\text{in}}(i)} B_{j \to i},\,  B^{\text{out}}_{(i)} := \bigotimes_{j\in {N_\text{out}}(i)} B_{i \to j},$$
effectively removing the memory link $B_{i \to i}$ (see the definitions of $N_\text{in}(i)$ and $N_\text{out}(i)$ in \cref{subsec:classical_strategy}). These operators are defined as
\begin{align*}
    \mathsf{K}_{i, a_i}(s_i) &\coloneqq \mathsf{W}_i(s_i) * \Pi_{i, a_i}, \\
    &= \tr_{B_{i \to i}} \left[ \left(\mathsf{W}_i(s_i)^{\top[B_{i\to i}]} \otimes I_{B^\text{in}_{(i)}} \right) \cdot \left(I_{B_i} \otimes I_{B^\text{out}_{(i)}} \otimes \Pi_{i, a_i} \right)\right],
\end{align*}
By construction, because the link product of two positive matrices is positive~\cite{chiribella2008quantum, chiribella2099theoretical}, the $\mathsf{K}_{i, a_i}(s_i)$ are positive matrices. Furthermore, the property that they can be decomposed into two quantum operations $\mathsf{W}_i(s_i)$ and $\Pi_i$ can be characterized by a normalization constraint, two constraints enforcing the comb structure, and a positivity constraint.~\cite{chiribella2008quantum, chiribella2099theoretical, gutoski2007toward, taranto2025higherorderquantumoperations}. On fixed Hilbert spaces, these constraints characterize a channel followed by a POVM, i.e.\ a one-slot quantum comb, rather than a channel followed by a projective measurement. Since any POVM is a projective measurement on a dilated space (Naimark), and the supremum in~\Cref{eq:seesaw_combs} is over unbounded dimension, this is without loss of generality for the quantum value. More precisely, writing $\mathsf{K}_i(s_i) = \sum_{a_i \in A_i} \mathsf{K}_{i, a_i}(s_i)$, for any $s_i \in S_i$, the set $\{\mathsf{K}_{i, a_i}(s_i)\}_{a_i}$ defines a local (mixed state) strategy for party $i$ if and only if it satisfies 
\begin{align}
    \label{eq:comb_constraints}
    &\tr \mathsf K_i(s_i) = \dim B_i \dim B^\text{in}_{(i)} \\ 
    \label{eq:comb_constraints2}
    &\tr_{B^{\text{in}}_{(i)}} [ \mathsf{K}_i(s_i) ]\otimes I_{B^{\text{in}}_{(i)}} = \dim B^\text{in}_{(i)} \mathsf K_i(s_i) \\ 
    \label{eq:comb_constraints4}
    & \tr_{B^{\text{in}}_{(i)} \otimes B^{\text{out}}_{(i)}} \mathsf{K}_i(s_i) = \dim B^\text{in}_{(i)}I_{B_i}\\ 
    \text{and }  &\mathsf{K}_{i, a_i}(s_i) \geq 0,  \text{ for all } a_i \in A_i.
    \label{eq:comb_constraints3}
\end{align}
The normalization~\eqref{eq:comb_constraints} follows from~\eqref{eq:comb_constraints4} by taking the trace; we keep it as a separate condition only for clarity, since the initialization uses it directly. These are all permissible constraints for a semidefinite program (SDP).
We can now rewrite the behavior in~\Cref{eq:g_network} in the Choi picture as\footnote{Note the link product is associative.}  
\begin{align*}
        p(\mathbf a \vert \mathbf s) = \mathsf{K}_{1,a_1}(s_1) * \dots * \mathsf{K}_{n,a_n}(s_n) * \rho.
\end{align*}
The see-saw algorithm is then based on the bound
\begin{align}
    \label{eq:seesaw_combs}
    \omega_q \geq \sum_{s \in S} \pi(s) \sum_{a \in A} \mathcal V(a \vert s) \big(\mathsf{K}_{1,a_1}(s_1) * \dots * \mathsf{K}_{n,a_n}(s_n) * \rho \big).
\end{align}

The outline of the algorithm is as follows. First, we set the dimensions of all the quantum systems $B_i, B_{i \to j}$. 
Then, writing $D = \prod_{i=1}^n \dim B_i$, we generate a random density matrix $\rho$ on $\bigotimes_{i=1}^n B_i$ by taking a Haar-random pure state on a $D^2$-dimensional Hilbert space and tracing out a $D$-dimensional subsystem (equivalently, we sample $\rho$ from the Hilbert--Schmidt measure).
We also generate random positive semidefinite matrices $\{K_{i, a_i}(s_i)\}_{a_i, s_i}$ for each party $i$ by generating a random operator with an ancillary system $B_{|A_i|}$ of dimension $|A_i|$, from which the $|A_i|$ outcome operators are later recovered by measuring $B_{|A_i|}$ in the computational basis. This can be thought of as generating operators where each party's measurements have been purified. More precisely, we generate Haar random pure states of dimension $(\dim B_i\dim B^\text{out}_{(i)} \dim B^\text{in}_{(i)} \ |A_i|)^2$ of which we trace out a $(\dim B_i  \dim B^\text{out}_{(i)} \dim B^\text{in}_{(i)} \ |A_i|)$-dimensional system, defining operators $\overline{K}_i(s_i) \in \mathcal L(B_i \otimes B^\text{out}_{(i)} \otimes B^\text{in}_{(i)} \otimes B_{|A_i|})$. We then normalize these so as to satisfy \Cref{eq:comb_constraints} and project them onto the affine subspaces defined by \Cref{eq:comb_constraints2} and  \Cref{eq:comb_constraints4}, by tracing out, for each equation, the relevant subsystems of $\mathsf{K}_i(s_i)$ and replacing it with the value required by the constraint. Applied in sequence, these two orthogonal projections yield an operator $\widetilde{K}_{i}(s_i)$ satisfying both equations\footnote{The two projections are compatible: the trace-and-replace projection enforcing~\eqref{eq:comb_constraints4} alters $\mathsf{K}_i(s_i)$ only by a term of the form $M \otimes I_{B^\text{in}_{(i)} \otimes B^\text{out}_{(i)}}$, which lies in the affine subspace fixed by~\eqref{eq:comb_constraints2}. Hence applying them in sequence lands in the intersection, and the first constraint remains satisfied.} and such that 
\begin{equation*}
    \tr_{B_{|A_i|}} \widetilde{K}_{i}(s_i) = \tr_{B^{\text{in}}_{(i)} B_{|A_i|}} [\widetilde{K}_i(s_i)] \otimes \frac{1}{\dim B^\text{in}_{(i)} } I_{B_{(i)}^\text{in}}.
\end{equation*}
This last equation ensures that the system contained in $B_{|A_i|}$ has no influence on previous operations and corresponds to a purification of party $i$'s measurement.
Finally, if these operators are not positive semidefinite, we mix them with the identity to obtain positive eigenvalues and satisfy 
\Cref{eq:comb_constraints3}:
\begin{equation*}
    \mathsf{\tilde{K}}_{i}(s_i) := \frac{1}{1 + \dim B^\text{out}_{(i)} |A_i| \abs \lambda} \cdot \Big[ \widetilde{K}_{i}(s_i) + \abs\lambda \cdot I_{B_i B^\text{out}_{(i)} B^\text{in}_{(i)} B_{|A_i|}}\Big],
\end{equation*}
where $\lambda$ is the minimum eigenvalue of $\widetilde{K}_{i}(s_i)$. 

The set of operators $\{\mathsf{K}_{i, a_i}(s_i)\}_{a_i \in A_i, s_i \in S_i}$ are then obtained by measuring system $B_{|A_i|}$ in the computational basis, which can be expressed in the Choi picture as:
\begin{equation*}
    \mathsf{K}_{i, a_i}(s_i) = \mathsf{\tilde{K}}_{i}(s_i) * \ketbra{a_i}_{B_{|A_i|}}.
\end{equation*}
Once a random solution is initialized, we iteratively optimize the RHS of~\Cref{eq:seesaw_combs}, optimizing each of $\rho$ and $\{\mathsf{K}_{i, a_i}(s_i)\}_{a_i \in A_i, s_i \in S_i}$ satisfying the relevant constraints, ceteris paribus. Each iteration is clearly an SDP. This leads to a monotonically non-decreasing lower bound on $\omega_q$. We stop the algorithm when a round of optimization (of the state and each operator $\mathsf{K}_{i,a_i}(s_i)$) does not change the lower bound by more than a given tolerance, typically, $10^{-6}$. 

\subsection{Application to random XOR games}
\label{subsec:randomXORgames}
Using this generalized see-saw algorithm along with standard numerical tools used to analyze nonlocal games, we can analyze simple LC games and compare their classical and quantum values for different latency constraints and connectivity graphs. 

We apply the generalized see-saw algorithm to three-party XOR games. As we will see below, in this case we can trivially show that a see-saw algorithm for conventional nonlocal games suffices, but we perform the calculation using the generalized version to demonstrate its use and to finally motivate the example in~\Cref{subsec:pert_xor}.
We consider random three-party XOR games defined by a uniform distribution $\pi$ on their input and with probabilistic predicates of the form 
\begin{equation*}
    \mathcal{V}(a_1,a_2,a_3 \vert s_1,s_2,s_3) = \beta_{s_1, s_2, s_3} (a_1 \oplus a_2 \oplus a_3),
\end{equation*}
where for all $(s_1,s_2,s_3) \in \{0, 1\}^3$ we have 
\begin{equation}
\label{eq:3_party_xor_coefs}
    \beta_{s_1,s_2,s_3}(a_1 \oplus a_2 \oplus a_3) \coloneqq \max\{\hat{\beta}_{s_1,s_2,s_3} \cdot (-1)^{a_1 \oplus a_2 \oplus a_3}, 0\}
\end{equation} 
for some coefficients $ -1 \leq \hat{\beta}_{s_1,s_2,s_3} \leq 1$. We do this in order to fulfill the definition of an LC game in~\Cref{dfn:lc}.

Consider for instance the game defined by the coefficients
\begin{align}
\label{eq:coefs__}
\hat{\beta}_{0,0,0} &= 0.438 \qquad & \hat{\beta}_{1,0,0} &= 0.580 \\
\label{eq:coefs2__}
\hat{\beta}_{0,0,1} &= 0.610 \qquad & \hat{\beta}_{1,0,1} &= -0.502 \\
\label{eq:coefs3__}
\hat{\beta}_{0,1,0} &= 0.520 \qquad & \hat{\beta}_{1,1,0} &= -0.724 \\
\label{eq:coefs4__}
\hat{\beta}_{0,1,1} &= - 0.466 \qquad & \hat{\beta}_{1,1,1} &= -0.220.
\end{align}
These coefficients were randomly generated. We choose this instance because it exhibits some interesting properties, as summarized in \Cref{tab:randomXor}.
\begin{table}[htp!]
\renewcommand{\arraystretch}{1.4}
    \centering
    \begin{tabular}{|c|c|c|c|c|}
    \hline 
         Connectivity graph &  $\omega_c$ & Lower bound $\omega_q$ & Upper bound $\omega_q$ & Quantum violation?\\
    \hline
         $(1)$ \, $v_1 \quad v_2 \quad v_3$ & $0.42525$ & $0.43954$ & $0.43954$ (NPA) & Yes \\
    \hline
         $(2a)$ $v_1 \leftrightarrows v_2 \quad v_3$  & $0.42525$ & $0.44095$ &  $0.44095$ (NPA agg) & Yes\\
    \hline
        $(2b)$ $v_1 \quad v_2 \leftrightarrows v_3$  & $0.42525$ & $0.43985$ & $0.43985$ (NPA agg) & Yes\\
    \hline
        $(2c)$ $v_1 \leftrightarrows v_3 \quad v_2$  & $0.42525$ & $0.44049$ &  $0.44049$ (NPA agg) & Yes\\
    \hline
        $(3)$ $v_1 \leftrightarrows v_2 \leftrightarrows v_3$  & $0.50750$ & $0.5075$ & $0.5075$ (Algebraic) & No\\
    \hline
    \end{tabular}
    \caption{Computed classical and quantum values of the game defined by the coefficients given in \Cref{eq:coefs__,eq:coefs2__,eq:coefs3__,eq:coefs4__} with respect to different connectivity graphs. For the empty graph $(1)$, the upper bound on $\omega_q$ is obtained via the second level of the NPA hierarchy, augmented with the distinct-party triples ABC (products of one operator from each party). For the graphs $(2a)$, $(2b)$, and $(2c)$, the upper bounds on $\omega_q$ are obtained via the NPA hierarchy on the nonlocal game obtained by aggregating $(v_1,v_2)$, $(v_2, v_3)$, and $(v_1, v_3)$ respectively into a single party. Meanwhile the upper bound for the graph $(3)$ is given by $\omega_a = \frac{1}{8}\sum_{s_1, s_2, s_3} \abs{\hat{\beta}_{s_1, s_2, s_3}}$. The classical values are obtained by enumerating all the possible classical strategies as given by~\Cref{def:classical_strat}. The lower bounds on $\omega_q$ are obtained by running the generalized see-saw with all quantum systems $B_i, B_{i\to j}$ set to be qubits.}
    \label{tab:randomXor}
\end{table}

We first observe that in~\Cref{tab:randomXor} all connectivity graphs except $(3)$ exhibit quantum violations. This shows that whether or not an LC game has a quantum violation can depend on the connectivity graph, as in the case of the distributed CHSH game in~\Cref{subsubsec:dist_chsh}. 
The absence of a violation for $(3)$ is something generic to XOR games, as for such games, classical strategies can always achieve the algebraic bound on the graph $(3)$, by having parties $v_1$ and $v_3$ producing a fixed output $0$, and $v_2$ producing his output depending on the full input $(s_1, s_2, s_3)$ which has been communicated to him by $v_1$ and $v_3$.

The particularity of XOR games also appears in the results for the graphs $(2a), (2b)$ and $(2c)$ where the lower and upper bounds on $\omega_q$ match, meaning that for each graph, the see-saw converged to an optimal quantum strategy for the aggregated game. For XOR games, this is trivial: because the two connected parties can communicate their inputs, one of them can output a fixed value $0$, and the other one will produce his output depending on their joint input. 

Finally we observe in \Cref{tab:randomXor} that the classical values are the same for all connectivity graphs except $(3)$, where the maximum algebraic value is achieved. This means that for any physical arrangement of the parties (which pair of parties are physically closer), relaxing the latency constraints such that two physically close parties exchange their inputs does not help for this particular game. This in contrast to what happens in the quantum case, where the quantum value can be different depending on the physical arrangement. In fact, we can order the different graphs based on their quantum values 
$$\omega_q^{(1)} < \omega_q^{(2b)} < \omega_q^{(2c)} < \omega_q^{(2a)} < \omega_q^{(3)} .$$
Here, we stress that because the solutions of the different SDPs are obtained with a very small duality gap (typically on the order of $10^{-8}$), these results give genuine separations between these different graphs. 

We can also obtain analytic results for the three-party XOR games where $\hat \beta_{s_1,s_2,s_3} \in \{\pm 1\}$.
First consider the case where no parties can communicate. This is then a conventional nonlocal game.
We can write the winning probability of a three-party XOR game with uniformly distributed inputs as
\begin{equation*}
    p_{\mathrm{win}} = \frac{1}{8} \sum_{s_1,s_2,s_3}\sum_{a_1,a_2,a_3} \beta_{s_1,s_2,s_3}(a_1 \oplus a_2 \oplus a_3)p(a_1,a_2,a_3|s_1,s_2,s_3).
\end{equation*}
The quantum value $\omega_q$ can be upper bounded using the third level of the NPA hierarchy~\cite{navascues2007bounding,navascues2008convergent}, which we denote as $\omega_q^{\mathrm{upper}}$. 
And the classical value $\omega_c$ can be obtained via linear programming~\cite{zukowski1999strengthening,kaszlikowski2000violations}.

Now suppose $v_1$ and $v_2$ can communicate while $v_3$ cannot communicate with either. That is, the connectivity graph is
$$ G = v_1 \leftrightarrows v_2 \quad v_3.$$
This becomes an LC game.
In this case, $v_1$ and $v_2$ can be regarded as one aggregated party even without using port-based teleportation, as mentioned in~\Cref{subsec:randomXORgames}.
The winning probability becomes 
\begin{equation*}
    p_{\mathrm{win}} = \frac{1}{8} \sum_{s_1,s_2,s_3}\sum_{a_2,a_3} \beta_{s_1,s_2,s_3}(a_2 \oplus a_3)p(a_2,a_3|(s_1,s_2),s_3).
\end{equation*}
Thus, this is a two-party nonlocal game, for which the quantum value of the aggregated game $\omega_q^\mathrm{agg}$  can be computed exactly using an SDP~\cite{cleve2004consequences}. The classical value $\omega_c^\mathrm{agg}$ is again obtained via a linear program. 

Building on the above observation, we now investigate the behavior of three-party XOR games before and after aggregation.
We investigate the following examples:\footnote{Here we use the coefficients $\hat{\beta}_{s_1,s_2,s_3}$ in~\Cref{eq:3_party_xor_coefs} to define the probabilistic predicate. }
\begin{equation*}
    \begin{aligned}
        &\mathrm{XOR}_1: \hat{\beta}_{s_1,s_2,s_3} = (-1)^{s_1s_2s_3}, \\
        &\mathrm{XOR}_2: \hat{\beta}_{s_1,s_2,s_3} = (-1)^{s_1s_2}, \\
        &\mathrm{XOR}_3: \hat{\beta}_{s_1,s_2,s_3} = (-1)^{s_1s_3}, \\
        &\mathrm{XOR}_4: \hat{\beta}_{s_1,s_2,s_3} = (-1)^{s_1(s_2 \oplus s_3)}. 
    \end{aligned}
\end{equation*}
The quantum and classical values of these three-party XOR games before and after aggregating $v_1$ and $v_2$ are listed in \Cref{table:3-xor}. Note for the quantum value $\omega_q$, we can give explicit quantum strategies that match the upper bounds $\omega_q^{\mathrm{upper}}$ from the NPA algorithm within numerical precision.
Furthermore, for these special three-party XOR games, the quantum values $\omega_q^\mathrm{agg}$ can be verified analytically by transforming $\mathrm{XOR}_{i}$ into a generalized CHSH game~\cite{PhysRevLett.108.100402}. See~\Cref{appendix:3_party_xor} for details.
\begin{table}[htp!]
    \centering
    \begin{tabular}{|c|c|c|c|c|}
        \hline
        Game & $\omega_c$ & $\omega_q$ & $\omega_c^\mathrm{agg}$ & $\omega_q^\mathrm{agg}$  \\
        \hline
        $\mathrm{XOR}_1$ & $\frac{7}{8}$ & $0.8750$ & $\frac{7}{8}$ & $\frac{4+\sqrt{10}}{8}$ \\
        \hline
        $\mathrm{XOR}_2$ & $\frac{3}{4}$ & $0.8536$ & $1$ & $1$ \\
        \hline
        $\mathrm{XOR}_3$ & $\frac{3}{4}$ & $0.8536$ & $\frac{3}{4}$ & $\cos^2 \frac{\pi}{8}$ \\
        \hline
        $\mathrm{XOR}_4$ & $\frac{3}{4}$ & $0.7500$ & $\frac{3}{4}$ & $\cos^2 \frac{\pi}{8}$ \\
        \hline
    \end{tabular}
    \caption{Quantum and classical values of three-party XOR games $\mathrm{XOR}_{i}$ before and after aggregating $v_1$ and $v_2$. For $\omega_q$, we compute an upper bound via the NPA hierarchy and find matching lower bounds up to numerical precision. $\omega_q^\mathrm{agg}$ is obtained analytically. }
    \label{table:3-xor}
\end{table}
As shown in \Cref{table:3-xor}, for $\mathrm{XOR}_1$ and $\mathrm{XOR}_4$, there is no quantum advantage when no parties can communicate, but there is one when $v_1$ and $v_2$ can communicate. Meanwhile for $\mathrm{XOR}_2$, we see the opposite phenomenon: while there was a quantum advantage when no parties can communicate, it disappears when $v_1$ and $v_2$ can communicate. Finally, for $\mathrm{XOR}_3$, communication between $v_1$ and $v_2$ has no effect on the classical or quantum values. This shows that even for three-party XOR games where $\hat \beta_{s_1,s_2,s_3} \in \{\pm 1\}$, the classical and quantum values can exhibit diverse behaviors depending on the connectivity graph.


\subsection{A fixed spatial layout with varying latency constraints}
\label{subsec:pert_xor}
We are now ready to address one of the main goals of this paper: understanding how a fixed spatial layout of parties gives rise to different realizable correlations depending on the latency constraint. A spatial layout determines the communication latencies between the parties, and each choice of latency constraint then induces a corresponding connectivity graph. As the latency constraint is relaxed, this graph changes, leading to qualitatively different regimes: a nonlocal game when no communication is possible, a latency-constrained game when only some parties can communicate, and an unconstrained, fully communicating game when all pairs of parties can communicate. This is the phenomenon illustrated schematically in~\Cref{fig:ABC}.

Here we will analyze a simple spatial layout with three parties arranged in a line with the second party at the midpoint as in~\Cref{fig:line}.
\begin{figure}[htpb!]
  \centering
  \begin{tikzpicture}[scale=1]
   \coordinate (A) at (0,0);
   \coordinate (B) at (3,0);
   \coordinate (C) at (6,0);

   \draw[] (A) -- (B) node[midway, below] {$d$};
   \draw[] (B) -- (C) node[midway, below] {$d$};

    \node[above] at (A) {$v_{1}$};
    \node[above] at (B) {$v_{2}$};
    \node[above] at (C) {$v_{3}$};

    \filldraw (A) circle (2pt);
    \filldraw (B) circle (2pt);
    \filldraw (C) circle (2pt);
  \end{tikzpicture}
  \caption{Three parties arranged on a line, with $v_1$ and $v_3$ each at a distance $d$ from $v_2$.}
  \label{fig:line}
\end{figure}
\noindent We wish to derive fundamental physical limits on their correlations by setting the communication latency to be the distance divided by the speed of light $c$. This spatial layout then induces a family of SISO games parameterized by the latency constraint $t$ via the procedure given in~\Cref{sec:tau_lc}:\footnote{Note that using the continuous-time LC game framework in~\Cref{subsec:cont_time} would be much more natural. }
\begin{itemize}
    \item We assume all parties receive their inputs at time $0$ and must produce their outputs within time $t$, the latency constraint. 
    \item Assume the relevant time values are rational: $V_T = \{t,\frac d c,\frac{2d}{c}\} \subseteq \mathbb Q$. Define $t_0$ as in~\Cref{eq:time_increment}. 
    \item We will thereby consider a SISO, $\tau$-step LC game where $\tau = \frac{t}{t_0}$ and the latency function $\ell$ takes values in $\{1,\frac{d}{ct_0},\frac{2d}{ct_0}\}$.
\end{itemize}
Now, we observe that depending on the value of the latency constraint $t$, we end up with different special cases of SISO games. In particular, we have the following three regimes:
\begin{enumerate}
    \item For $t \in [0,\frac d c)$,\footnote{This should be an interval of rational numbers and not real numbers, but we will ignore this for now. } no parties can communicate and this is effectively a conventional nonlocal game.
    \item For $t \in [\frac d c, \frac{2d}{c})$, this is effectively a simple LC game where parties can communicate over a single round. We can make this simplification because multi-round communication would require more time steps than $\tau$. The connectivity graph is
    \begin{align}
    \label{eq:two_line_graph} 
     v_1 \leftrightarrows v_2 \leftrightarrows v_3.
    \end{align}
    \item For $t \geq \frac{2d}{c}$, this is an unconstrained game since all parties have access to all inputs.
\end{enumerate}
Now, we want to find a specific predicate $\mathcal V$ and input distribution $\pi$ for which we can numerically compute the classical value and bounds on the quantum value achievable with respect to this spatial layout \emph{as a function of time}. 

To make it more interesting, for the family of SISO games defined by $\mathcal V,\pi$ for different regimes of $t$, we ask that the classical and quantum values are different for the first two regimes of $t$, and are each strictly increasing over the three regimes. 
In addition, we ask that the LC game corresponding to the intermediate regime $t \in [\frac d c, \frac{2d}{c})$ has a quantum value that we cannot easily compute via direct observation or with algorithms designed for nonlocal games, such as the see-saw method and the NPA hierarchy. 
As we saw above, the distributed CHSH game, the extended CHSH game, and the three-party XOR games where $\hat \beta_{s_1,s_2,s_3} \in \{\pm 1\}$ all become bipartite nonlocal games after aggregation. The quantum values are therefore the same according to~\Cref{thm:aggregable}.
Furthermore, in~\Cref{subsec:randomXORgames}, we considered a randomly generated tripartite XOR game with respect to different connectivity graphs. However, because of the structural properties of XOR games, for any connectivity graph, the corresponding simple LC game is either just a nonlocal game (empty graph), an aggregable LC game which becomes a nonlocal game after aggregation (graph with one bidirected edge), or achieves the maximum algebraic value (two or more bidirected edges). 
Lastly, for many LC games, the quantum value is simply equal to the forwarding value, making the computation again equivalent to computing the quantum value for a nonlocal game.
Hence, we can use direct observation or algorithms for nonlocal games to compute the quantum values of all the LC games we have seen so far.
Is there an example of a LC game for which we genuinely need the generalized see-saw algorithm in~\Cref{subsec:seesaw} to compute bounds on the quantum value? 
We will informally call such an LC game ``nontrivial.''

To find $\mathcal V ,\pi$ with the desired properties (intermediate regime is nontrivial, quantum advantage for first two regimes, and strictly increasing classical and quantum values), we consider a more complicated tripartite game based on the randomly generated XOR game in \Cref{subsec:randomXORgames} but additionally mixed with bipartite games between $v_1$ and $v_3$. In particular, we will add a CHSH game and a ``very lazy'' guess-your-neighbor's-input game \cite{Almeida2010GYNI, Branciard2015Lazy}. This second game is defined by requiring $v_1$ to satisfy $a_1 = s_3$ whenever $s_1 = 1$, which requires signaling. There are no other requirements in this game; hence the name. 

The two bipartite games are carefully chosen to obtain the desired properties. The CHSH game between $v_1$ and $v_3$ ensures that in the intermediate regime, the quantum value will be strictly smaller than the maximum algebraic value of the game (see Tsirelson's bound~\cite{Cirelson1980}). Meanwhile, the ``very lazy'' guess-your-neighbor's-input game creates a gap between the maximum algebraic value of the game and its $G$-signaling value with respect to the connectivity graph in~\Cref{eq:two_line_graph}. Indeed, while the ``very lazy'' guess-your-neighbor's-input game has a maximum algebraic value of 1, achieving this requires signaling between $v_1$ and $v_3$. Note that these two bipartite games are compatible: a single (signaling) behavior can win both games with probability one. 

We let $\pi$ be the uniform distribution over the inputs and $\mathcal V$ be a probabilistic predicate that is a convex combination of these different games:
\begin{equation}
\label{eq:mixed_xor}
    \mathcal{V}(\mathbf a \vert  \mathbf s) = \lambda_ 1 \cdot \beta_{s_1, s_2, s_3}(a_1 \oplus a_2 \oplus a_3) + \lambda_2 \cdot (a_1 \oplus a_3 = s_1 \cdot s_3) + \lambda_3 \cdot [s_1 = 0 \lor (a_1=s_3 \land s_1=1)]
\end{equation}
with the coefficients $\lambda_1, \lambda_2, \lambda_3 \in [0, 1]$, such that $\lambda_1 + \lambda_2 + \lambda_3 = 1$ and $\beta_{s_1, s_2, s_3}$ are defined as in \Cref{eq:3_party_xor_coefs}.
The specific game we analyze is defined with $\lambda_1 = 0.2$, $\lambda_2 = 0.72$, $\lambda_3 = 0.08$ and with the coefficients $\hat{\beta}_{s_1,s_2,s_3}$ of \Cref{subsec:randomXORgames}:
\begin{align*}
\hat{\beta}_{0,0,0} &= 0.438 \qquad & \hat{\beta}_{1,0,0} &= 0.580 \\
\hat{\beta}_{0,0,1} &= 0.610 \qquad & \hat{\beta}_{1,0,1} &= -0.502 \\
\hat{\beta}_{0,1,0} &= 0.520 \qquad & \hat{\beta}_{1,1,0} &= -0.724 \\
\hat{\beta}_{0,1,1} &= - 0.466 \qquad & \hat{\beta}_{1,1,1} &= -0.220.
\end{align*}
These parameters were randomly generated. This specific instance is chosen so that the family of SISO games has the desired properties.

We run various optimization algorithms to compute the exact classical value of this game, as well as lower and upper bounds on $\omega_q$ for the three different regimes. These numerical results are summarized in \Cref{tab:PertXor}. Note that $\omega_c \leq \omega_f \leq \omega_q$ for all LC games.
We then give the numerical estimates of the quantum values of the possible aggregated games in~\Cref{tab:reduc}. Note that we do not consider aggregating $v_1$ and $v_3$ because these two parties cannot communicate in the intermediate regime.
\begin{table}[H]
\renewcommand{\arraystretch}{1.4}
    \centering
    \begin{tabular}{|c|c|c|c|c|c|}
    \hline
Latency constraint & Exact & Exact & Lower bound & Upper bound & Quantum \\
        $t$ & $\omega_c$ & $\omega_f$ & $\omega_q$ & $\omega_q$ & violation? \\
    \hline
         $t \in [0, \frac d c)$ & $0.68505$ & $0.74884$ & $0.74884$ & $0.74884$ (NPA) & Yes \\
    \hline
         $t \in [\frac d c, \frac{2d}{c})$  & $0.70150$ & $0.76136$ & $0.76277$ &  $0.88150$ ($G$-signaling) & Yes\\
    \hline
        $t \geq \frac{2d}{c}$ & $0.90150$ & $0.90150$ & $0.90150$ & $0.90150$ (Algebraic) & No\\
    \hline
    \end{tabular}
     \caption{Computed classical and quantum values for the tripartite game defined in~\Cref{eq:mixed_xor}.
     When $t \in [0, \frac d c)$, the upper bound on $\omega_q$ is obtained via the ``1 + |AB|'' level of the NPA hierarchy generalized to the multipartite setting, which includes the distinct-party triples ABC (products of one operator from each party), while the lower bound is computed using the see-saw algorithm. When $t \in [\frac d c, \frac{2d}{c})$ the upper bound on $\omega_q$ is obtained by computing the $G$-signaling value $\omega_G$ with respect to the graph in~\Cref{eq:two_line_graph}. The lower bound on $\omega_q$ for $t \in [\frac d c, \frac{2d}{c})$ is obtained with the generalized see-saw algorithm and corresponds to the best result obtained on 50 random initial solutions. Meanwhile, the classical and quantum values for $t \geq \frac{2d}{c}$ are given by the maximum algebraic value of the game as per~\Cref{prp:siso_trivial}. The values for forwarding strategies are obtained with matching lower and upper bounds obtained respectively via the see-saw algorithm and the NPA hierarchy. 
     Note that for $t \in [0,\frac d c)$, $\omega_f$ and $\omega_q$ coincide by definition. When computing all the lower bounds using the see-saw algorithm, the quantum strategies are optimized with quantum systems $B_i, B_{i\to j}$ set to be qubits. }
    \label{tab:PertXor}
\end{table}
\begin{table}[htp!]
\renewcommand{\arraystretch}{1.4}
    \centering
    \begin{tabular}{|c|c|c|c|c|c|}
    \hline
Aggregation & Lower bound & Upper bound \\
         & $\omega_q$ & $\omega_q$ \\
    \hline
         $(v_1 \quad v_2) \quad v_3$  & $0.75022$ & $0.75022$ (NPA) \\
    \hline
         $v_1 \quad (v_2 \quad v_3)$  & $0.74934$ &  $0.74934$ (NPA) \\
    \hline
    \end{tabular}
     \caption{Computed quantum values for the tripartite game defined in~\Cref{eq:mixed_xor}
     when aggregating $v_1$ and $v_2$ or $v_2$ and $v_3$. Lower bounds on the quantum value are obtained via the see-saw algorithm on the bipartite nonlocal game where each of the two parties (after aggregation) holds a qubit. The upper bounds are obtained via the NPA hierarchy. }
    \label{tab:reduc}
\end{table}
\noindent 
\newpage
\begin{figure}[htbp!]
  \centering
   \includegraphics[width=0.77\textwidth]{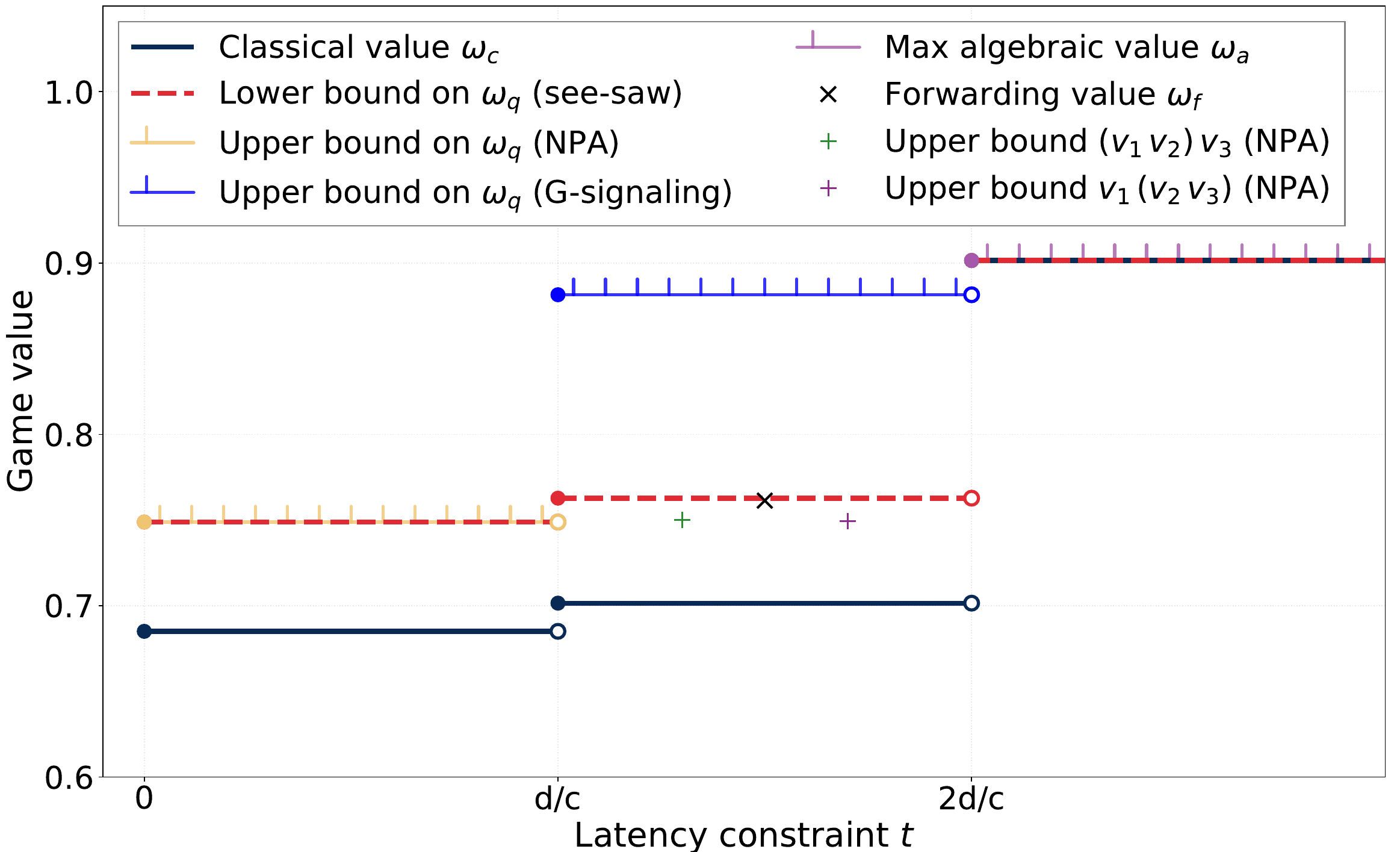}
   \caption{Graphic summarizing the results of~\Cref{tab:PertXor} and~\Cref{tab:reduc}. The three parties are arranged in a line as in~\Cref{fig:line}. The bounds certify progressively higher achievable classical and quantum values for the tripartite game defined in~\Cref{eq:mixed_xor} as the latency constraint is relaxed. The intermediate regime $t \in [\frac d c, \frac{2d}{c})$ corresponds to an nontrivial LC game. The lower bound on the quantum value in this regime is strictly greater than the forwarding value and upper bounds on the quantum values for the two possible aggregations $(v_1 \, \, v_2)$ and $(v_2 \, \, v_3)$. }
   \label{fig:threeXor}
\end{figure}
We summarize these results graphically in~\Cref{fig:threeXor}, where we plot the evolution of the classical and quantum values of the game as a function of time. We first observe that \emph{this is a step function}: the classical and quantum values are constant except at certain points where they jump. These jumps happen when the parties' light cones intersect. We also see from the graph that this family of SISO games has all the desired properties.
The classical and quantum values are strictly increasing over the regimes, and there is a quantum advantage for the first two. 
Furthermore, the game corresponding to the intermediate regime is nontrivial.
First, $\omega_f < \omega_q$. At a high level, this is because in order to win the CHSH game between $v_1$ and $v_3$ with high probability, they need to share a maximally entangled state. Hence, due to the monogamy of entanglement, the quantum state used cannot also be maximally entangled with $v_2$. Hence $v_2$ cannot know the outputs of $v_1$ and $v_3$ in a forwarding strategy. In comparison, a general quantum strategy allows for $v_1$ and $v_3$ to share their outputs with $v_2$, who can use them to win the tripartite XOR game. This reasoning is similar to the argument for why $\omega_f <\omega_q$ for the extended CHSH game in~\Cref{subsec:forwarding_strategies}.
Second, we see that the quantum values after aggregating $v_1$ and $v_2$ or $v_2$ and $v_3$ are strictly less than the lower bound obtained via the generalized see-saw algorithm.
This shows that our LC game is nontrivial. 

\section{Discussion}
\label{sec:discussion}
Bell inequality violation is one of the defining phenomena of quantum mechanics, taught in virtually every undergraduate physics curriculum and modern textbook on quantum mechanics. In this work, we give a physical interpretation of Bell inequalities by explicitly incorporating spacetime locality into their formulation. The core concept that we use is a \emph{latency constraint}. By setting the latency constraint sufficiently tight so that no parties can communicate, the inequalities that the parties' correlations must satisfy are exactly Bell inequalities. By slightly relaxing the latency constraint so that a subset of the parties can communicate, we obtain a new set of inequalities, LC inequalities, that generalize Bell inequalities in a very natural way. Like Bell inequalities, LC inequalities are bounds on local hidden variable theories: the correlations between classical parties must satisfy these bounds within the given latency constraint. 
Through examples of LC inequalities, such as that of the distributed CHSH game and the randomly generated nontrivial game, as well as their quantum violations, we showed that Bell inequality violations are actually a special case of a more general quantum phenomenon in which spacetime locality plays a central role.
In this sense, our work takes Bell's original insight~\cite{Bell1976LocalBeables} --- that Bell inequalities arise from \emph{locality} and their quantum violations are manifestations of quantum \emph{nonlocality} --- to its natural conclusion.

\paragraph{Experimental realization}
\begin{figure}[htbp!]
  \centering
  \begin{tikzpicture}[scale=.9]
    \pgfmathsetlengthmacro{\xstep}{6cm}
    \pgfmathsetlengthmacro{\ystep}{2.4cm}

    \pic (box11) at (0,2*\ystep) {tensorboxin={$W_{1}$}{$s_{1}$}};
    \pic (box12) at (0,1*\ystep) {tensorboxin={$W_{2}$}{$s_{2}$}};
    \pic (box13) at (0,0*\ystep) {tensorboxin={$W_{3}$}{$s_{3}$}};

    \pic (box21) at (\xstep,2*\ystep) {tensorboxout={$M_{1}$}{$a_{1}$}};
    \pic (box22) at (\xstep,1*\ystep) {tensorboxout={$M_{2}$}{$a_{2}$}};
    \pic (box23) at (\xstep,0*\ystep) {tensorboxout={$M_{3}$}{$a_{3}$}};

    \draw (box11-in1) -- ([xshift=-38pt]box11-inpin1);
    \draw (box12-in2) -- ([xshift=-38pt]box12-inpin2);
    \draw (box13-in3) -- ([xshift=-38pt]box13-inpin3);

    \draw (box11-out1) -- (box21-in1);
    \draw (box11-out2) to[out=0,in=180] (box22-in1);

    \draw (box12-out1) to[out=0,in=180] (box21-in2);
    \draw (box12-out2) -- (box22-in2);
    \draw (box12-out3) to[out=0,in=180] (box23-in2);

    \draw (box13-out2) to[out=0,in=180] (box22-in3);
    \draw (box13-out3) -- (box23-in3);

    \draw [decorate,decoration={brace,amplitude=5pt,mirror}]
    ($(box11-inpin1)+(-1.8,0)$) -- ($(box13-inpin3)+(-1.8,0)$)
    node[midway,xshift=-40pt] {$\ket{\psi} \otimes \vert\Phi^+\rangle^{\otimes N}$};

    \draw [dashed] (-1.7,-.8*\ystep) -- (-1.7,2.5*\ystep);
    \draw [dashed] (1.4,-.8*\ystep) -- (1.4,2.5*\ystep);
    \draw [dashed] (4.8,-.8*\ystep) -- (4.8,2.5*\ystep);
    \node [labelsty, text width=.7*\xstep, align=center] at (-3.2,-.7*\ystep) {Pre-share\\ entangled state};
    \node [labelsty, text width=.5*\xstep, align=center] at (-.2,-.7*\ystep) {Fast gates and measurements};
    \node [labelsty, text width=.5*\xstep, align=center] at (3.1,-.7*\ystep) {Send classical bits near speed of light};
    \node [labelsty, text width=.5*\xstep, align=center] at (6.5,-.7*\ystep) {Fast gates and measurements};
  \end{tikzpicture}
  \caption{Experimental operations corresponding to different parts of a quantum strategy for a simple LC game with connectivity graph
    $v_1 \leftrightarrows v_2 \leftrightarrows v_3$. The communication between different parties can be realized using quantum teleportation, whose implementation is included in the depicted operations. }\label{fig:experiment}
\end{figure}
The next natural step is to conduct experiments that can observe quantum violations in this new, relaxed latency constraint regime. In general, this will require, in addition to quantum measurements on a shared entangled state, \emph{real-time communication}. The parties will need to perform quantum operations on their quantum registers depending on the input they receive and send part of their quantum registers to each other as specified by the quantum strategy. As we mentioned above, this communication can be realized using quantum teleportation. The parties can pre-share Bell pairs and use them for quantum teleportation in real time. In particular, they only need to send each other classical bits as specified in the quantum teleportation protocol. This is summarized in~\Cref{fig:experiment}, where we label the experimental operations corresponding to different parts of the quantum strategy drawn in~\Cref{fig:g-quantum}.

Now, the main technical challenge, other than the already recognized challenges in quantum communication experiments such as realizing long-distance and long-lasting quantum entanglement, is the following: \emph{the entire quantum strategy needs to be completed within the latency constraint}. Fast quantum measurements in the computational basis at or below microsecond timescales are already possible~\cite{hensen2015loophole,shalm2015strong,giustina2015significant}. The quantum gates that the parties need to perform are usually even faster than the measurements for most physical platforms, assuming they can be efficiently decomposed into hardware-native gates.
For the communication step, to save time the transmitted classical bits ideally should be moving near the speed of light in vacuum. This can be approximately achieved via free-space electromagnetic communication, including radio transmission, microwave towers, or optical laser links. This is one reason quantum teleportation is preferred over physically sending quantum systems, as the latter usually requires an optical-fiber connection between the parties to achieve a high success probability. Light travels much slower in fiber than in free space.
Finally, we ideally should use quantum memories that can serve as local quantum registers and can store entanglement, both for measurement and for quantum teleportation purposes. Current physical platforms such as trapped ions, quantum dots, nitrogen-vacancy (NV) centers, and possibly neutral atoms used in emerging quantum networks can serve as such quantum memories~\cite{heshami2016quantum}.

\paragraph{Real-world applications}
The correlations realizable with quantum resources under tight latency constraints has a variety of real-world applications.
One example is high-frequency trading (HFT). It would be interesting to find LC games modeling HFT scenarios with a quantum advantage such as those of~\cite{brandenburger2015quantum,ding2024coordinating}. 

One immediate possibility is to consider the same HFT scenario in~\cite{ding2024coordinating}. A financial instrument $X$ is being traded at the NYSE (New York Stock Exchange) data center in Mahwah, NJ. This time, the same instrument $X$ is also being traded at Nasdaq's data center in Carteret, NJ. Meanwhile, instrument $Y$ is traded at the CME (Chicago Mercantile Exchange) data center far away in Aurora, Illinois. The price movements of instruments $X$ and $Y$ are sometimes positively correlated, sometimes negatively correlated. The trading servers at the different exchanges must quickly decide whether to issue a buy or sell order for their respective instrument so as to minimize the overall risk of their combined trading strategy.\footnote{Note that in HFT the ultra-fast trade decisions are actually order cancellations. We can consider the same scenario but instead of issuing orders we are canceling orders. A similar logic applies. Note also that~\cite{ding2024coordinating} instead considers which order (buy/sell) is issued first. } If the instruments are positively correlated, the servers trading $X$ issue the order opposite to the order issued by the server trading $Y$. For example, if the $X$ servers buy, the $Y$ server should sell. This would lower the overall risk of their combined trade. If the instruments are negatively correlated, the servers should issue the same order. 

Now, NYSE and Nasdaq's data centers are both in New Jersey, only 56 km apart. Meanwhile, the server in the CME data center in Illinois is about 1200 km away.\footnote{CME-NYSE is 1175 km, while CME-Nasdaq is 1177 km. } Thus, for latency constraints between about 0.2 and 4 ms, the trading servers at NYSE and Nasdaq can communicate with each other with one round of communication but neither can communicate with CME. The connectivity graph for the corresponding LC game is thus
\begin{align*}
    \text{CME} \quad \text{NYSE} \leftrightarrows \text{Nasdaq}.
\end{align*}
Since the HFT scenario in~\cite{ding2024coordinating} maps to the CHSH game, assuming we want to exactly copy the decision of the NYSE server at Nasdaq (otherwise there is no net trade of instrument $X$), \emph{this is exactly the extended CHSH game with communication between two parties considered in~\Cref{subsec:forwarding_strategies}}. 
That is, the task from the HFT perspective is two-fold:
\begin{enumerate}
    \item The three trading servers at the three exchanges make the correct decisions for hedging purposes.
    \item The NYSE and Nasdaq servers make the same decision.
\end{enumerate}
By~\Cref{prop:sep_forwarding_quantum}, quantum resources cannot achieve a higher probability of success in this task than with classical resources ($\omega_f = \frac 3 4$ is an upper bound on the quantum value when there is no communication) for latency constraints below $0.2$ ms, but they can for latency constraints between 0.2 and 4 ms. Hence, \emph{the existence of a quantum advantage in HFT depends on the timescale involved}.
Apart from HFT, it would also be interesting to consider other real-world scenarios such as ad hoc network routing~\cite{hasanpour2017quantum}, rendezvous on graphs~\cite{viola2024quantum}, and distributed systems~\cite{da2025entanglement} where a subset of the parties can communicate. 

With the framework of $\tau$-step LC games, we can also consider real-world scenarios where the parties are receiving multiple inputs and producing multiple outputs over the course of time. This should be a common characteristic in many real-world settings. For example, in HFT, this framework can be used to describe a real-time trading scenario where trading servers are receiving market data from the exchanges they are located at, communicating with each other, and making trades in real time. We can define the probabilistic predicate $\mathcal V$, input distribution $\pi$, and latency function $\ell$ for a specific HFT scenario, and then compute the classical and quantum values for the $\tau$-step LC game defined thereby. A separation between the two would entail a mathematically provable quantum advantage for this realistic trading scenario. In general, this framework applied to HFT allows us to answer the third question posed in~\Cref{sec:intro}.
Other latency-sensitive scenarios can be analyzed in a like manner.

\paragraph{Generalizations}
We can generalize LC games by considering more complicated physical scenarios. A natural generalization would be to accommodate \emph{parties in motion}. In this case, the distance between parties and therefore the latency function $\ell_L$, defined as the speed-of-light delay, itself will be time-dependent. 
Taking relativistic effects into account, for parties moving near the speed of light, $\ell_L$ will also depend on the parties' velocities. However, in general, we would need to specify a frame of reference by which we define time. Given this complication, it may be more desirable to define a coordinate-free definition of communication latency. That is, we define moving parties as world lines in spacetime. Then, the transmitting parties emit photons, whose worldlines are light-like curves. The receiving parties then receive the transmission when the light-like curves intersect their world lines, as shown in~\Cref{fig:worldlines} for a particular frame of reference. 
\begin{figure}[htbp!]
  \centering
  \begin{tikzpicture}[line cap=round, line join=round, scale=.9]

    \draw[->] (0,0) -- (7,0) node[below right] {$x$};
    \draw[->] (0,0) -- (0,6) node[above left] {$t$};

    \draw[->] (2,0)
      .. controls (2.4,1.0) and (3.0,2.0) .. (2.0,3)
      .. controls (1.2,4.0) and (2.5,4.8) .. (2.6,5.6)
      coordinate [pos=0.0] (eA)
      node[above] {$A$};

    \draw[->]
      (5.5,0)
      .. controls (4.9,1.5) and (5.1,2.8) .. (5.6,3.6)
      .. controls (6.0,4.4) and (5.6,5.0) .. (5.6,5.8)
      coordinate [pos=0.7] (eB)
      node[above] {$B$};

    \draw[dashed,->]  ($(eA)+(0.6,-0.9)$) -- ++(45:4.35) node[midway, above=2pt, sloped] {photon};
  \end{tikzpicture}
  \caption{Spacetime diagram of two moving parties where party $A$ emits a photon to party $B$. }
  \label{fig:worldlines}
\end{figure}
Furthermore, since LC games by construction involve both relativity (spacetime locality) and quantum mechanics (quantum entanglement), the natural next step is to develop a quantum field-theoretic formulation of LC games.

We can also include gravitational effects by considering an LC game with parties in curved spacetime. In particular, it would be interesting to consider a specific physical scenario that requires a general relativistic treatment and derive the fundamental limits on correlations the parties can exhibit using classical versus quantum resources. One exciting example is to consider two parties, where the first party is orbiting a black hole outside the event horizon while the second is falling into the black hole. In particular, when the second party crosses the event horizon, we have an intrinsically asymmetric communication constraint: the party outside the black hole can transmit information to the party inside, but not vice versa. \emph{This is actually a physical scenario modeled by an LC game with a connectivity graph $G$ where two vertices are only connected in one direction}.

Other possible generalizations include
\begin{itemize}
    \item Nonlinear objective functions. Bell inequalities are linear functions of the  behavior. We can consider nonlinear functions of the behavior~\cite{rosset2016nonlinear,klep2024state} and analyze corresponding fundamental limits for an LC game.
    \item Non-cooperative games. There is a body of literature on the usage of quantum entanglement to obtain more and better (in terms of social welfare) Nash equilibria in non-cooperative games~\cite{auletta2021belief,pappa2015nonlocality,bolonek2017three,khan2018quantum, abbott2024improving}. In the LC games setting, we can analyze what Nash equilibria can be reached if some parties can communicate.
    \item Photon loss. In many Bell experiments, entangled photon pairs are used as the shared quantum system between the different parties. These photons are readily absorbed by physical media, and the consequent effects on the behaviors realized can be described mathematically~\cite{gigena2024robustselftestingbellinequalities,ding2024coordinating,tba}. Compared to conventional nonlocal games, in the LC games setting, a new phenomenon appears: parties that can communicate can tell each other whether or not photon loss has occurred.
\end{itemize}

\paragraph{Open problems}
There is a plethora of open problems regarding LC games and their generalizations. 
One important direction is to mathematically analyze the threshold times $\tau_c(\alpha),\tau_q(\alpha)$ defined in~\Cref{subsec:siso}. This would shed more light on the time advantage that quantum resources can achieve in SISO games.
It will also be important to find more efficient numerical optimization algorithms for LC games. This is especially pressing since the numerical optimizers for conventional nonlocal games can already face scaling problems~\cite{mortimer2025bounding}. An efficient optimizer is crucial for designing experiments and for evaluating real-world applications. Another question is to derive an NPA-like upper bound on the quantum value of LC games via techniques such as those in~\cite{klep2024state}. Lastly, an interesting question is whether the enormous amount of quantum entanglement in port-based teleportation is necessary in order to achieve the reductions in~\Cref{thm:aggregable},~\Cref{thm:LC_to_broadcasting}, and~\Cref{thm:siso_to_LC}. Related to this question is whether there is a provable reduction in the amount of entanglement needed to achieve a certain winning probability if we can make use of auxiliary parties.
\\~\\
\noindent  \emph{Author contributions.} 
D.D.\ conceived the idea of explicitly integrating spacetime locality into the formulation of Bell inequalities, was in charge of the overall direction and planning, and was primarily responsible for the writing of the manuscript. Z.J.\ contributed to the definition of simple and multi-step LC games, led the effort in elucidating the structure and separations among classes of LC game strategies, and improved the presentation of the manuscript. P.P.\ devised the generalization of the see-saw method to LC games, was responsible for its implementation, analyzed the simulations for random, extended and perturbed XOR games, and designed the nontrivial game. M.X. contributed to the proofs of the basic properties of LC games. X.X. derived all the results that use port-based teleportation, contributed to the analysis of distributed games, and analyzed specific three-party XOR games. All authors discussed the results and commented on the manuscript.
\\~\\
\noindent \emph{Acknowledgments.}
We would like to thank Hamed Adami, Elliott Gesteau, Nafiz Ishtiaque, Ziwen Liu, Guorui Ma, Veronica Pasquarella, Leonardo Santilli, Richard Schoen, and Jie Wang for helpful discussions. We thank Simon H{\"o}fer for pointing out the connection between LC games and port-based teleportation. We thank Yuqing Li for writing one of the SDP algorithms that we used in our paper. DD would like to thank God for all of His provisions.
\\~\\
\noindent \emph{AI disclosure.}
ChatGPT 5.5 was used to proofread the manuscript and also used collaboratively to improve its language. 

\bibliographystyle{unsrt}
\bibliography{ref}

\appendix
\addtocontents{toc}{\protect\setcounter{tocdepth}{1}}

\section{Introduction to Nonlocal Games}
\label{app:prelim}
In this section, we give a brief introduction to nonlocal games. A nonlocal game typically involves multiple parties that cannot communicate with each other. This is shown schematically in~\Cref{Fig:nonlocal_game}.
\begin{figure}[htbp!]
    \centering
    \begin{tikzpicture}[node distance=1.7cm,
  player/.style = {draw, circle, minimum width=.9cm, fill=gray!15},
  singlearrow/.style = {single arrow, draw, inner sep=0pt,
    minimum height=.7cm, minimum width=4mm, single arrow head extend=.3pt, rotate=-90}]
  
  \node[player] (P1) at (0,0) {$1$};
  \node[player, right=of P1] (P2) {$2$};        
  \node[player] (Pn) at ($(P2)+(3.4,0)$) {$n$}; 

  \node at ($(P2)!0.5!(Pn)$) {$\cdots\cdots$};

  \node[singlearrow] at ([yshift=1cm]P1) {};
  \node[singlearrow] at ([yshift=1cm]P2) {};
  \node[singlearrow] at ([yshift=1cm]Pn) {};
  \node at ([yshift=1.7cm]P1) {$s_{1}$};
  \node at ([yshift=1.7cm]P2) {$s_{2}$};
  \node at ([yshift=1.7cm]Pn) {$s_{n}$};

  \node[singlearrow] at ([yshift=-1cm]P1) {};
  \node[singlearrow] at ([yshift=-1cm]P2) {};
  \node[singlearrow] at ([yshift=-1cm]Pn) {};
  \node at ([yshift=-1.7cm]P1) {$a_{1}$};
  \node at ([yshift=-1.7cm]P2) {$a_{2}$};
  \node at ([yshift=-1.7cm]Pn) {$a_{n}$};
\end{tikzpicture}
\caption{A nonlocal game with $n$ parties. Each party $i$ receives input $s_i$ and produces output $a_i$.}
\label{Fig:nonlocal_game}
\end{figure}
The formal definition is as follows.
\begin{dfn}
    Let $n\geq2$ be an integer. Let $S_i,A_i$ be finite sets for $i\in[n]$. We define a probabilistic predicate $\mathcal{V}$ which is a map
    \begin{align*}
        \mathcal V: \prod_{i = 1}^n A_i \times \prod_{i=1}^n  S_i \to [0,1].
    \end{align*}
    Furthermore, we define the input distribution $\pi$ which is a probability distribution on $\prod_{i=1}^nS_i$. A \textbf{nonlocal game} is described by the tuple $(\mathcal{V},\pi)$.
\end{dfn}
\noindent Informally speaking, in a nonlocal game, each party $i$ receives input $s_i$ and produces an output $a_i$. All parties have full knowledge of the input distribution $\pi$ and the winning predicate $\mathcal{V}$. The goal is to maximize the winning probability. Parties may agree on a joint strategy in advance based on their knowledge, but once the game starts, communication is forbidden. When playing the game, implementing the strategy realizes a \textbf{behavior}, defined as a conditional probability distribution
$$ p(a_1, a_2, \cdots, a_n \vert s_1, s_2, \cdots, s_n).$$
A behavior is interpreted as the distribution over the different outputs the parties produce, conditioned on their inputs.
By plugging the behavior into the predicate, we can compute the \textbf{winning probability}:
\begin{align*}
    p_\mathrm{win} \coloneqq \sum_{a_i \in A_i, s_i \in S_i} \pi(s_1, s_2,\cdots, s_n) \cdot p(a_1, a_2, \cdots, a_n \vert s_1, s_2, \cdots, s_n) \mathcal V(a_1,a_2, \cdots, a_n \vert s_1, s_2, \cdots, s_n).
\end{align*}

We proceed to define what strategies the parties can use with classical resources. We first introduce \textbf{deterministic strategies}.
\begin{dfn}
    Let $(\mathcal{V},\pi)$ be a nonlocal game. A \textbf{deterministic strategy} is given by functions ${\{f_i\}}_{i=1}^n$, where 
    $$f_i:S_i\rightarrow A_i.$$
    The behavior realized by a deterministic strategy is simply
    \begin{align*}
    p(\mathbf a\vert\mathbf s)=\prod_{i=1}^n\delta_{a_i,f_i(s_i)},
\end{align*}
where $\mathbf a\in\prod_{i=1}^nA_i,\mathbf s\in\prod_{i=1}^nS_i$.
\end{dfn}
\noindent More generally, for a \textbf{classical strategy}, parties can use shared randomness. In this case, a classical strategy can be expressed as a probabilistic mixture of deterministic strategies, and the behavior realized is therefore a convex combination of deterministic behaviors. A bound on the largest winning probability attainable by a classical strategy is known as a \textbf{Bell inequality}.

For a quantum strategy, parties can have access to a shared entangled state and perform local quantum measurements according to the inputs they receive. We therefore have the following definition.
\begin{dfn}
    Let $(\mathcal{V},\pi)$ be a nonlocal game. Let $B_i$ be a Hilbert space for $i\in[n]$. A \textbf{quantum strategy} is a tuple $(\ket{\psi},{\{M_i(s_i)\}}_{i\in [n], s_i \in S_i})$, where 
    $$\ket{\psi}\in\bigotimes_{i=1}^nB_i$$
    is a quantum state and 
    $$M_i(s_i)={\{\Pi_{i, a_i}(s_i)\}}_{a_i\in A_i}$$
    is a projective measurement on $B_i$ for each $s_i$. The behavior realized by a quantum strategy is given by
    \begin{align*}
    p(\mathbf a\vert\mathbf s)=\bigbra{\psi}\bigotimes_{i=1}^n\Pi_{i, a_i}(s_i)\bigket{\psi}.
\end{align*}
\end{dfn}
\noindent Classical behaviors form a subset of quantum behaviors. However, with quantum resources, parties have the potential to achieve a higher winning probability. This phenomenon is called \textbf{Bell inequality violation}.
Several nonlocal games are known for which this occurs~\cite{XOR,eisert1999,pappa2015conflicting_interest}, specific examples being the CHSH game~\cite{bell1964einstein,CHSH} and the magic square game~\cite{arkhipov2012extending}.

\section{Using Port-Based Teleportation in LC Games}
\label{app:teleportation}

\subsection{Quantum teleportation and port-based teleportation}
\label{app:pbt_dfn}

Quantum teleportation is a protocol that transfers an unknown quantum state from one party to another using shared entanglement and classical communication~\cite{PhysRevLett.70.1895}.

Let $A_0$ denote the $d$-dimensional Hilbert space that contains the quantum state to be teleported. Alice and Bob share the maximally entangled state
\begin{equation*}
\ket {\Phi^{+}_{AB}} = \sum_{i=0}^{d-1} \frac{1}{\sqrt d} |ii\rangle  \in A \otimes B,\qquad \Phi^{+}_{AB} = \ket {\Phi^{+}_{AB}} \bra{\Phi^{+}_{AB}} .
\end{equation*}
Given the quantum state to be teleported $\rho \in \mathcal{L}(A_0)$, Alice performs a joint measurement $M^{TP}$
\begin{equation*}
M^{\mathrm{TP}}\coloneqq \{ M_k^{\mathrm{TP}} \}_{k=1}^{d^2}, 
\end{equation*}
where $M_k^{\mathrm{TP}}$ are measurement operators on $A_0
A$.
In the standard quantum teleportation protocol, $M^{\mathrm{TP}}$ is chosen to be a projective measurement onto a maximally entangled orthonormal basis.
Conditioned on outcome $k$ of $M^{\mathrm{TP}}$, the unnormalized post-measurement state on Bob's system is
\begin{equation*}
\tilde{\rho}_k =
\mathrm{Tr}_{A_0 A}
\left[
(M_k^{\mathrm{TP}} \otimes I_B)
\left(
\rho_{A_0} \otimes \Phi^{+}_{AB}
\right)
\right].
\end{equation*}
Alice sends the outcome $k$ to Bob, who then applies a \textbf{correction unitary} $\sigma_k$ on his subsystem. The resulting state is
\begin{equation*}
\sigma_k \left( \frac{\tilde \rho_k}{\mathrm{Tr}(\tilde \rho_k)} \right) \sigma_k^\dagger.
\end{equation*}
Averaging over all outcomes, this protocol implements the quantum channel 
\begin{equation*}
\Lambda(\rho_{A_0}) = \sum_{k=1}^{d^2} \sigma_k 
\mathrm{Tr}_{A_0A}
\left[
(M_k^{\mathrm{TP}} \otimes I_B)
(\rho_{A_0} \otimes \Phi^{+}_{AB})
\right]
\sigma_k^\dagger.
\end{equation*}
We choose measurements and correction unitaries so that $\Lambda$ is the identity channel $\mathcal I_{A_0 \to B}$.

In~\cite{PhysRevLett.101.240501}, a teleportation scheme—referred to as port-based teleportation (PBT)—was introduced, which removes the need for Bob's correction unitaries, but at the cost of requiring unbounded entanglement to achieve vanishing error. Let the state to be teleported $\rho \in \mathcal{L}(A_0)$ be of dimension $d$. In this protocol,\footnote{Note here that we are using the ``standard'' PBT protocol described in~\cite{pirandola2019fundamental}. }
Alice and Bob share a resource state composed of $N$ maximally entangled pairs
\begin{equation*}
\Phi^{+}_{A_i B_i} \in \mathcal{L}(A_i \otimes B_i), 
\quad i=1,\dots,N.
\end{equation*}
The total resource state is
\begin{equation*}
\Phi^{\otimes N}_{AB} := \bigotimes_{i=1}^N \Phi^{+}_{A_i B_i}.
\end{equation*}
Alice performs a joint measurement
$$M^{\mathrm{PBT}}\coloneqq \{ M_k^{\mathrm{PBT}} \}_{k=1}^N$$
on the system $A_0 A_1 \cdots A_N$.
Conditioned on the outcome $k$, the unnormalized post-measurement state on Bob's systems is
\begin{equation*}
\tilde \rho_k = \mathrm{Tr}_{A_0 A_1 \cdots A_N}
\left[
(M_k^{\mathrm{PBT}} \otimes I_{B_1 \cdots B_N})
\left(
\rho_{A_0} \otimes \Phi^{\otimes N}_{AB}
\right)
\right].
\end{equation*}
Alice then sends the classical index $k$ to Bob. The subsystems 
$\{B_j\}_{j=1}^N$ are referred to as \textbf{ports}. Bob selects the port $B_k$ and discards the remaining ports, yielding
\begin{equation*}
\rho_k 
= \mathrm{Tr}_{\bigotimes_{j\neq k} B_{j}} \left [\frac{
\tilde \rho_k
}{
\mathrm{Tr}(\tilde \rho_k)
}\right].
\end{equation*}
This protocol induces a quantum channel from $A_0$ to Bob’s system $B_k$
\begin{equation*}
\Lambda_N : \mathcal{L}(A_0) \to \mathcal{L}(B).
\end{equation*}
where all ports $B_k$ are isomorphic to a system $B$.
This channel be can be written as
\begin{equation*}
\Lambda_N(\rho)
=
\sum_{k=1}^N p_k\rho_k,
\qquad
p_k:=\mathrm{Tr}(\tilde \rho_k).
\end{equation*} 
It is known that there exists a choice of measurement 
$\{M_k^{\mathrm{PBT}}\}$, namely the pretty good measurement~\cite{PhysRevLett.101.240501,HausladenWootters1994PrettyGood}, such that the resulting 
channel $\Lambda_N$ approximates the identity channel. 
More precisely, for any fixed dimension $d$, the entanglement 
fidelity for this protocol satisfies~\cite{christandl2021asymptotic}
\begin{equation}
\label{eq:fidelity_pbt}
F_e(\Lambda_N) 
\ge 1 - \frac{d^2-1}{N}.
\end{equation}


\subsection{Approximating a multipartite channel with port-based teleportation}
\label{app:box_reduction_proof}

We will provide here a multipartite extension of port-based teleportation. We first prove a result for general quantum channels and then use it to prove a channel decomposition for a measurement channel.

Consider a quantum channel $\mathcal N$ with
\(n_{\mathrm{in}}\) input systems and
\(n_{\mathrm{out}}\) output systems. These will respectively correspond to the quantum systems held by the sender and receiver parties in the LC game. Again, when we say ``input party'' or ``output party,'' we mean the party that holds the corresponding input or output quantum system. Note that the input parties and output parties are both, possibly overlapping,\footnote{This could mean that both halves of a Bell pair could be held by the same party. To make our proof concise, we allow for this possibility. } subsets of the parties playing the LC game.
We index the input parties by
\begin{equation*}
V_{\mathrm{in}}
=
\{i_1,\dots,i_{n_{\mathrm{in}}} \},
\end{equation*}
and the output parties by
\begin{equation*}
V_{\mathrm{out}}
=
\{o_1,\dots,o_{n_{\mathrm{out}}} \}.
\end{equation*}
For each input party \(i_a\), let
\(B_{i_a}\)
be a finite-dimensional Hilbert space for the corresponding input system.
The overall input to $\mathcal N$ is 
\begin{equation*}
    B_{\mathrm{in}}
    \coloneqq
    B_{i_1} \otimes \dots \otimes B_{i_{n_{\mathrm{in}}}}.
\end{equation*}
Similarly, for each output system \(o_b\), let
\(B_{o_b}\)
be a finite-dimensional Hilbert space, and define
\begin{equation*}
    B_{\mathrm{out}}
    \coloneqq
    B_{o_1} \otimes \dots \otimes B_{o_{n_{\mathrm{out}}}}.
\end{equation*}
This is the overall output of $\mathcal N$. Here we omit one-dimensional (trivial) input and output systems since they do not play any role in the quantum  strategy.

Let \(\mathcal L(B_{\mathrm{in}})\) and
\(\mathcal L(B_{\mathrm{out}})\) denote the sets of linear operators on
\(B_{\mathrm{in}}\) and \(B_{\mathrm{out}}\), respectively.
The quantum channel $\mathcal{ N }$ is a completely positive trace-preserving (CPTP) map

\begin{equation*}
    \mathcal{ N } :
    \mathcal L(B_{\mathrm{in}})
    \rightarrow
    \mathcal L(B_{\mathrm{out}}).
\end{equation*}
Let $\rho_E \in \mathcal L(E)$ be a quantum  state shared among the input and output parties, where
\begin{equation*}
    E \coloneqq E_{i_1} \otimes \dots \otimes E_{i_{n_{\mathrm{in}}}} \otimes E' \otimes E_{o_1} \otimes \dots \otimes E_{o_{n_{\mathrm{out}}}} = E_\mathrm{in} \otimes E_\mathrm{out}.
\end{equation*}
\(E_{i_a},E',E_{o_b}\) are finite-dimensional Hilbert spaces, and each input party $i_a$ holds subsystem $E_{i_a}$, except $i_{n_\mathrm{in}}$ who holds $E_{i_{n_\mathrm{in}}}\otimes E' $. Here we defined
\begin{align*}
E_\mathrm{in} &\coloneqq E_{i_1} \otimes \dots \otimes E_{i_{n_{\mathrm{in}}} } \otimes E'\\ E_\mathrm{out} &\coloneqq E_{o_1} \otimes \dots \otimes E_{o_{n_{\mathrm{out}}}}
\end{align*}
Each output party $o_b$ holds subsystem $E_{o_b}$.

With the help of $\rho_E$,
the quantum channel $\mathcal N$ can be simulated arbitrarily well using local measurements performed by the input parties,
followed by one round of classical communication from the
input parties to the output parties, and finally local quantum
channels at the output parties.
More explicitly, we consider a quantum channel
\begin{equation*}
    \mathcal E :
    \mathcal L(B_{\mathrm{in}})
    \rightarrow
    \mathcal L(B_{\mathrm{out}})
\end{equation*}
of the form
\begin{equation}
    \mathcal E(\rho)
    =
    \sum_{\substack{x_{i_a}\\ i_a \in V_\mathrm{in}}}
    \left(
        \bigotimes_{o_b\in V_{\mathrm{out}}}
        \mathcal{N}_{o_b}^{\vec x}
    \right)
    \tr_{B_\mathrm{in} E_\mathrm{in}}\left[
    \left(
        \bigotimes_{i_a\in V_{\mathrm{in}}}
        \Lambda_{i_a,x_{i_a}} \otimes I_{E_\mathrm{out}}
    \right)
    (\rho\otimes\rho_E)\right],
    \label{eq:op_single_round}
\end{equation}
where $\{\Lambda_{i_a,x_{i_a}}\}_{x_{i_a}}$ is a POVM on $B_{i_a} \otimes E_{i_a}$, except $i_\mathrm{in}$ for whom it is on $B_{i_{n_\mathrm{in}}} \otimes E_{i_{n_{\mathrm{in}}}} \otimes E'$
and 
\begin{equation*}
    \begin{aligned}
        & \mathcal{N}_{o_b}^{\vec x}: \mathcal{L} (E_{o_b}) \rightarrow \mathcal{L} (B_{o_b})
    \end{aligned}
\end{equation*}
is a quantum channel for all
$
    \vec x
    \coloneqq
    (x_{i_a})_{i_a\in V_{\mathrm{in}}}.
$
Namely, $\Lambda_{i_a,x_{i_a}}$ is the measurement performed by the input parties, the measurement result $\vec x$
is the classical message communicated to the output parties, and
$\mathcal{N}_{o_b}^{\vec x}$ is the quantum channel applied by the output parties depending on the message received $\vec x$. This decomposition will be useful to prove the reductions we claim below. We will find $\mathcal E \approx \mathcal N$ for appropriate choices of $\rho_E, \Lambda_{i_a}, \mathcal N_{o_b}^{\vec x}$.
We illustrate this decomposition for the case of three input and three output parties in~\Cref{fig:tripartite-U-one-round}.
\begin{figure}[htbp!]
  \centering
  \begin{tikzpicture}[
    scale=.9,
    bigU/.style={
      draw,
      minimum width=2.15cm,
      minimum height=4.15cm,
      fill=gray!15,
      rounded corners=2pt,
      align=center,
      font=\normalsize
    },
    localbox/.style={
      draw,
      minimum width=1.05cm,
      minimum height=1cm,
      fill=gray!15,
      rounded corners=2pt,
      align=center,
      font=\scriptsize,
      inner sep=1pt
    }
  ]
    \pgfmathsetlengthmacro{\ystep}{1.8cm}


    \node[bigU] (U) at (0,0) {$\mathcal N$};

    \foreach \i/\y in {1/1*\ystep,2/0*\ystep,3/-1*\ystep}{
      \coordinate (Uin\i)  at ($(U.west)+(0,\y)$);
      \coordinate (Uout\i) at ($(U.east)+(0,\y)$);

      \draw ([xshift=-1.00cm]Uin\i) -- (Uin\i);

      \draw (Uout\i) -- ([xshift=.55cm]Uout\i);
    }
    \draw[decorate,decoration={brace,amplitude=5pt,mirror}]
      ($(Uin1)+(-1.35cm,0)$) -- ($(Uin3)+(-1.35cm,0)$)
      node[midway,xshift=-15pt] {$\ket{\psi}$};


    \node at (2.80cm,0) {$\approx$};


    \begin{scope}[xshift=7.15cm]

      \node[localbox] (W10) at (0,1*\ystep)  {$\Lambda_{i_1}$};
      \node[localbox] (W20) at (0,0*\ystep)  {$\Lambda_{i_2}$};
      \node[localbox] (W30) at (0,-1*\ystep) {$\Lambda_{i_3}$};

      \node[localbox] (W11) at (4.35cm,1*\ystep)  {$\mathcal N_{o_1}^{\vec x}$};
      \node[localbox] (W21) at (4.35cm,0*\ystep)  {$\mathcal N_{o_2}^{\vec x}$};
      \node[localbox] (W31) at (4.35cm,-1*\ystep) {$\mathcal N_{o_3}^{\vec x}$};

      \foreach \boxname in {W10,W20,W30,W11,W21,W31}{
        \foreach \port/\t in {1/.125,2/.375,3/.625,4/.875}{
          \coordinate (\boxname in\port) at ($(\boxname.north west)!\t!(\boxname.south west)$);
          \coordinate (\boxname out\port) at ($(\boxname.north east)!\t!(\boxname.south east)$);
        }
        \coordinate (\boxname outC) at (\boxname.east);
      }

      \coordinate (psi10) at ($(W10in1)+(-1.695cm,0)$);
      \coordinate (psi20) at ($(W20in1)+(-1.695cm,0)$);
      \coordinate (psi30) at ($(W30in1)+(-1.695cm,0)$);
      \coordinate (Ein10) at ($(W10in4)+(-.825cm,0)$);
      \coordinate (Ein20) at ($(W20in4)+(-.675cm,0)$);
      \coordinate (Ein30) at ($(W30in4)+(-.525cm,0)$);
      \draw (psi10) -- (W10in1);
      \draw (psi20) -- (W20in1);
      \draw (psi30) -- (W30in1);

      \draw[decorate,decoration={brace,amplitude=5pt,mirror}]
        ($(psi10)+(-.35cm,0)$) -- ($(psi30)+(-.35cm,0)$)
        node[midway,xshift=-15pt] {$\ket{\psi}$};

      \coordinate (Ein1collect) at (-2.12,-2.92);
      \coordinate (Ein2collect) at (-2.12,-3.25);
      \coordinate (Ein3collect) at (-2.12,-3.58);
      \coordinate (Ein1drop) at (Ein10 |- Ein1collect);
      \coordinate (Ein2drop) at (Ein20 |- Ein2collect);
      \coordinate (Ein3drop) at (Ein30 |- Ein3collect);
      \draw[rounded corners=3pt] (W10in4) -- (Ein10) -- (Ein1drop) -- (Ein1collect);
      \draw[rounded corners=3pt] (W20in4) -- (Ein20) -- (Ein2drop) -- (Ein2collect);
      \draw[rounded corners=3pt] (W30in4) -- (Ein30) -- (Ein3drop) -- (Ein3collect);
      \draw[decorate,decoration={brace,amplitude=4pt,mirror}]
        (-2.22,-2.87) -- (-2.22,-3.63)
        node[midway,xshift=-23pt,font=\scriptsize] {$E_\mathrm{in}$};

      \coordinate (Eout11) at (W11in4);
      \coordinate (Eout21) at (W21in4);
      \coordinate (Eout31) at (W31in4);
      \coordinate (Eout1turn) at ($(Eout11)+(-.475cm,0)$);
      \coordinate (Eout2turn) at ($(Eout21)+(-.345cm,0)$);
      \coordinate (Eout3turn) at ($(Eout31)+(-.215cm,0)$);
      \coordinate (Eout1collect) at (2.80,-2.92);
      \coordinate (Eout2collect) at (2.80,-3.25);
      \coordinate (Eout3collect) at (2.80,-3.58);
      \coordinate (Eout1drop) at (Eout1turn |- Eout1collect);
      \coordinate (Eout2drop) at (Eout2turn |- Eout2collect);
      \coordinate (Eout3drop) at (Eout3turn |- Eout3collect);
      \draw[rounded corners=3pt] (Eout11) -- (Eout1turn) -- (Eout1drop) -- (Eout1collect);
      \draw[rounded corners=3pt] (Eout21) -- (Eout2turn) -- (Eout2drop) -- (Eout2collect);
      \draw[rounded corners=3pt] (Eout31) -- (Eout3turn) -- (Eout3drop) -- (Eout3collect);
      \draw[decorate,decoration={brace,amplitude=4pt,mirror}]
        (2.70,-2.87) -- (2.70,-3.63)
        node[midway,xshift=-23pt,font=\scriptsize] {$E_\mathrm{out}$};

      \draw (W11outC) -- ([xshift=.80cm]W11outC);
      \draw (W21outC) -- ([xshift=.80cm]W21outC);
      \draw (W31outC) -- ([xshift=.80cm]W31outC);

      \draw[wire/classical] (W10out1) -- (W11in1);
      \draw[wire/classical] (W20out2) -- (W21in2);
      \draw[wire/classical] (W30out3) -- (W31in3);

      \draw[wire/classical] (W10out2) to[out=0,in=180] (W21in1);
      \draw[wire/classical] (W10out3) to[out=0,in=180] (W31in1);

      \draw[wire/classical] (W20out1) to[out=0,in=180] (W11in2);
      \draw[wire/classical] (W20out3) to[out=0,in=180] (W31in2);

      \draw[wire/classical] (W30out1) to[out=0,in=180] (W11in3);
      \draw[wire/classical] (W30out2) to[out=0,in=180] (W21in3);

      \node at (2.175,-2.35) {\scriptsize $\vec x$};
    \end{scope}

  \end{tikzpicture}
  \caption{A tripartite quantum channel $\mathcal N$ can be approximated arbitrarily well with an ancillary entangled state $\rho_E =\rho_{E_\mathrm{in} E_\mathrm{out}}$, local measurements $\Lambda_{i_a}$, a single round of classical communication of measurement results $\vec x$ between every pair of parties, and post-processing channels $\mathcal N_{o_b}^{\vec x}$ that depend on $\vec x$.}
  \label{fig:tripartite-U-one-round}
\end{figure}

Let $d_\mathrm{in} = \dim (B_{\mathrm{in}})$ be the total input dimension of $\mathcal{N}$, and $d_\mathrm{out} = \dim (B_{\mathrm{out}})$ be the total output dimension. We can prove the following. 
\begin{lem}\label{lemma:box_reduction}
    Let $n_\mathrm{in}, n_\mathrm{out} \in \mathbb{Z}^+$ and $n_\mathrm{in} \geq 2$. For any quantum channel $\mathcal{N}$ with $n_\mathrm{in}$ input systems and $n_\mathrm{out}$ output systems as defined above, and any $\varepsilon >0$, there exists a quantum state $\rho_E$ on a Hilbert space of dimension  $d_\varepsilon$, where  
\begin{equation*}
\begin{aligned}
&d_\varepsilon
\le d_{\mathrm{in}}^{2 + 2\sum_{a=1}^{n_\mathrm{in}-1} \left(\prod_{r=1}^{a} n_r \right) } d_{\mathrm{out}}^{2 \prod_{a=1}^{n_{\mathrm{in}}-1} n_a},\\
    &n_a = 2(n_{\mathrm{in}}-1)d_{\mathrm{in}}^{2\prod_{r=1}^{a-1} n_r } m_\varepsilon,
\end{aligned}
\end{equation*}
the empty product being equal to 1 and $m_\varepsilon \coloneqq \lceil 1/\varepsilon\rceil$,
as well as POVMs $\{\Lambda_{i_a,x_{i_a}}\}_{x_{i_a}}$ for all $i_a\in V_\mathrm{in}$, and quantum channels $\{\mathcal{N}_{o_b}^{\vec{x}}\}_{o_b \in V_\mathrm{out},\vec x = (x_{i_a})_{i_a \in V_\mathrm{in}}}$, such that
    \begin{equation*}
       \left\| \mathcal{E} - \mathcal{N} \right\|_\diamond < \varepsilon,
    \end{equation*}
$\mathcal E$ being defined by
\begin{equation*}
        \mathcal{E}(\rho) = \sum_{\substack{x_{i_a} \\ i_a \in V_\mathrm{in}}} 
    \left(
    \bigotimes_{o_b\in V_\mathrm{out}} \mathcal{N}_{o_b}^{\vec x}
    \right)
    \tr_{B_\mathrm{in} E_\mathrm{in}}\left[
    \left(
    \bigotimes_{i_a\in V_\mathrm{in}}  \Lambda_{i_a,x_{i_a}} \otimes I_{E_\mathrm{out}}
    \right) (\rho \otimes \rho_E)\right].
    \end{equation*}
\end{lem}

\begin{proof}
    Let 
    \begin{equation*}
        |\Phi^+_d\rangle = \frac{1}{\sqrt{d}} \sum_{i=0}^{d-1} |ii\rangle 
    \end{equation*}
    be a maximally entangled state with local dimension $d$. 
    Now we give the explicit ancillary entangled state $\rho_E$ used to approximate $\mathcal{N}$. First, each input party $i_a \in V_{\mathrm{in}}$ ($a \ne 1$) shares a maximally entangled state $|\Phi_{d_{i_a}}^+\rangle$ with input party $i_1$, where $d_{i_a} \coloneqq \dim (B_{i_a})$. This leads to an overall state
    \begin{equation*}
\rho_1^{\mathrm{TP}}
=
\bigotimes_{a=2}^{n_{\mathrm{in}}}
\left(
|\Phi_{d_{i_a}}^+\rangle
\langle\Phi_{d_{i_a}}^+|
\right)_{i_1,i_a} .
    \end{equation*}
    Meanwhile, for $1\le a \le n_{\mathrm{in}}-1$, 
    input party $i_a$ and $i_{a+1}$ share $n_{a}$ (given in the lemma statement) pairs of maximally entangled state $|\Phi_{d_a}^+\rangle$, denoted by
    \begin{equation*}
        \rho^{i_{a}\rightarrow i_{a+1}} = \left( |\Phi_{d_a}^+\rangle \langle\Phi_{d_a}^+|^{\otimes  n_a } \right)_{i_a,i_{a+1}},
    \end{equation*}
    where the $d_a$ is defined by  
    \begin{equation*}
    \begin{aligned}
        d_a \coloneqq d_{\mathrm{in}}^{\prod_{r=1}^{a-1} n_r}.
    \end{aligned}
    \end{equation*}
    This defines the quantum state on the ancillary system $\bigotimes_{a=1}^{n_\mathrm{in}} E_{i_a}$:
    $$\rho^{\mathrm{TP}}_1
    \otimes
    \left(\bigotimes_{a=1}^{n_{\mathrm{in}}-1}
    \rho^{i_a\rightarrow i_{a+1}}
    \right) \in \mathcal L(\bigotimes_{a=1}^{n_\mathrm{in}} E_{i_a})$$
    Each output party $o_b \in V_{\mathrm{out}}$ shares $\prod_{a=1}^{n_{\mathrm{in}}-1} n_a$ pairs of 
    maximally entangled states $|\Phi_{d_{o_b}}^+\rangle$ with input party $i_{n_{\mathrm{in}}}$, that is
    \begin{equation*}
\rho^{\mathrm{TP}}_2
=
\bigotimes_{b=1}^{n_{\mathrm{out}}}
\left(
|\Phi_{d_{o_b}}^+\rangle
\langle\Phi_{d_{o_b}}^+|^{\otimes \left(\prod_{a=1}^{n_{\mathrm{in}}-1} n_a\right)}
\right)_{i_{n_{\mathrm{in}}},o_b} \in\mathcal L(E'\otimes E_\mathrm{out}),
    \end{equation*}
    where $d_{o_b} \coloneqq \dim (B_{o_b})$.
This is the quantum state on the ancillary system $E' \otimes E_\mathrm{out}$.
The overall ancillary entangled resource state is given by
\begin{equation*}
    \rho_E
    =
    \rho^{\mathrm{TP}}_1
    \otimes
    \left(\bigotimes_{a=1}^{n_{\mathrm{in}}-1}
    \rho^{i_a\rightarrow i_{a+1}}
    \right)
    \otimes
    \rho^{\mathrm{TP}}_2
    .
\end{equation*}
    
    Let $\ket{\psi} \in B_{\mathrm{in}}$ be the input state of $\mathcal{N}$. The protocol proceeds as follows:
\begin{enumerate}[label=Step~\arabic*.]
    \item {[}Teleportation to first party, without applying correction unitary{]}\\
    For each $a\neq 1$, input party $i_a$ performs a standard teleportation measurement $M^{\mathrm{TP}}_{i_a\rightarrow i_{1}}$ on subsystem $B_{i_a}$ of $\vert \psi\rangle$ and his share of $\rho^{\mathrm{in}}$.
    Let $t_{i_a}$ denote the outcome of $M^{\mathrm{TP}}_{i_a\rightarrow i_{1}}$. 
    After all such measurements, party $i_1$ holds a state
    \begin{equation*}
        \ket{\psi_1} 
        = \left(I_{B_{i_1}} \otimes \left(\bigotimes_{a=2}^{n_{\mathrm{in}}} \sigma_{i_a}(t_{i_a})\right)^\dagger \right)\ket{\psi},
    \end{equation*}
    where $\sigma_{i_a}(t_{i_a})$ are the $t_{i_a}$-dependent correction unitaries on the corresponding subsystem. We denote the Hilbert space of
    $\ket{\psi_1}$ by $\mathcal H_1$.

    \item {[}PBT from party $i_{1}$ to $i_2$, without discarding other ports{]}\\
    Party $i_1$ performs port-based teleportation measurement $M^{\mathrm{PBT}}_{i_1 \rightarrow i_2}$ with $n_1$ ports on system $\mathcal H_1$ and his part of the entangled state $\rho^{i_1 \rightarrow i_2}$. 
    The measurement result of $M^{\mathrm{PBT}}_{i_1 \rightarrow i_2}$ is $m_1 \in [n_1]$.  
Party $i_2$ now holds $n_1$ ports. Denote by
$\mathcal H_2$ the Hilbert space associated with these
ports and denote the joint state of all the ports by 
$|\psi_2\rangle \in \mathcal H_2$.
In particular, the reduced state of the $m_1$-th port is
\begin{equation*}
    \rho_2^{(m_1)}
    =
    \operatorname{tr}_{\overline{m_1}}
    \bigl(
        |\psi_2\rangle\langle\psi_2|
    \bigr),
\end{equation*}
where $\operatorname{tr}_{\overline{m_1}}$
denotes the partial trace over all ports except the
$m_1$-th port.
    By~\Cref{eq:fidelity_pbt}, the protocol implements the identity channel as $n_1 \to \infty$, that is 
    \begin{equation*}
        \lim_{\substack{n_1 \to \infty}}  \bra{\psi_1}\rho_{2}^{(m_1)} \ket{\psi_1}  =1.
    \end{equation*}

    \item {[}Apply correction unitary{]} \\
    Party $i_2$ applies the correction unitary 
$ \sigma_{i_2}(t_{i_2})$ for all $n_1$ ports and the resulting state is denoted as $|\psi^c_2\rangle$ ($c$ stands for `corrected'). 
The selected port $m_1$ now contains a state 
that approximates the state 
    \begin{equation*}
        \left(I_{B_{i_1}}  \otimes I_{B_{i_2}} \otimes \left( \bigotimes_{a=3}^{n_{\mathrm{in}}} \sigma_{i_a}(t_{i_a})\right)^\dagger  \right) \ket{\psi}.
    \end{equation*}
    Thus, compared to $\ket{\psi_1}$, $\ket{\psi^c_2}$ has one less correction unitary needed. 

    \item {[}Iterated port-based teleportation to apply all correction unitaries{]}\\
When $n_{\mathrm{in}} \ge 3$, for $2\le a \le n_{\mathrm{in}}-1$, party $i_{a}$ performs a port-based teleportation (PBT) measurement
$M_{i_{a}\to i_{a+1}}^{\mathrm{PBT}}$
with $n_a$ ports on the state $\ket{\psi_{a}^c}$ and his share of the entangled state
$\rho^{i_{a}\to i_{a+1}}$, obtaining an outcome
$m_a\in[n_a]$.
Consequently, party $i_{a+1}$ holds
$\prod_{r=1}^{a} n_r$
ports, whose joint Hilbert space is denoted by
$\mathcal H_{a+1}$.
Let
$\ket{\psi_{a+1}}\in\mathcal H_{a+1}$
be the state stored in these ports.
We further define the multi-index 
\begin{equation*}
    \vec l^{(a)}
\coloneqq
(l_1,\ldots,l_a)
\in
[n_1]\times\cdots\times[n_a],
\end{equation*}
which labels the ports held by party $i_{a+1}$.
    Party $i_{a+1}$ applies the correction unitary 
        $\sigma_{i_{a+1}}(t_{i_{a+1}})$
    for all his ports, and obtains the state $\ket{\psi^c_{a+1}}$ in these ports.
    The reduced state in the port $\vec{m}^{(a)} \coloneqq (m_1,\ldots,m_a)$ approximates the state
    \begin{equation*}
        \left(I_{B_{i_1}} \otimes\cdots \otimes I_{B_{i_{a+1}}} \otimes \vec{\sigma}_{a+2}^\dagger(\vec{t}_{a+2}) \right)
        \ket{\psi}
    \end{equation*}
    as all $n_a \to \infty$, where 
    \begin{equation*}
        \vec{\sigma}_{a+2}(\vec{t}_{a+2})=
        \begin{cases}
            \sigma_{i_{a+2}}(t_{i_{a+2}})\otimes \dots \otimes \sigma_{i_{n_{\mathrm{in}}}}(t_{i_{n_\mathrm{in}}}) ,& a\le n_{\mathrm{in}}-2 \\
            1, &a=n_{\mathrm{in}}-1
        \end{cases}.
    \end{equation*}
    At each step, one additional correction operation $\sigma_{i_{a+1}}(t_{i_{a+1}})$ is applied. 
    

    \item {[}Applying the quantum channel{]}\\
    Party $i_{n_{\mathrm{in}}}$ now holds a state $\ket{\psi^c_{n_{\mathrm{in}}}}$ in a system consisting of 
    $\prod_{a=1}^{n_{\mathrm{in}}-1} n_a$ ports. On the port labeled by $\vec{m}^{(n_{\mathrm{in}}-1)}\coloneqq  (m_1,m_2,\dots,m_{n_{\mathrm{in}}-1})$, the reduced state is 
\begin{equation*}
    \rho_{\vec{m}^{(n_{\mathrm{in}}-1)}}
    =
    \operatorname{tr}_{\overline{\vec{m}^{(n_{\mathrm{in}}-1)}}}
    \left(
        |\psi^c_{n_{\mathrm{in}}}\rangle
        \langle\psi^c_{n_{\mathrm{in}}}|
    \right),
\end{equation*}
where $\operatorname{tr}_{\overline{\vec{m}^{(n_{\mathrm{in}}-1)}}}$ denotes the partial trace over all ports except the
port labeled by $\vec{m}^{(n_{\mathrm{in}}-1)}$.
    This state approximates the input state $\ket{\psi}$, i.e.,
    \begin{equation*}
        \lim_{n_1,\dots,n_{n_{\mathrm{in}}-1} \to \infty} 
        \langle \psi | \rho_{\vec{m}^{(n_{\mathrm{in}}-1)}}  | \psi \rangle = 1.
    \end{equation*}
    Party $i_{n_{\mathrm{in}}}$ applies the quantum channel $\mathcal{N}$ independently to each port. 

    \item {[}Teleportation to output parties, without applying correction unitaries{]}\\
    For every port $\vec{l}^{(n_{\mathrm{in}}-1)}$ of party $i_{n_{\mathrm{in}}}$ and output party $o_b\in V_{\mathrm{out}}$, party $i_{n_{\mathrm{in}}}$ performs standard teleportation measurements $M_{i_{n_{\mathrm{in}}} \rightarrow o_b}^{\mathrm{TP}}$ on the subsystem corresponding to $B_{o_b}$ in this port and his part of one copy of $|\Phi_{d_{o_b}}^+\rangle_{i_{n_{\mathrm{in}}},o_b}$ in $\rho^{\mathrm{out}}$.
    The measurement results are $t_{o_b}(\vec{l}^{(n_{\mathrm{in}}-1)})$. 
    On the half of the entangled state corresponding to $\vec{m}^{(n_{\mathrm{in}}-1)}$ for every output party, the overall state is 
    \begin{equation*}
        \left( \bigotimes_{o_b\in V_{\mathrm{out}}} \sigma_{o_b}(t_{o_b}(\vec{m}^{(n_{\mathrm{in}}-1)})) \right)^\dagger \mathcal N(\rho_{\vec{m}^{(n_{\mathrm{in}}-1)}}) \left( \bigotimes_{o_b\in V_{\mathrm{out}}} \sigma_{o_b}(t_{o_b}(\vec{m}^{(n_{\mathrm{in}}-1)})) \right)
    \end{equation*}
    with $\sigma_{o_b}(t_{o_b}(\vec{m}^{(n_{\mathrm{in}}-1)}))$ being the  
    correction unitary corresponding to measurement result $t_{o_b}(\vec{m}^{(n_{\mathrm{in}}-1)})$. 

    \item {[}Classical communication and correction{]}\\
    Finally, input party $i_{n_{\mathrm{in}}}$ sends $t_{o_b}(\vec{l}^{(n_{\mathrm{in}}-1)})$ to output party $o_b$ for every $\vec l^{(n_\mathrm{in}-1)}$. 
    Every other input party $i_a$ sends $m_a$ to all output parties. After communication, each output party $o_b$ discards all ports except the one labeled by $\vec{m}^{(n_{\mathrm{in}}-1)}$, and applies $\sigma_{o_b}(t_{o_b}(\vec{m}^{(n_{\mathrm{in}}-1)}))$ to recover the overall state $\mathcal{N}(\rho_{\vec{m}^{(n_{\mathrm{in}}-1)}})$. 
\end{enumerate}
In the limit of large 
numbers of ports $n_a$, the protocol approximates the ideal quantum channel $\mathcal N$ arbitrarily well.
Importantly, \emph{all measurements and correction operations from Step 1 to Step 6 are local operations on different subsystems performed by the input parties that can be implemented in parallel}. Indeed, letting $\mathcal{E}(\cdot)$ be the overall quantum channel describing the above protocol, we see that it has the required form. $\Lambda_{i_a,x_{i_a}}$ is the effective POVM induced by
\begin{itemize}
    \item the teleportation measurements from other input parties to input party $i_1$,
    \item the $n_{\mathrm{in}}-1$ port-based teleportation measurements corresponding to a PBT protocol from party $i_{a}$ to party $i_{a+1}$,
    \item the correction unitaries applied by each party, 
    \item the quantum channel $\mathcal N$ performed by party $i_{n_\mathrm{in}}$ on every port,
    \item and the teleportation measurements from party $i_{n_\mathrm{in}}$ to the output parties.
\end{itemize}
 $\vec x$ corresponds to the measurement outcomes $\vec m^{(n_\mathrm{in}-1)}$ and $t_{o_b}(\vec l^{(n_\mathrm{in}-1)})$. The quantum channels $\mathcal N_{o_b}^{\vec x}$ are the correction unitaries and the tracing out of irrelevant ports implemented by the output parties that depend on the measurement outcomes $\vec x$. 

Let $\Lambda^{i_a\to i_{a+1}}$ be the channel induced by the PBT from party $i_a$ to $i_{a+1}$. 
The input state of channels $\Lambda_{a}^{i_{a} \rightarrow i_{a+1}}$ is in the Hilbert space $\mathcal{H}_{a}$ of dimension 
\begin{equation*}
    d_a = d_{\mathrm{in}}^{\prod_{r=1}^{a-1} n_r }
\end{equation*}
by construction. Let $\varepsilon > 0$. According to \Cref{eq:fidelity_pbt}, let 
$$\varepsilon'\coloneqq \frac{\varepsilon}{2(n_{\mathrm{in}}-1)},\, n_a =  2(n_{\mathrm{in}}-1)d_a^2 m_{\varepsilon}$$
such that the PBT channel $\Lambda_{a}^{i_{a}\rightarrow i_{a+1}}$ with $n_a$ ports satisfies
\begin{equation*}
    F_e(\Lambda_{a}^{i_{a}\rightarrow i_{a+1}}) 
\ge 1 - \frac{d_{a}^2-1}{n_a} > 1-\varepsilon'.
\end{equation*}
Hence, by the relation between entanglement fidelity and diamond norm for the port-based teleportation quantum channel~\cite{pirandola2019fundamental,christandl2021asymptotic}, we have
\begin{equation*}
            \left \| \Lambda_{a}^{i_{a}\rightarrow i_{a+1}} - \mathcal{I}^{i_{a}\rightarrow i_{a+1}}\right\|_\diamond = 2\left(1-F_e(\Lambda_{a}^{i_a\rightarrow i_{a+1}}) \right) < 2\varepsilon'.
\end{equation*}
By the triangle inequality and contractivity of the diamond norm under composition~\cite{watrous2018theory},
for the channel $\Lambda_{1}^{i_1\rightarrow i_2} \circ \Lambda_{2}^{i_2\rightarrow i_3} \circ \cdots \circ \Lambda_{n_{\mathrm{in}}-1}^{i_{n_{\mathrm{in}}-1}\rightarrow i_{n_{\mathrm{in}}}}$, we have
\begin{equation*}
                    \left \| \Lambda_{1}^{i_1\rightarrow i_2} \circ \Lambda_{2}^{i_2\rightarrow i_3} \circ \cdots \circ \Lambda_{n_{\mathrm{in}}-1}^{i_{n_{\mathrm{in}}-1}\rightarrow i_{n_{\mathrm{in}}}} - \mathcal I \right\|_\diamond   
                   \le \sum_{a=1}^{n_{\mathrm{in}}-1} \left \| \Lambda_{a}^{i_{a}\rightarrow i_{a+1}} - \mathcal{I}^{i_{a}\rightarrow i_{a+1}} \right\|_\diamond 
                    < 2(n_{\mathrm{in}}-1)\varepsilon'\coloneqq \varepsilon. 
\end{equation*}
In the first inequality we use an appropriate telescoping argument.

As the remaining channels in $\mathcal E$ are not approximate, the overall diamond norm satisfies
\begin{equation*}
    \| \mathcal E -\mathcal N \|_\diamond \leq \left \| \Lambda_{1}^{i_1\rightarrow i_2} \circ \Lambda_{2}^{i_2\rightarrow i_3} \circ \cdots \circ \Lambda_{n_{\mathrm{in}}-1}^{i_{n_{\mathrm{in}}-1}\rightarrow i_{n_{\mathrm{in}}}} - \mathcal I \right\|_\diamond   < \varepsilon.
\end{equation*}
Moreover, in Step~1 and Step~6,  the ancillary entangled state $\rho^\mathrm{TP}_1$ is of dimension 
\begin{equation*}
     \left(\frac{d_{\mathrm{in}}}{d_{B_{i_1}}} \right)^2 \le d_{\mathrm{in}}^2,
\end{equation*}
where $d_{B_{i_1}} \ge 1$ is the dimension of subsystem $B_{i_1}$, and   $\rho^\mathrm{TP}_2$ of dimension $d_{\mathrm{out}}^{2 \prod_{a=1}^{n_{\mathrm{in}}-1} n_a}$, respectively. Thus the dimension of all ancillary systems is 
\begin{equation*}
d_\varepsilon
\le d_{\mathrm{in}}^{2 + 2\sum_{a=1}^{n_\mathrm{in}-1} \left(\prod_{r=1}^{a} n_r \right) } d_{\mathrm{out}}^{2 \prod_{a=1}^{n_{\mathrm{in}}-1} n_a}
\end{equation*}
This completes the proof.
    
\end{proof}

We can prove~\Cref{lem:measurement_channel_reduction} as a special case of~\Cref{lemma:box_reduction} where the quantum channel is a measurement channel $\mathcal M$. We restate the lemma here for convenience. 

\noindent \textbf{Lemma~\ref{lem:measurement_channel_reduction}.}\ %
\textit{
Let $n_\mathrm{in} \in \mathbb{Z}^+$ and $n_\mathrm{in} \geq 2$. For any joint measurement channel $\mathcal{M}$ on $n_\mathrm{in}$ input parties as defined above, and any $\varepsilon >0$, there exists a quantum state $\rho_E$ defined on a Hilbert space of dimension $d_\varepsilon$, 
    with
\begin{equation*}
\begin{aligned}
&d_\varepsilon
\le d_{\mathrm{in}}^{2 + 2\sum_{a=1}^{n_\mathrm{in}-1} \left(\prod_{r=1}^{a} n_r \right) },\\
    &n_a = 2(n_{\mathrm{in}}-1)d_{\mathrm{in}}^{2\prod_{r=1}^{a-1} n_r }m_{\varepsilon}, 
\end{aligned}
\end{equation*}
where the empty product is equal to $1$ and $m_\varepsilon \coloneqq \lceil 1/\varepsilon\rceil$, as well as
POVMs $\{M_{i_a,x_{i_a}}\}_{x_{i_a}}$, and a function $f$, 
such that 
\begin{equation*}
    \| \mathcal{E}_M - \mathcal{M}\|_\diamond <\varepsilon, 
\end{equation*}
$\mathcal E_M$ being defined by
\begin{equation*}
        \mathcal{E}_M(\rho) \coloneqq  \sum_{\substack{x_{i_a} \\ i_a \in V_\mathrm{in}}}   
    \tr \left[\left(
    \bigotimes_{i_a\in V_\mathrm{in}}   M_{i_a,x_{i_a}}
    \right) (\rho \otimes \rho_E)\right]\vert  f(\vec x) \rangle\langle f(\vec x)\vert
\end{equation*}
and $\vec x \coloneqq (x_{i_a})_{i_a\in V_{\mathrm{in}}}.$
}
\begin{proof}
The proof follows the same construction as in Lemma~\ref{lemma:box_reduction}. The only difference is that a measurement channel has no quantum output systems. Therefore, the standard teleportation step to the output parties is unnecessary, and the ancillary state $\rho^{\mathrm{out}}$ can be omitted.
The dimension of the ancillary entangled system is reduced to 
\begin{equation*} 
    d_\varepsilon \le d_{\mathrm{in}}^{2 + 2\sum_{a=1}^{n_\mathrm{in}-1} \left(\prod_{r=1}^{a} n_r \right) }. 
\end{equation*}
The function $f$ takes as input the PBT port indices and the measurement outcomes of $\mathcal M$ implemented by $i_{n_\mathrm{in}}$ on all candidate ports, and returns the measurement outcome associated with the selected port.
\end{proof}


\subsection{Proof of Theorem \ref{thm:siso_to_LC}}
\label{app:siso_to_LC_proof}
For completeness, we restate the theorem before giving its proof. 

\noindent \textbf{Theorem~\ref{thm:siso_to_LC}.}\ %
\textit{
    Let $\tau \geq 2$ and let $(\mathcal{V},\pi,\ell)$ be a SISO, $\tau$-step LC game with $n$ parties that is path-consistent.
    Let $G$ be the connectivity graph. 
    Let $\mathcal Q$ be the set of quantum behaviors of the SISO game and $\mathcal Q_\mathrm{sim}$ that of the induced simple LC game. Then,
    \begin{align*}
        \mathcal Q_\mathrm{sim} \subseteq \mathcal Q \subseteq \overline{\mathcal Q_\mathrm{sim}}.
    \end{align*}
}

\begin{proof}



First, consider a quantum strategy for the $\tau$-step SISO game $(\mathcal{V},\pi,\ell)$. Let $\varepsilon >0$.
    By iterative application of Lemma~\ref{lemma:box_reduction}, each intermediate interaction tensor, which are just isometries from the input quantum systems to the output quantum systems since the classical legs are trivial, can be replaced\footnote{For intermediate interaction tensors with only one input party, we can directly move it to the interaction tensor that connects to it from the past. } with local operations and direct communication with error $\frac {\varepsilon}{N_\text{int} \prod_{i=1}^n \vert S_i\vert}$, where $N_\text{int}$ is the number of intermediate interaction tensors. The POVMs in~\Cref{eq:op_single_round} can be absorbed into the preceding interaction tensors and the quantum channels in~\Cref{eq:op_single_round} can be absorbed into the succeeding interaction tensors via Naimark or Stinespring dilation.
    After all intermediate interaction tensors are eliminated this way, we are left with a quantum channel $\mathcal E_\textrm{int}$, which involves only local operations of each party at time $t=0$, a single round of communication, and local operations of each party after communication, such that 
    \begin{equation*}
        \left \| \mathcal E_\mathrm{int}-\mathcal{N} \right\|_\diamond < N_\text{int} \frac{\varepsilon}{N_\text{int}\prod_{i=1}^n \vert S_i\vert} =  \frac{\varepsilon}{\prod_{i=1}^n \vert S_i\vert},
    \end{equation*}
    where $\mathcal{N}$ is the overall quantum channel induced by all the intermediate interaction tensors and the inequality follows from the triangle inequality for the diamond norm and using an appropriate telescoping series.
    
    We claim the initial interaction tensors $W_i^{(0)}$, the quantum channel $\mathcal E_\mathrm{int}$ and final measurements $M_i$ together constitute a quantum strategy for the simple LC game $(\mathcal{V},\pi,G)$. This is because we can absorb the POVMs in $\mathcal E_\mathrm{int}$ into the initial interaction tensors and the quantum channel into the measurements via Naimark dilation. Also, the communication involved in this strategy can only be from party $i$ to party $j$ when $(i,j)$ is an edge of the connectivity graph $G$ because the latency function $\ell$ satisfies the triangle inequality. That is, if party $i$ can communicate to party $j$ via an intermediate party $k$ within $\tau$ time steps,
    \begin{align*}
        \ell(i,j) \leq \ell(i,k) +\ell(k,j) \leq \tau.
    \end{align*}
    Hence, it is a valid quantum strategy for the simple LC game.
    
    Let $\rho_{\mathcal{N}}(\mathbf{s})$ and $\rho_{\mathcal E_\text{int}}(\mathbf{s})$ be the states before the measurements of the two quantum strategies for the SISO game involving $\mathcal{N}$ and $\mathcal E_\mathrm{int}$ respectively. Let $p_\mathcal{N}, p_{\mathcal E_\mathrm{int}}$ be the realized behaviors. By the monotonicity of the trace distance under measurements and the definition of diamond norm, we have 
    \begin{equation*}
        \begin{aligned}
            \sum_{\mathbf{a},\mathbf s} |p_\mathcal{N}(\mathbf a|\mathbf s) -p_{\mathcal E_\mathrm{int}}(\mathbf a|\mathbf s) |   &\le  \sum_{\mathbf{s}}  \|\rho_{\mathcal{N}}(\mathbf{s}) - \rho_{\mathcal E_\mathrm{int}}(\mathbf{s}) \|_1  \\
            &=\prod_{i=1}^n \vert S_i\vert\left   \|   \left(   
        \mathcal{E}_\mathrm{int} - \mathcal{N} \right) \circ \bigotimes_{i=1}^n \mathcal W_i^{(0)} \left(|\mathbf{s}\rangle \langle \mathbf{s}| \otimes \vert \psi\rangle\langle \psi \vert\right)
         \right  \|_1    \\
        &\le     \prod_{i=1}^n \vert S_i\vert\left\|    
        \mathcal{E}_\mathrm{int}  - \mathcal{N} 
        \right\|_\diamond\\
        & < \varepsilon,
        \end{aligned}
    \end{equation*}
    where $\mathcal W_i^{(0)}$ is the quantum channel associated with the initial interaction tensor $W_i^{(0)}$. Note here we are using the convention for multi-step LC games by separating the input dependence from the isometry $W_i$.
    
    Conversely, any quantum strategy for the simple LC game $(\mathcal{V},\pi,G)$
    can be trivially implemented as a quantum strategy for the SISO game by only using legs between the initial $t=0$ interaction tensors and the final $t=\tau$ interaction tensors. 
    
    Hence,
    \begin{align*}
        \mathcal Q_\mathrm{sim} \subseteq \mathcal Q \subseteq \overline{\mathcal Q_\mathrm{sim}}.
    \end{align*}
\end{proof}

\section{Extended XOR Games}
\label{app:extendedXOR}




In this Appendix, we examine whether the separation of~\Cref{thm:q_neq_f} between forwarding behaviors and quantum behaviors observed for the extended CHSH game can occur for other LC games. In particular, can we obtain such a separation by extending other nonlocal games?

For any $n, m \in \mathbb{Z^+}$, an extended XOR game is defined by a Boolean function $f: S_2 \times S_3\to \{0, 1\}$ with $S_2 = \{0, \dots, n-1\}$ and $S_3 = \{0,\dots, m-1\}$ and a uniform prior $\pi$ on the inputs. The first party receives a fixed input $S_1=\{0\}$, while the two others receive an input from $S_2$ and $S_3$ respectively. They all produce an output in $A_1, A_2, A_3 =\{0,1\}$. The probabilistic predicate is given by
\begin{equation*}
    \mathcal{V}(a_1,a_2,a_3 \vert s_1,s_2,s_3) = 
    \begin{cases}
        1 & a_1=a_2 ,\, f(s_2, s_3) = a_1\oplus a_3\\
        0 & \text{otherwise.}
    \end{cases}
\end{equation*}
We consider such games on the connectivity graph $G$ given by
\begin{equation}
\label{eq:graph}
    v_1 \leftrightarrows v_2 \quad v_3,
\end{equation}

To find a separation between forwarding and quantum strategies, the upper bounds on $\omega_f$ are obtained using the NPA hierarchy on the transformed nonlocal game as given in \Cref{prp:forwarding_is_q}. The lower bounds on $\omega_q$ are obtained for each game with the see-saw algorithm described in~\Cref{subsec:seesaw}, with a two-dimensional quantum channel between $v_1$ and $v_2$, and a four-dimensional quantum system shared between $v_2$ and $v_3$, while $v_1$ initially does not hold any quantum system. Since each semidefinite program converged with a duality gap well below $1\mathrm{e}{-6}$, any separation of the order of $1\mathrm{e}{-3}$ translates to a meaningful separation between the different type of strategies.

By considering $50$ extended XOR games for Boolean functions selected uniformly at random with $n=m=3$, we find a separation for $32$ games.\footnote{All the code and numerical solutions are accessible on the online repository \url{https://github.com/pierrepocreau/Latency-constrained-games}.} This confirms that the separation we observed in the extended CHSH game applies naturally to a wide range of games. We also observe that for these 50 random extended XOR games, the upper bound on $\omega_f$ obtained via the NPA are all reached by classical solutions, just as in the case of the extended CHSH game in~\Cref{subsec:forwarding_strategies}. This may be because XOR games with a quantum-classical separation and a unique optimal quantum strategy (up to isometry)~\cite{miller2013Optimal} may be sufficient for this to hold via a proof similar to that of~\Cref{prop:sep_forwarding_quantum}. 

These results are summarized in Table~\ref{table:results}, where the ID is obtained by taking the truth table of a Boolean function $f: S_2 \times S_3 \to \{0, 1\}$ defining an extended XOR game and interpreting it as a binary representation. More explicitly, writing $S_2 = \{0, \dots, n-1\}$ and $S_3 = \{0, \dots, m-1\}$, a Boolean function is defined by a length $n \cdot  m$ truth table, such that for any $(i,j) \in S_2 \times S_3$, $f(i,j)$ is evaluated to the bit at position $m \cdot  i + j$. For example, for $n=m=3$ the truth table $[0, 0, 0, 0, 0, 0, 1, 0, 0]$ corresponds to the ID $4$. 

Each value is taken as the best among that obtained for $20$ random initial strategies and the semidefinite programs are solved with MOSEK~\cite{mosek}. We also include the classical values $\omega_c$ and $\omega_c(\emptyset)$ of each game, with $\omega_c(\emptyset)$ being the classical value obtained on the empty graph
\begin{equation*}
    v_1 \quad v_2 \quad v_3.
\end{equation*}
$\omega_c, \omega_c(\emptyset)$ are obtained by enumerating all possible classical strategies. We can observe in Table~\ref{table:results} that the best classical strategies on the graph $G$ always match the upper-bound $\omega_f$, and thus that the best forwarding strategies for such games are essentially classical. 

\captionsetup{justification=centering}
\setlength{\LTcapwidth}{\textwidth}
\begin{longtable}{|c|c|c|c|c|c|}
\caption{Classical values $\omega_c(\emptyset)$ and $\omega_c$ as well as lower and upper bounds on $\omega_q$ and $\omega_f$ for 50 random extended XOR games, with $n=m=3$. ``Gap'' refers to the difference between the lower bound on $\omega_q$ and the upper bound on $\omega_f$. Small negative gaps are numerical artifacts and should be interpreted as zero within solver precision. }

\label{table:results} \\
\hline
\endfirsthead

\hline
ID & $\omega_c(\emptyset)$ & $\omega_c$ & Upper $\omega_f$ & Lower $\omega_q$ & Gap \\ 
\hline
\endhead

\hline
\multicolumn{6}{r}{Continued on the next page} \\
\endfoot

\hline
\endlastfoot

ID & $\omega_c(\emptyset)$ & $\omega_c$ &  Upper $\omega_f$ & Lower $\omega_q$ & Gap \\ \hline
11 & 7.778e-01 & 7.778e-01 & 7.778e-01 & 8.333e-01 & 5.556e-02 \\
20 & 7.778e-01 & 7.778e-01 & 7.778e-01 & 8.333e-01 & 5.556e-02 \\
40 & 8.889e-01 & 8.889e-01 & 8.889e-01 & 8.928e-01 & 3.948e-03 \\
44 & 7.778e-01 & 7.778e-01 & 7.778e-01 & 8.333e-01 & 5.556e-02 \\
51 & 7.778e-01 & 7.778e-01 & 7.778e-01 & 8.333e-01 & 5.556e-02 \\
54 & 8.889e-01 & 8.889e-01 & 8.889e-01 & 8.889e-01 & 8.164e-08 \\
58 & 8.889e-01 & 8.889e-01 & 8.889e-01 & 8.889e-01 & 1.260e-06 \\
69 & 7.778e-01 & 7.778e-01 & 7.778e-01 & 7.778e-01 & 3.902e-10 \\
72 & 8.889e-01 & 8.889e-01 & 8.889e-01 & 8.928e-01 & 3.947e-03 \\
78 & 1.000e+00 & 1.000e+00 & 1.000e+00 & 1.000e+00 & -1.928e-09 \\
93 & 7.778e-01 & 7.778e-01 & 7.778e-01 & 7.778e-01 & -9.849e-10 \\
107 & 7.778e-01 & 7.778e-01 & 7.778e-01 & 7.778e-01 & -1.689e-10 \\
122 & 7.778e-01 & 7.778e-01 & 7.778e-01 & 8.333e-01 & 5.556e-02 \\
135 & 8.889e-01 & 8.889e-01 & 8.889e-01 & 8.928e-01 & 3.948e-03 \\
145 & 8.889e-01 & 8.889e-01 & 8.889e-01 & 8.928e-01 & 3.948e-03 \\
155 & 8.889e-01 & 8.889e-01 & 8.889e-01 & 8.928e-01 & 3.948e-03 \\
178 & 8.889e-01 & 8.889e-01 & 8.889e-01 & 8.889e-01 & -9.142e-10 \\
181 & 8.889e-01 & 8.889e-01 & 8.889e-01 & 8.928e-01 & 3.948e-03 \\
208 & 7.778e-01 & 7.778e-01 & 7.778e-01 & 8.333e-01 & 5.556e-02 \\
221 & 8.889e-01 & 8.889e-01 & 8.889e-01 & 8.889e-01 & 2.156e-09 \\
237 & 8.889e-01 & 8.889e-01 & 8.889e-01 & 8.889e-01 & -2.465e-09 \\
244 & 8.889e-01 & 8.889e-01 & 8.889e-01 & 8.928e-01 & 3.948e-03 \\
249 & 7.778e-01 & 7.778e-01 & 7.778e-01 & 8.333e-01 & 5.556e-02 \\
253 & 7.778e-01 & 7.778e-01 & 7.778e-01 & 8.333e-01 & 5.556e-02 \\
293 & 8.889e-01 & 8.889e-01 & 8.889e-01 & 8.889e-01 & -8.412e-10 \\
295 & 8.889e-01 & 8.889e-01 & 8.889e-01 & 8.889e-01 & 2.214e-06 \\
298 & 8.889e-01 & 8.889e-01 & 8.889e-01 & 8.928e-01 & 3.948e-03 \\
309 & 7.778e-01 & 7.778e-01 & 7.778e-01 & 8.333e-01 & 5.556e-02 \\
315 & 8.889e-01 & 8.889e-01 & 8.889e-01 & 8.889e-01 & -1.182e-08 \\
330 & 8.889e-01 & 8.889e-01 & 8.889e-01 & 8.928e-01 & 3.948e-03 \\
350 & 7.778e-01 & 7.778e-01 & 7.778e-01 & 8.333e-01 & 5.556e-02 \\
355 & 8.889e-01 & 8.889e-01 & 8.889e-01 & 8.928e-01 & 3.948e-03 \\
368 & 7.778e-01 & 7.778e-01 & 7.778e-01 & 8.333e-01 & 5.556e-02 \\
378 & 8.889e-01 & 8.889e-01 & 8.889e-01 & 8.928e-01 & 3.948e-03 \\
383 & 8.889e-01 & 8.889e-01 & 8.889e-01 & 8.889e-01 & -3.923e-10 \\
387 & 7.778e-01 & 7.778e-01 & 7.778e-01 & 8.333e-01 & 5.556e-02 \\
393 & 1.000e+00 & 1.000e+00 & 1.000e+00 & 1.000e+00 & 1.935e-09 \\
402 & 8.889e-01 & 8.889e-01 & 8.889e-01 & 8.928e-01 & 3.948e-03 \\
405 & 8.889e-01 & 8.889e-01 & 8.889e-01 & 8.928e-01 & 3.948e-03 \\
418 & 7.778e-01 & 7.778e-01 & 7.778e-01 & 8.333e-01 & 5.556e-02 \\
422 & 8.889e-01 & 8.889e-01 & 8.889e-01 & 8.889e-01 & 4.037e-09 \\
425 & 8.889e-01 & 8.889e-01 & 8.889e-01 & 8.928e-01 & 3.948e-03 \\
431 & 7.778e-01 & 7.778e-01 & 7.778e-01 & 8.333e-01 & 5.556e-02 \\
438 & 1.000e+00 & 1.000e+00 & 1.000e+00 & 1.000e+00 & -3.621e-10 \\
463 & 8.889e-01 & 8.889e-01 & 8.889e-01 & 8.928e-01 & 3.948e-03 \\
476 & 8.889e-01 & 8.889e-01 & 8.889e-01 & 8.928e-01 & 3.948e-03 \\
477 & 7.778e-01 & 7.778e-01 & 7.778e-01 & 8.333e-01 & 5.556e-02 \\
492 & 7.778e-01 & 7.778e-01 & 7.778e-01 & 7.778e-01 & 3.168e-09 \\
501 & 7.778e-01 & 7.778e-01 & 7.778e-01 & 8.333e-01 & 5.556e-02 \\
510 & 8.889e-01 & 8.889e-01 & 8.889e-01 & 8.889e-01 & -3.970e-10 \\
\end{longtable}

\section{Three-Party XOR Games with Aggregated Parties}
\label{appendix:3_party_xor}
For the three-party XOR games $\{\mathrm{XOR}_i\}_{i=1}^4$ in \Cref{subsec:randomXORgames},
\begin{equation*}
   \begin{aligned}
       &\mathrm{XOR}_1: \hat{\beta}_{s_1,s_2,s_3} = (-1)^{s_1s_2s_3}, \\
       &\mathrm{XOR}_2: \hat{\beta}_{s_1,s_2,s_3} = (-1)^{s_1s_2}, \\
       &\mathrm{XOR}_3: \hat{\beta}_{s_1,s_2,s_3} = (-1)^{s_1s_3}, \\
       &\mathrm{XOR}_4: \hat{\beta}_{s_1,s_2,s_3} = (-1)^{s_1(s_2+s_3)}. 
   \end{aligned}
\end{equation*}
the quantum upper bound $\omega_q^\mathrm{upper}$ is sufficiently tight, in the sense that we can find quantum strategies that achieve a winning probability that matches this upper bound within numerical precision. 
For $\mathrm{XOR}_1$ and $\mathrm{XOR}_4$, the strategy in which every party always outputs $0$ respectively achieves the quantum values $\frac{7}{8}$ and $\frac{3}{4}$ trivially. For $\mathrm{XOR}_2$, let $v_3$ always output $0$. Then the winning probability of $\mathrm{XOR}_2$ becomes
\begin{equation*}
    \begin{aligned}
        p_{\mathrm{win}} =& \frac{1}{8} \sum_{s_1,s_2,s_3}\sum_{a_1,a_2} \beta_{s_1,s_2,s_3}(a_1 \oplus a_2)p(a_1,a_2|s_1,s_2)p(0|s_3) \\
         = & \frac{1}{4} \sum_{s_1,s_2}\sum_{a_1,a_2} \beta_{s_1,s_2}^\prime(a_1 \oplus a_2)p(a_1,a_2|s_1,s_2),
    \end{aligned}
\end{equation*}
with 
\begin{equation*}
    \beta_{s_1,s_2}^\prime(a_1\oplus a_2) \coloneqq \max[(-1)^{s_1s_2} \cdot (-1)^{a_1 \oplus a_2}, 0].
\end{equation*}
This is nothing but the winning probability of a CHSH game between the parties $v_1$ and $v_2$. Hence there exists a quantum strategy that achieves the value $\cos^2 \frac{\pi}{8}$ for $\mathrm{XOR}_2$, which matches the upper bound $0.8536$ computed via the third level of the NPA hierarchy. 
Similarly, for $\mathrm{XOR}_3$, let $v_2$ always output $0$. The three-party XOR game then reduces to a CHSH game between $v_1$ and $v_3$. Then they can win this game with probability $\cos^2 \frac{\pi}{8}$.

When we aggregate $v_1,v_2$, the winning probability is given by 
\begin{equation*}
\label{eq:agg_win_condition}
    p_{\mathrm{win}}^\mathrm{agg} = \frac{1}{8} \sum_{s_1,s_2,s_3}\sum_{a_2,a_3} \beta_{s_1,s_2,s_3}(a_2 \oplus a_3)p(a_2,a_3|(s_1,s_2),s_3).
\end{equation*}
For the game $\mathrm{XOR}_1$, its probabilistic predicate is 
\begin{equation*}
    \beta_{s_1,s_2,s_3}( a_2 \oplus a_3) = \max[(-1)^{s_1s_2s_3} \cdot (-1)^{a_2 \oplus a_3}, 0]
\end{equation*}
We can let $\tilde{s}_2 \coloneqq s_1\cdot s_2$
and define a binary-input, binary-output 2-party nonlocal game, whose probabilistic predicate is 
\begin{equation*}
    \beta_{\tilde{s}_2,s_3}^\prime( a_2 \oplus a_3) = \max[(-1)^{\tilde{s}_2s_3} \cdot (-1)^{a_2 \oplus a_3}, 0]
\end{equation*}
and the four inputs $\{(\tilde{0},0),(\tilde{0},1),(\tilde{1},0),(\tilde{1},1)\}$ are chosen with probability $\{\frac{3}{8},\frac{3}{8},\frac{1}{8},\frac{1}{8}\}$, respectively. Its winning probability is 
\begin{equation*}
    p_{\mathrm{win}}^\prime =  \frac{1}{8}\sum_{\tilde{s}_2,s_3}\sum_{a_2,a_3} (2+(-1)^{\tilde{s}_2}) \beta_{\tilde{s}_2,s_3}^\prime( a_2 \oplus a_3) p(a_2,a_3|\tilde{s}_2,s_3)
\end{equation*}
For the game $\mathrm{XOR}_1$ after aggregation, the parties can achieve the behavior 
$$p(a_2,a_3|(0,0),s_3) = p(a_2,a_3|(0,1),s_3) = p(a_2,a_3|(1,0),s_3) = p(a_2,a_3|\tilde{0},s_3)$$
by producing the same output for $(s_1,s_2) \in \{(0,0),(0,1),(1,0)\}$, so they win with the same probability as $p_{\mathrm{win}}^\prime$. Meanwhile, in the binary-input, binary-output 2-party nonlocal game, the parties can realize the behavior 
$$p(a_2,a_3|\tilde{0},s_3) = \frac{1}{3}\left[ p(a_2,a_3|(0,0),s_3) + p(a_2,a_3|(0,1),s_3) + p(a_2,a_3|(1,0),s_3) \right]$$
to win with the same probability as $p_{\mathrm{win}}$ in $\mathrm{XOR}_1$, 
by randomly choosing a strategy (using shared randomness) from three candidates when receives input $\tilde{s}_2 =\tilde{0}$. That is, the three-party XOR game $\mathrm{XOR}_1$ is equivalent to the binary-input, binary-output 2-party nonlocal game above, when $v_1$ and $v_2$ can communicate. For the nonlocal game, if we map the outputs $\{0,1\}$ respectively to outputs $\{+1,-1\}$ of some binary observables $K_{i}^{s_i}$ respectively for each party, we can express the winning probability $p_{\mathrm{win}}^\prime$ of a quantum strategy in operator form as
\begin{equation*}
    p_{\mathrm{win}}^\prime = \frac{1}{2} + \frac{1}{16}\langle B_\text{CHSH}^{\alpha=3}\rangle,
\end{equation*}
where $\langle \cdot \rangle$ denotes the expectation with some quantum state $\vert \psi \rangle$, with 
\begin{equation*}
     B_\text{CHSH}^{\alpha=3}  = 3  K_{2}^{\tilde{0}}K_3^0 + 3  K_{2}^{\tilde{0}}K_3^1 + K_{2}^{\tilde{1}}K_3^0 - K_{2}^{\tilde{1}}K_3^1.
\end{equation*}
Here, $B_\text{CHSH}^{\alpha=3}$ is a generalized CHSH expression 
whose quantum value is $2\sqrt{1+3^2},$ derived analytically in~\cite{PhysRevLett.108.100402}. Consequently, the quantum value $\omega_q^\mathrm{agg}$ of the aggregated three-party XOR game $\mathrm{XOR}_1$ is $\frac{4+\sqrt{10}}{8}$. 

Similarly, for $\mathrm{XOR}_3$, 
we can take the map $\{(0,0),(0,1)\} \mapsto \tilde{0}$, $\{(1,0),(1,1)\} \mapsto \tilde{1}$ on $(s_1,s_2)$, which transforms $\mathrm{XOR}_3$ into a CHSH game between $v_1$ and $v_3$ when $v_1$ and $v_2$ are aggregated. So the quantum value is $\cos^2 \frac{\pi}{8}$.

For $\mathrm{XOR}_4$, the winning condition is 
\begin{equation*}
    a_2 \oplus a_3 = s_1(s_2\oplus s_3) 
\end{equation*}
by \Cref{eq:agg_win_condition}. Let $a_2' = a_2 \oplus s_1s_2$, we can write the winning condition as 
\begin{equation*}
    a_2' \oplus a_3 =  s_1(s_2\oplus s_3) \oplus s_1s_2 = s_1s_3.
\end{equation*}
This is precisely the winning condition of the CHSH game, with inputs $s_1$ and $s_3$ and outputs $a_2'$ and $a_3$, respectively. So the quantum value is $\cos^2 \frac{\pi}{8}$.


Finally, for $\mathrm{XOR}_2$, let $v_3$ always output $0$, then the game reduces to a CHSH game between $v_1$ and $v_2$ as discussed in~\Cref{subsec:randomXORgames}. Since $v_1$ and $v_2$ can win with certainty if they can communicate, so the quantum value $\omega_q^\mathrm{agg}$ of the aggregated $\mathrm{XOR}_2$ game is $1$.

\end{document}